\newcommand{\myomit}[1]{}
\newcommand{\myor}{\mathtt{or}}
\newcommand{\myorA}{\mathtt{or_A}}
\newcommand{\myorE}{\mathtt{or_E}}
\newcommand{\BEQ}{::=}
\newcommand{\BOR}{\;\mid\;}
\newcommand{\FV}{\mathrm{FV}}
\newcommand{\type}{:}
\newcommand{\den}[2]{#1[\![ #2 ]\! ]   }
\newcommand{\sden}[1]{\den{}{#1}}
\newcommand{\B}{\mathbb{B}}
\newcommand{\unit}{\mathtt{Unit}}
\newcommand{\fst}{\mathtt{fst}}
\newcommand{\snd}{\mathtt{snd}}
\newcommand{\BT}{\mathrm{BTypes}}
\newcommand{\tuple}[1]{\langle#1\rangle}
\newcommand{\myif}{\mathtt{if}\;}
\newcommand{\mythen}{\;\mathtt{then}\;}
\newcommand{\myelse}{\;\mathtt{else}\;}
\newcommand{\bool}{\mathtt{Bool}}
\newcommand{\true}{\mathtt{t}\!\mathtt{t}}
\newcommand{\false}{\mathtt{f}\!\mathtt{f}}
\newcommand{\Val}{\mathrm{Val}}
\newcommand{\val}{\mathrm{val}}
\newcommand{\Op}{\mathrm{Op}}
\newcommand{\SOp}{\mathrm{Op}_{\mathrm{s}}}
\newcommand{\eqdef}{=_{\scriptsize \mbox{def}}}
\newcommand{\POPLomit}[1]{ #1}
\newcommand{\argmax}{\mathrm{argmax}}
\newcommand{\E}{\mathcal{E}}
\newcommand{\R}{\mathrm{R}}
\newcommand{\Str}{\mathrm{Str}}
\newcommand{\Out}{\mathrm{Out}}
\newcommand{\Rew}{\mathrm{Rew}}
\newcommand{\reward}{\mathtt{Rew}}
\newcommand{\give}{\mathtt{reward}}
\newcommand{\st}{\mathrm{st}}
\renewcommand{\S}{\mathrm{S}}
\newcommand{{\T}}{\mathrm{T}}
\newcommand{\M}{\mathrm{M}}
\newcommand{\K}{\mathrm{K}}
\newcommand{\op}{op}
\newcommand{\Ob}{\mathrm{Ob}}
\newcommand{\Ex}{\mathbf{E}}
\newcommand{\un}[1]{\underline{#1}}
\newcommand{\sem}{\sden{\mbox{-}}}
\renewcommand{\max}[1]{ {\mathrm{max}}_{#1} }
\renewcommand{\Pr}{\mathcal{D}_\mathrm{f}}
\newcommand{\Ax}{\mathrm{Ax}}
\newcommand{\Msem}{\mathcal{M}}
\newcommand{\Ssem}{\mathcal{S}}
\newcommand{\CF}{\mathrm{CF}}
\newcommand{\mylet}{\mathrm{let}\,}
\newcommand{\mybe}{\,\mathrm{be}\,}
\newcommand{\myin}{\,\mathrm{in}\,}
\newcommand{\myand}{\,\mathrm{and}\,}
\newcommand{\norm}[1]{||#1||}
\newcommand{\VDis}{\mathrm{VDis}}
\newcommand{\Con}{\mathrm{Con}}
  \newcommand{\W}{\mathrm{W}}
\newcommand{\vsupp}{\mathrm{vsupp}}
\newcommand{\El}{\mathtt{E}}
\newcommand{\Ab}{\mathtt{A}}
\newcommand{\id}{\mathrm{id}}
\newcommand{\supp}{\mathrm{supp}}
\newcommand{\myk}{\mathrm{k}}
\newcommand{\myK}{\mathrm{K}}
\newcommand{\myF}{\mathrm{F}}
\newcommand{\order}{\mathrm{o}}
\newcommand{\cutproof}[1]{#1}
\newcommand{\ropen}[1]{[#1)} 
\newcommand{\lopen}[1]{(#1]} 
\title{Smart Choices and the Selection Monad}
\address{Google Research}
\email{abadi@google.com, plotkin@google.com}
\author[M.~Abadi]{Mart\'{\i}n Abadi}[]
\author[G.~Plotkin]{Gordon Plotkin}[]
\begin{document}

\begin{abstract}
Describing systems in terms of choices and their resulting costs and
rewards offers the promise of freeing algorithm designers and
programmers from specifying how those choices should be made;
in implementations, the choices can be realized by optimization techniques and,
increasingly, by machine-learning methods.  We study this approach from a
programming-language perspective. We define two small languages  that
support decision-making abstractions: one with choices and rewards, and the other
 additionally with probabilities. We give both operational and denotational
semantics.

In the case of the second language we consider three denotational semantics,
with varying degrees of correlation between possible program values and expected rewards.
The operational semantics combine the usual semantics of
standard constructs with optimization over spaces of possible execution strategies.
The denotational semantics, which are compositional,
rely
on the selection monad, to handle choice, augmented with an auxiliary
monad to handle other effects, such as rewards or probability.

We establish
adequacy theorems that the two semantics coincide in all cases. We also prove full
abstraction at base   types, with varying notions of observation in the probabilistic case corresponding to the various degrees of correlation. We present axioms for choice combined with rewards and probability, establishing completeness at base   types for the case of rewards without probability.
\end{abstract}

\maketitle

 \tableofcontents

\section{Introduction}

Models and techniques for decision-making, such as Markov Decision
Processes (MDPs) and Reinforcement Learning (RL), enable the description of
systems in terms of choices and of the resulting costs and rewards.
For example,
an agent that plays a board game may
be defined by its choices in moving pieces and by how many points
these yield in the game.  An implementation of such a system may aim
to make the choices following a strategy that results in attractive
costs and rewards, perhaps the best ones. For this purpose it may rely
on classic optimization techniques or, increasingly, on forms of
machine-learning (ML). Deep RL has been particularly
prominent in the last decade, but contextual bandits and ordinary supervised learning can
also be useful.

In a programming context,
several languages and libraries support choices, rewards, costs, and related notions in a general way
(not specific to any
application, such as a particular board game).
McCarthy's \texttt{amb} operator~\cite{jmc:mtc} may be seen as
an early example of a construct for making choices.
More recent work includes many libraries for RL
(e.g.,~\cite{rlax2020github}),
languages for planning
such as DTGolog~\cite{Boutilier00a} and some descendants (e.g.,~\cite{rddl}) of the
Planning Domain Definition Language~\cite{pddl},
a ``credit-assignment'' compiler for learning to search built on the Vowpal-Rabbit learning library~\cite{Chang16},
and Dyna~\cite{Dyna-mapl}, a programming language for machine-learning applications based on MDPs.
It also includes SmartChoices~\cite{smartchoices}, an ``approach to making
machine-learning (ML) a first class citizen in programming languages'',
one of the main inspirations for our work.
SmartChoices and several other recent industry projects in this space (such as Spiral~\cite{spiral})
extend mainstream programming languages and systems with the ability to make data-driven decisions by
coding in terms of choices (or predictions) and feedback (in
other words, perceived costs or rewards), and thus aim to have widespread impact on programming practice.

The use of decision-making abstractions has the potential to
free algorithm designers and programmers from taking care of many
details. For example, in an ordinary programming system, a programmer that implements quicksort should
consider how to pick pivot elements and when to fall back to a
simpler sorting algorithm for short inputs. Heuristic solutions to
such questions abound, but they are not always optimal, and they require coding and sometimes
maintenance when the characteristics of the input data or the
implementation platform change.  In contrast,
SmartChoices enables the programmer to code in terms of choices and costs---or, equivalently, rewards, which
we define as the opposite of costs---, and
to let the implementation of decision-making take care of the
details~\cite{smartchoices}.  As another example, consider the
program in Figure~\ref{smart-ex} that does binary search in a sorted array.
\begin{figure}[h]

\begin{verbatim}
 let binsearch(x : Int, a : Array[Int], l : Int, r : Int) =
  if l > r then None  //the special value None represents failure
           else choose m:[l,r]  in  //choose an integer in [l,r]
                    if a[m] = x then m
                    else cost(1);  //pay to recurse
                      if a[m] < x then binsearch(x, a, m+1, r)
                      else binsearch(x, a, l, m-1)
\end{verbatim}
\caption{Smart binary search}%
\label{smart-ex}
\end{figure}
This pseudocode is a simplified version of the one in~\cite[Section
  4.2]{smartchoices}, which also includes a way of recording observations
of the context of choices (in this example, \texttt{x}, \texttt{a[l]}, and \texttt{a[r]})
that facilitate machine-learning.  Here, a
choice determines the index \texttt{m} where the array is split.  Behind the scenes, a
clever implementation can take into account the distribution of the
data in order to decide exactly how to select \texttt{m}.
For example, if \texttt{x} is half way between \texttt{a[l]} and \texttt{a[r]} but the distribution of the values
in the array favors smaller values, then the selected \texttt{m} may be closer to \texttt{r} than to \texttt{l}.
In order to
inform the implementation, the programmer calls \texttt{cost}: each call
to \texttt{cost} adds to the total cost of an execution, for the notion
of cost that the programmer would wish to minimize. In this example,
the total cost is the number of recursive calls. In other examples,
the total cost could correspond, for instance, to memory requirements
or to some application-specific metric such as the number of points in a game.

In this paper, which is a full version of~\cite{AP21}, we study decision-making abstractions from a
programming-language perspective. We define two small languages that
support such abstractions,
one with choices and rewards, and the other one additionally with probabilities.
In the spirit of SmartChoices (and in
contrast with DTGolog and Dyna, for instance), the languages are  mostly
mainstream: only the decision-making abstractions are special.  We
give them both operational and denotational semantics.
In the case of the language with probabilities we provide three denotational semantics, modeling
 varying degrees of correlation between possible program values and expected rewards.

Their operational semantics combine the usual semantics of standard
constructs with optimization over possible strategies (thinking of programs as providing one-person games).
Despite the global character of optimization, our results include a tractable, more local formulation of their
operational semantics (Theorems~\ref{op-comp1} and~\ref{op-comp2}).
Their denotational semantics are based on
the selection
monad~\cite{escardo10,escardo11,escardo2011sequential,escardo12,escardo15,hedges15,escardo17,bolt18},
which we explain below.

We establish that operational and denotational semantics coincide, proving adequacy results for both languages
(Theorems~\ref{sel-ad1} and~\ref{sel-ad2}).
We also investigate questions of full abstraction (at base types)  and program equivalences.
Our full abstraction results (particularly Theorems~\ref{theorem:equivalences} and~\ref{fullabT}, and Corollary~\ref{concrete-case}) provide further evidence of the match between denotational and operational semantics.
We prove
 full abstraction results at base   types for each of our denotational semantics, in each case with respect to appropriate notions of observation.
Program equivalences can justify program transformations,
and we develop proof systems for them.
For example, one of our axioms concerns the commutation of choices and rewards.
In particular, in the case of the language for rewards we establish
(Theorem~\ref{theorem:equivalences}) the soundness and completeness of our proof system
with respect to concepts of
observational equivalence and semantic equivalence (at base   types).
In the case of  the language with probabilities, finding such completeness results is an open problem. However, we show that our proof systems are complete with respect to  proving effect-freeness. For the language without probabilities this holds in all circumstances (Corollary~\ref{purity1}); for the language with probability it holds under reasonable assumptions (Theorem~\ref{genpurity2}).

A brief, informal discussion of the semantics of \texttt{binsearch} may
provide some intuition on the two semantics and on the role of the
selection monad.
\begin{itemize}
\item If we are given the sequence of values picked by the
\texttt{choice} construct in an execution of \texttt{binsearch}, a standard
operational semantics straightforwardly allows us to construct the rest of
the execution. We call this semantics the \emph{ordinary operational semantics}.
For each such sequence of values, the ordinary operational semantics implies a resulting total cost, and thus
a resulting total reward.
We define the \emph{selection operational semantics}
by requiring that the sequence of values be the one that
maximizes this total reward.

Although they are rather elementary, these operational semantics are not always
a convenient basis for reasoning, because (as usual for operational semantics) they are not compositional,
and in addition the selection operational semantics
is defined in terms of sequences of choices and accumulated rewards in multiple executions.
On the other hand, the chosen values are simply plain integers.

\item In contrast, in the denotational semantics, we look at
  each choice of \texttt{binsearch} as being made locally, without implicit reference to the rest of the execution or other executions,
  by a higher-order function of type $(\texttt{Int} \rightarrow {\R}) \rightarrow
  \texttt{Int}$
  (where $\texttt{Int}$  is a finite set of machine integers),
whose expected argument
is a reward function $f$ that maps each possible value of the choice to the corresponding reward of type~$\R$ of the program. We may view $f$ as a reward continuation.
One possible such higher-order function
is
the
function ${\argmax}$ that picks a value for the argument $x$
for $f$ yielding the largest reward $f(x)$.
(There are different versions of $\argmax$, in particular with different ways of breaking ties, but informally one often identifies them all.)

The type $(\texttt{Int} \rightarrow \R) \rightarrow \texttt{Int}$
of this example
is a simple instance of the selection monad, $\S(X) = (X \rightarrow \R) \rightarrow X$, where $X$ is any type,
and ${\argmax}$ is an example of a selection function.
More generally,
we use
$\S(X) = (X \rightarrow \R) \rightarrow {\T}(X)$, where
$\T$ is another, auxiliary, monad, which can be used to model other computational effects, for example, as we do here,
rewards and probabilities. For our language with rewards, we employ the writer monad $\T(X) = \R \times X$. For our language with rewards and probabilities we employ three auxiliary monads modeling the various correlations between final values and rewards. Of these, the simplest is $\T(X) = \Pr(\R \times X)$, the combination  of the finite probability distribution monad with the writer monad.

 The monadic approach
leads to a denotational semantics that is entirely compositional,
and therefore
facilitates proofs of program equivalences of the kind mentioned above.
The denotational semantics may be viewed as
an implementation by translation to a language in which there are no
primitives for decision-making, and instead one may program with selection functions.
\end{itemize}

\noindent
Sections~\ref{selection} and~\ref{basic} concern supporting theory for our two decision-making languages.
 In Section~\ref{selection}, we review the selection monad, with and without an auxiliary monad, and investigate its algebraic operations. We  show how algebraic operations for the selection monad  with an auxiliary monad can be obtained from algebraic operations of the auxiliary monad (Equation~\ref{algalgops}); we give a general notion of selection  operations (Equation~\ref{selopdef}) and characterize them in terms of generic effects for the  selection monad $\S(X)$; and we investigate the equations obeyed by binary selection operations (Theorems~\ref{genax} and~\ref{distributes}).
 In Section~\ref{basic}, we present a general language with algebraic operations, give a general adequacy theorem (Theorem~\ref{basic-ad}),  and briefly discuss a calculus for program equivalences. This section is an adaptation of prior work (see~\cite{PP01}). While useful for our project, it is not specific to it.

In Section~\ref{first}, we define and study our first language with decision-making abstractions; it is
a simply typed, higher-order $\lambda$-calculus, extended
with a binary choice operation $-\,\myor\,-$ and a construct for adding rewards.  Full abstraction for this language is defined in terms of observing both final values and the corresponding rewards obtained.
Theorem~\ref{vfullab} shows that this notion does not change if we observe only the final value; in contrast Corollary~\ref{notrab} shows that it does change if we observe only the final reward: in that case we cannot distinguish programs with different final values but the same optimal final reward.

In Section~\ref{second}, we proceed to our second language, which adds
probabilistic choice  to the first. Regarding full abstraction,
Theorem~\ref{obweak2} (an analogue of Theorem~\ref{vfullab}) shows that this notion does not change from that
associated to our third semantics for probability and rewards if we observe only the distribution of final values. Probabilistic choices
are not subject to optimization, but,  combined with binary choice,
they enable us to imitate the choice capabilities of MDPs.
Unlike MDPs, the language does not support infinite
computations. We conjecture they can be treated via a metric approach to semantics; at any rate, there is no difficulty in adding a primitive recursion operator to the language without changing the selection monads, permitting MDP runs of arbitrary prescribed lengths.

In sum, we regard
the main contributions of this paper as being (1): the connection between programming languages with decision-making abstractions and the selection monad,
and (2): the definition and study of operational and denotational semantics for those languages, and the establishment of adequacy and full abstraction theorems for them. The adequacy theorems show that global operationally-defined optimizations can be characterized compositionally using a semantics based on the selection monad.

As described above, the selection operational semantics and the
denotational semantics with the ${\argmax}$ selection function
both rely on maximizing rewards.
In many cases, optimal solutions are expensive.
Even in the case of \texttt{binsearch},
an optimal solution that, without ever recursing, immediately picks \texttt{m} such that \texttt{a[m]} equals \texttt{x}
 seems unrealistic.
For efficiency, the optimization may be approximate and
data-driven. In particular, as an alternative to the use of maximization
in the selection operational semantics, we may sometimes be able to
make the choices with contextual-bandit techniques, as
in~\cite[Section 4.2]{smartchoices}. In the denotational semantics,
with $\R$ the type of real numbers, we may use other
selection functions than ${\argmax}$. (Using  ${\argmax}$ is convenient, but our approach
does not require it.) For example, instead of
computing ${\argmax}(f)$, we may approximate $f$ by a
differentiable function over the real numbers, represented by a neural
network with learned parameters, and then find a local maximum of this
approximation by gradient ascent. We have explored such
approximations only informally so far;  Section~\ref{conclusion}
briefly mentions aspects of this and other subjects for further work.


\section{The selection monad and algebraic operations}\label{selection}

In this section
we present material on the basic selection monad, on the selection monad  augmented with an auxiliary monad, and  on generic effects and algebraic operations for general monads. This material includes a discussion of generic effects and algebraic operations for the selection monad (whether basic or augmented) and of the equations these operations satisfy.  Such algebraic operations are either so-called selection operations arising from the basic selection monad or operations arising from the auxiliary monads and then lifted to the augmented selection monad. For a first reading, it suffices to read the definitions of the selection monads and of generic effects and algebraic operations for general monads. (We repeat the definitions of the specific generic effects and algebraic operations for our two languages when discussing their denotational semantics in Sections~\ref{den-sem1} and~\ref{den-sem2}.)

\subsection{The selection monad}\label{selmonad}

The selection monad
\[\S(X) = (X \rightarrow \R) \rightarrow X\]
introduced in~\cite{escardo10}, is a strong monad available in any cartesian closed category, for simplicity discussed here only in the category of sets. One can think of the $F \in \S(X)$ as \emph{selection functions} which, viewing $\R$ as a \emph{reward} type, choose an element  $x \in X$, given a \emph{reward function} $\gamma\type X \rightarrow \R$.  In a typical example, the choice $x$ optimizes, perhaps maximizing, the reward $\gamma(x)$.
Computationally, we may understand $F\in \S(X)$ as producing $x$ given a \emph{reward continuation} $\gamma$,  a function giving the reward of the remainder of the computation.

The selection monad has strong connections to logic, similar to those of the continuation monad  $\K(X) = (X \rightarrow \R) \rightarrow \R$.
For example, as explained in~\cite{escardo12}, whereas  logic translations using $\K$, taking $\R$ to be $\perp$, verify the double-negation law $\neg\neg P \supset P$, translations using $\S$ verify the instance  $((P \supset \R) \supset P) \supset P$ of Peirce's law. Again, with $\R$ the truth values, elements of $\K(X)$ correspond to quantifiers, and elements of $\S(X)$ correspond to selection operators, such as Hilbert's  $\varepsilon$-operator.

\newcommand{\ER}[3]{\mathbf{R}_{#1}(#2|#3)} 

The selection monad has unit $(\eta_\S)_X\type X \rightarrow \S(X)$, where $\eta_\S(x) = \lambda \gamma\in  X \rightarrow \R.\, x$. (Here, and below, we may drop subscripts when they are evident from the context.)
The  Kleisli extension  is a little involved, so we explain it in stages. First, for any $F \in \S(X)$ and reward continuation  $\gamma \type Y \rightarrow \R$  we write $\ER{}{F}{\gamma}$ for the reward given by the (possibly optimal)  $x\in X$ chosen by $F$, i.e.:
\[\ER{}{F}{\gamma} \eqdef \gamma(F \gamma)\]
(Here, and below, we may omit function application parentheses to improve readability.)
For the Kleisli extension,  given $f\type X \rightarrow \S(Y)$ we need a function $f^{\dagger_\S}\type \S(X) \rightarrow \S(Y)$.
Equivalently, using $f$, we need to pick an element of $Y$, given a computation $F\in \S(X)$ and a reward continuation $\gamma \type Y \rightarrow \R$. We do so as follows:
\begin{itemize}
\item For a given $x \in X$, the reward associated to the possibly optimal element of $Y$ picked by  $f(x)$ is $\ER{}{f(x)}{\gamma}$.
\item Thus we have a reward function from $X$, viz.\ $\mathit{rew} = \lambda x\in X.\, \ER{}{f(x)}{\gamma}$.
\item Using this reward function as the reward continuation of $F$, we can use $F$ to choose the (possibly optimal) element of $X$ for it, viz.~${\mathit opt} = F({\mathit{rew}})$.
\item Now that we know the best choice of $x$, we use it to get the desired element of $Y$, viz.~$f(\mathit{opt})(\gamma)$.
\end{itemize}
Intuitively, $F$ chooses the $x\in X$ which gives the optimal $y\in Y$,
and then $f$ uses that $x$.

Writing all this out, we find:
\[\begin{array}{lclcl}f^{\dagger_\S}F\gamma & = & f(\mathit opt)(\gamma)\\
                                                   & = & f(F({\mathit rew}))(\gamma)\\
                                                   & = & f F(\lambda x \in  X.\, \ER{}{f(x)}{\gamma})\gamma\\
\end{array}\]

The selection monad has strength $(\st_\S)_{X,Y}\type X \times \S(Y) \rightarrow \S(X \times Y)$ where:
\[(\st_\S)_{X,Y}(x,F) = \lambda \gamma\in  X \times Y \rightarrow \R.\, \tuple{x,F(\lambda y\in  Y.\, \gamma(x,y))}\]
There is a generalization of this basic selection monad obtained by augmenting it with a  strong \emph{auxiliary} monad ${\T}$. This generalization  proves useful when combining additional effects with selection. Suppose that $\R$ is a ${\T}$-algebra with algebra map $\alpha_{\T}\type {\T}(\R) \rightarrow \R$. Then, as essentially proved in~\cite{escardo17} for any cartesian closed category, we can define a strong monad $\S_{\T}$ (which may just be written $\S$, when $\T$ is understood) by setting:
\[\S_\T(X) = (X \rightarrow \R) \rightarrow {\T}(X)\]
It has unit $(\eta_{\S_\T})_X\type X \rightarrow \S_\T(X)$ where $(\eta_{\S_\T})_X(x) = \lambda \gamma\in  X \rightarrow \R.\, (\eta_{\T})(x)$.
The  Kleisli extension $f^{\dagger_{\S_\T}}\type \S_\T(X) \rightarrow \S_\T(Y)$ of a function $f\type X \rightarrow \S_\T(Y)$  is given, analogously to the above. First, for $F \in \S_\T(X)$ and $\gamma \type X \rightarrow \R$, generalizing that for $\S$, we define the reward associated to the $\T$-computation selected by $F$ using $\gamma$ by:
\begin{equation}\ER{}{F}{\gamma} =  (\alpha_\T\circ \T(\gamma))(F(\gamma))\label{SelectionMonadReward} \end{equation}
Then the Kleisli extension function $f^{\dagger_{\S_\T}}\type \S_\T(X) \rightarrow \S_\T(Y)$ of a given $f\type X \rightarrow \S_\T(Y)$ is:
\begin{equation}\label{dagger}
f^{\dagger_{\S_\T}}F\gamma\;\; \eqdef \;\; (\lambda x \in  X.\, f(x)(\gamma))^{\dagger_{\T}}(F(\lambda x \in  X.\,  \ER{}{fx}{\gamma}))
\qquad (F \in \S_\T(X), \gamma \type Y \rightarrow \R)\end{equation}
As an example, suppose that $\R$ is a commutative monoid, and $\T$ is the writer monad $\W(X) = \R \times X$. Using the monoid operation, we can  set $\alpha_\W(r,s) = r + s$. We then have:
\[\ER{}{F}{\gamma} = \pi_1(F(\gamma))  + \gamma(\pi_2(F(\gamma)))\]

For $(\mu_\T)_X = \id_{\S_\T(X)}^{\dagger_{\S_\T}}$ we find:
\begin{equation}\label{mu}
(\mu_\T)_X F\gamma\;\; = \;\; (\lambda d \in \S_\T(X).\, d\gamma)^{\dagger_{\T}}(F(\lambda d \in  \S_\T(X).\, \ER{}{d}{\gamma}
))\end{equation}
The selection monad has strength $(\st_{\S_\T})_{X,Y}\type X \times \S_\T(Y) \rightarrow \S_\T(X \times Y)$ where:
\[(\st_{\S_\T})_{X,Y}(x,F) = \lambda \gamma\in X \times Y \rightarrow \R.\, (\st_{\T})_{X,Y}(x,F(\lambda y\in  Y.\, \gamma(x,y)))\]
 We remark that if $\T$ is the free algebra monad for an equational theory $\mathrm{Th}$,  the categories of $\T$-algebras and  of models of $\mathrm{Th}$ (i.e., algebras satisfying the equations) are equivalent. In particular the $\T$ algebra $\alpha_A\type \T(A) \rightarrow A$ corresponding to a model $A$ is the homomorphism $\id_A^{\dagger_\T}$, and $h\type A \rightarrow B$ is a homomorphism between models of the theory iff it is a $\T$-algebra morphism (from $\alpha_A$ to $\alpha_B$). Note that the Kleisli extension of a map $f\type X \rightarrow A$ to a model $A$ is the same as the Kleisli extension of $f$ regarded as a map to a $\T$-algebra.

We can define reward functions  for general monads $\M$ equipped with an $\M$-algebra $\alpha_M\type \M(\R) \rightarrow \R$.
For $u \in \M(X)$ and $\gamma \type X \rightarrow \R$, set:
\begin{equation} \ER{\M_X}{u}{\gamma} = (\alpha_\M\circ \M(\gamma))(u) \quad (= \gamma^{\dagger_\M}(u))\label{MonadReward}\end{equation}
 Note that, for $x\in X$, $\ER{\M}{\eta_\M(x)}{\gamma} = \gamma(x)$ and that $\ER{\M}{-}{\gamma}$ is an $\M$-algebra morphism. Further, in case  $\M$ is the free algebra monad for an equational theory, and $\alpha_M$ is the
 $\M$-algebra corresponding to a model on $\R$,   $\ER{\M}{-}{\gamma}$ is a homomorphism.

 We remark (see~\cite{K80,K93,HL07}) that, using the reward function, one can  define a morphism $\theta_\M$ from $\M$ to the continuation monad, by setting
\[\theta_\M(u) = \lambda \gamma \in (X \rightarrow \R).\, \ER{\M}{u}{\gamma} \quad (u \in \M(X))\]

In the case of the selection monad,  define $\alpha_{\S_\T}\type \S_\T(\R) \rightarrow \R$ by:
\[\alpha_{\S_\T}(F) = \alpha_\T(F(\id_\R))\]
\begin{fact} $\alpha_{\S_\T}\type \S_\T(\R) \rightarrow \R$ is an $\S_\T$-algebra.
\end{fact}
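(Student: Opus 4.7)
The plan is to verify the two $\S_\T$-algebra axioms for $\alpha_{\S_\T}$, namely the unit law $\alpha_{\S_\T} \circ \eta_{\S_\T} = \id_\R$ and the multiplication law $\alpha_{\S_\T} \circ (\mu_{\S_\T})_\R = \alpha_{\S_\T} \circ \S_\T(\alpha_{\S_\T})$. The unit law is immediate: for any $r \in \R$, one unfolds $\alpha_{\S_\T}(\eta_{\S_\T}(r)) = \alpha_\T(\eta_{\S_\T}(r)(\id_\R)) = \alpha_\T(\eta_\T(r)) = r$, invoking the unit axiom for the underlying $\T$-algebra $\alpha_\T$.

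For the multiplication law, I would fix $\mathcal{F} \in \S_\T(\S_\T(\R))$ and evaluate each side by first applying the result to the reward continuation $\id_\R$, then applying $\alpha_\T$. Unpacking the left-hand side via equation~\eqref{mu}: the internal reward $\ER{}{d}{\id_\R} = \alpha_\T(\T(\id_\R)(d(\id_\R))) = \alpha_\T(d(\id_\R))$ is precisely $\alpha_{\S_\T}(d)$, so $(\mu_{\S_\T})_\R(\mathcal{F})(\id_\R) = (\lambda d \in \S_\T(\R).\, d(\id_\R))^{\dagger_\T}(\mathcal{F}(\alpha_{\S_\T}))$. For the right-hand side, using the standard identity $\S_\T(f)(G)(\gamma) = \T(f)(G(\gamma \circ f))$ (derived from $\S_\T(f) = (\eta_{\S_\T} \circ f)^{\dagger_{\S_\T}}$ and equation~\eqref{dagger}) with $f = \alpha_{\S_\T}$ and $\gamma = \id_\R$ gives $\S_\T(\alpha_{\S_\T})(\mathcal{F})(\id_\R) = \T(\alpha_{\S_\T})(\mathcal{F}(\alpha_{\S_\T}))$.

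It remains to apply $\alpha_\T$ to both expressions and match them. Writing $g = \lambda d.\, d(\id_\R) \type \S_\T(\R) \to \T(\R)$, expanding $g^{\dagger_\T} = \mu_\T \circ \T(g)$, and invoking the $\T$-algebra axiom $\alpha_\T \circ \mu_\T = \alpha_\T \circ \T(\alpha_\T)$, the left-hand side becomes $\alpha_\T(\T(\alpha_\T \circ g)(\mathcal{F}(\alpha_{\S_\T}))) = \alpha_\T(\T(\alpha_{\S_\T})(\mathcal{F}(\alpha_{\S_\T})))$, which agrees on the nose with the right-hand side. The main obstacle is purely notational bookkeeping: correctly unpacking the nested Kleisli extension and reward function that define $(\mu_{\S_\T})_\R$ together with the functorial action $\S_\T(\alpha_{\S_\T})$. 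Once one observes that evaluating at $\id_\R$ converts the internal ``reward of $d$'' into $\alpha_{\S_\T}(d)$, so that the same function $\mathcal{F}(\alpha_{\S_\T}) \in \T(\S_\T(\R))$ is fed into both sides, the $\T$-algebra laws collapse the two expressions immediately.
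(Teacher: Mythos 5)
Your proof is correct and follows essentially the same route as the paper's: both verify the unit law by direct unfolding, and for the multiplication law both evaluate at the continuation $\id_\R$, observe that the internal reward $\ER{}{d}{\id_\R}$ collapses to $\alpha_{\S_\T}(d)$ so that $\mathcal{F}(\alpha_{\S_\T})$ appears on both sides, and then reduce both sides to $\alpha_\T(\T(\alpha_{\S_\T})(\mathcal{F}(\alpha_{\S_\T})))$ via the $\T$-algebra laws (your use of $\alpha_\T\circ\mu_\T = \alpha_\T\circ\T(\alpha_\T)$ is just the paper's identity $\alpha\circ f^{\dagger_\M} = (\alpha\circ f)^{\dagger_\M}$ unwound). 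No gaps.
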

\begin{proof}
We write $\S$ for $\S_\T$.
We have to show that $\alpha_{\S}$ satisfies the unit and multiplication requirements to be an $\S_\T$-algebra, i.e., that
$\alpha_{\S}\circ (\eta_{\S})_\R = \id_\R$ and $\alpha_\S \circ (\mu_S)_\R= \alpha_\S \circ \S(\alpha_\S)$.

For the first requirement we have:
\[\alpha_{\S_\T}((\eta_\S)_\R(r)) = \alpha_\T((\eta_\S)_\R(r)(\id_\R)) = \alpha_\T((\eta_\T)_\R(r)) = r\]

For the second requirement, for $F \in \S(\S(\R))$ we calculate, first, that:

\[\begin{array}{lcl}
\alpha_{\S}((\mu_\T)_\R F)
   & =  & \alpha_{\S}(\lambda \gamma\type\R \rightarrow \R.\, (\lambda d \in \S(\R).\, d\gamma)^{\dagger_{\T}}(F(\lambda d
       \in  \S(\R).\, \ER{}{d}{\gamma})))\\
   & =  & \alpha_{\T}((\lambda d \in \S(\R).\, d\,\id_\R)^{\dagger_{\T}}(F(\lambda d \in  \S(\R).\, \ER{}{d}{\id_\R})))\\
   & =  & \alpha_{\T}((\lambda d \in \S(\R).\, d\,\id_\R)^{\dagger_{\T}}(F(\lambda d \in  \S(\R).\, \alpha_\T(d\,\id_\R))))\\
   & =  & (\lambda d \in \S(\R).\, \alpha_\T(d\,\id_\R))^{\dagger_{\T}}(F(\alpha_\S))\\
   & =  & \alpha_\S^{\dagger_{\T}}(F(\alpha_\S))\\
   \end{array}\]
 (where the fourth equality uses the fact that for any $\M$-algebra $\alpha\type \M(X) \rightarrow X$ and any $f\type Y \rightarrow \M(X)$ we have $\alpha\circ f^{\dagger_\M} = (\alpha\circ f)^{\dagger_\M}$) and, second, that:
   \[\begin{array}{lcl}  \alpha_\S(\S(\alpha_\S)(F))
        & =  & \alpha_\S( \lambda \gamma\type \R \rightarrow \R.\, \T(\alpha_\S)(F(\gamma \circ \alpha_S)))\\
                & =  & \alpha_\T(\T(\alpha_S)(F(\alpha_S)))\\
                & =  & \alpha_\S^{\dagger_\T}(F(\alpha_\S))\\   \end{array}\]
(where the last equality uses the fact that for any $\M$-algebra $\alpha\type \M(X) \rightarrow X$, and any $f\type Y \rightarrow X$ we have $\alpha\circ \M(f) = f^{\dagger_\M}$). This concludes the proof.
\end{proof}

Using the general formula for the reward function for monads equipped with an algebra on $\R$,
we then calculate
 for $F \in \S_\T(X) = (X \rightarrow \R) \rightarrow \T(X)$ and $\gamma\type X \rightarrow \R$  that:
\[\begin{array}{lcl}
\ER{\S_\T}{F}{\gamma}& = & \alpha_{\S_\T} (\S_\T(\gamma)(F))\\
                          & = & \alpha_{\S_\T}(\lambda \gamma'\type \R \rightarrow \R.\,\T(\gamma)(F(\gamma'\circ\gamma )))\\
                          & = & \alpha_{\T}(\T(\gamma)(F\gamma))\\
\end{array}\]
As desired, this is  the reward function of Definition~\ref{SelectionMonadReward}.
Note that $\ER{\S_\T}{F}{\gamma} = \ER{\T}{F\gamma}{\gamma}$.

\subsection{Generic effects and algebraic operations}\label{genalgops}

  In order to be able to give semantics to effectual operations such as probabilistic
 choice, we use the apparatus of generic effects and algebraic
 operations in the category of sets discussed  in~\cite{PP03} (in a much more general
 setting).
Suppose that $\M$ is  a (necessarily strong) monad on the category of sets.  A \emph{generic effect} $g$ with arity $(I,O)$ (written $g\type (I,O)$) for $\M$
 is just a Kleisli map:
\[g\type O \rightarrow \M(I)\]
An  \emph{$M$-algebraic operation} $\op$  with arity $(I,O)$ (written $\op\type (I,O)$)
 is a family of functions
\[\op_X\type O \times \M(X)^I \rightarrow \M(X)\]
natural with respect to Kleisli maps in the sense that the following diagram commutes for all $e\type X \rightarrow \M(Y)$:

\begin{diagram}
O \times \M(X)^I &\rTo^{\op_X} &\M(X)\\
\dTo^{O \times (e^{\dagger_{\M}})^I} & &\dTo_{e^{\dagger_{\M}}}\\
O \times \M(Y)^I &\rTo^{\op_Y} & \M(Y)
\end{diagram}

There is a 1--1 correspondence between $(I,O)$-ary generic effects and $(I,O)$-ary algebraic operations. In one direction, given $g$, one sets
\begin{equation}
\op_X(o,a) = a^{\dagger_{\M}}(g(o))\label{go}
\end{equation}

In the other direction, given such a family $\op$, one sets
\begin{equation}
g(o) = \op_I(o,(\eta_{\M})_I)\label{og}
\end{equation}
Naturality implies a weaker but useful property, that the above diagram commutes for maps $\M(f)$, for any $f\type X \rightarrow Y$.
In other words, if we regard $\M(X)$ and $\M(Y)$ as algebras equipped with (any) corresponding algebraic operation components, such maps are homomorphisms $\M(f)\type \M(X) \rightarrow \M(X)$. Naturality also implies that, as monad mutltiplications $(\mu_\T)_{X}$ are Kleisli extensions, they too act homomorphically on algebraic operations.

We  generally obtain the algebraic operations we need via their generic effects.
When $O$ is a product $O_1 \times \cdots \times O_m$,
we obtain  semantically useful functions
\[\op^\dagger_X\type (\M(O_1) \times \cdots \times  \M(O_m)) \times \M(X)^I \rightarrow \M(X)\]
from an algebraic operation
\[\op_X\type (O_1 \times \cdots \times  O_m) \times \M(X)^I \rightarrow \M(X)\]
This can be done  by applying iterated Kleisli extension to the curried version
\[\op'_X\type O_1 \rightarrow  \ldots \ O_m  \rightarrow \M(X)^I \rightarrow \M(X)\]
 of $\op$, or, equivalently, using Kleisli extension and  the monoidal structure \[(m_{\T})_{X,Y}\type \M(X) \times \M(Y) \rightarrow \M(X \times Y)\] induced by the monadic strength (see~\cite{kock1972strong}).

When $O = \mathbbm{1}$, we generally ignore it and equivalently  write $g\type I$ and $g\in \M(I)$ for generics and $\op\type I$ and $\op_X\type \M(X)^I \rightarrow \M(X)$ for algebraic operations. We adopt similar conventions below for related occurrences of $\mathbbm{1}$. Note that $(I,O)$-ary algebraic operations  $\op$  are in an evident  correspondence with indexed families $\op_o \; (o \in O)$ of $I$-ary algebraic operations;
in particular, when $I = [n]$  (as usual, $[n] = \{i\, |\, i <n \}$),
the $\op_o$ can be considered to be families $(\op_o)_X\type \M(X)^n \rightarrow \M(X)$ of $n$-ary functions. (Here, and below, it is convenient to confuse $[n]$ with $n$.)

The $[n]$-ary algebraic operations include the
projections $\pi_{n,i}\type \M(X)^n \rightarrow \M(X)$, for $i = 0,\ldots,n-1$, and are closed under composition, meaning that if $\op$ is an $[n]$-ary algebraic operation, and $\op_i$ are $[m]$-ary algebraic operations, then so is
$\op\circ\tuple{\op_0,\ldots,\op_{n-1}}$ where:
\[(\op\circ\tuple{\op_0,\ldots,\op_{n-1}})_X(u_0,\ldots, u_{m-1})= \op_X((\op_0)_X(u_0),\ldots, (\op_{m-1})_X(u_{m-1}))\]
There are natural corresponding generic effects and operations on them. This is part of a much larger picture. The generic effects of a monad $\M$ form its Kleisli category,  with objects all sets. This category has all small sums, and so its opposite, termed the large Lawvere theory of $\M$ (see~\cite{ dub06,HLPP07}), has all small products.  The algebraic operations also form a category, again with objects all sets, and with morphisms from $I$ to $O$  the $(I,O)$-ary algebraic operations
(identity and composition are defined componentwise). The correspondence between generic effects and algebraic operations forms an isomorphism between these two categories.

We say that algebraic operations $\op_1\!\type\! [n_1], \ldots, \op_k\type [n_k]$ \emph{satisfy}
equations over function symbols $f_1\type n_1, \ldots, f_k\type n_k$ iff for any $X$, $(\op_1)_X,\ldots,(\op_k)_X$ do, in the usual sense, i.e., if the equations hold with the $f_i$ interpreted as $(\op_i)_X$ for $i = 1,\ldots, k$.
In the case where $\M$ is the free-algebra monad for an equational theory  $\mathrm{Th}$ with function symbols $\op\type n$ of given arity, the  $\op_X\type \M(X)^n \rightarrow \M(X)$ form  $[n]$-ary algebraic operations (indeed, in this case all algebraic operations occur as compositions of these ones and the projection algebraic operations). These algebraic operations satisfy all the equations of $\mathrm{Th}$.

Given an $\M$-algebra, $\alpha\type \M(X) \rightarrow X$, and an $\M$-algebraic operation $\op\type (I,O)$ we can induce a corresponding map
$\op_\alpha\type O \times X^I \rightarrow X$, by setting
\[\op_\alpha(o,u) =  \alpha(\op_X(o,\eta_X\circ u))\]
and $\alpha$ is then a homomorphism between $\op_{\M(X)}$ and the induced map. Given a collection of operations $\op_1\!\type\! [n_1], \ldots, \op_k\type [n_k]$, the corresponding induced maps satisfy the same equations the operations do. So, in particular, if $\M$ is the free-algebra monad for an equational theory  $\mathrm{Th}$ with function symbols $\op_i\type n_i$, $X$ becomes a model of the theory via the $(\op_i)_{\M(X)}$. Conversely, if $X$ is a model of the theory then we can define a corresponding $\M$-algebra by setting $\alpha = \id_X^{\dagger_\M}$. These two correspondences  yield an isomorphism between the categories of $\M$-algebras and models of the theory
(the isomorphism is the identity on morphisms).

Given a monad morphism $\theta \type \M \rightarrow \M'$, any generic effect $g\type  O \rightarrow \M(I)$ yields a generic effect $g' = \theta_I\circ g$ for $\M'$. Then, see~\cite{HPP06}, $\theta$ is a homomorphism of the corresponding algebraic operations,
$\op_X\type O \times \M(X)^I \rightarrow \M(X)$ and $\op'_X\type O \times \M'(X)^I \rightarrow \M'(X)$ in the sense that, for all sets $X$,  the following diagram commutes:
\begin{diagram}
O \times \M(X)^I &\rTo^{\op_X} &\M(X)\\
\dTo^{O \times (\theta_X)^I} & &\dTo_{\theta_X}\\
O \times \M'(X)^I &\rTo^{\op'_X} & \M'(X)
\end{diagram}

\newcommand{\ogamma}{\overline{\gamma}}
\newcommand{\wop}{\widetilde{\op}}
\newcommand{\wg}{\widetilde{g}}
\newcommand{\aux}{\mathrm{aux}}
We next consider algebraic operations for  the selection monad  $S_{\T}$. Modulo currying,
$(I,O \times \R^I )$-ary generic effects $g \type (O \times \R^I) \rightarrow  \T(I)$ for $\T$ are in bijective correspondence with $(I,O)$-ary generic effects
$\wg\type O \rightarrow  \T(I)^{ \R^I}$ for $\S_\T$.
There is therefore a corresponding bijective correspondence between $(I,O \times \R^I )$-ary $\T$-algebraic operations $\op$  and $(I,O)$-ary $\S_\T$-algebraic operations $\wop$.
This correspondence has a pleasing component-wise expression going from $\T$ to $\S_\T$. An  intermediate function family notion is useful. We define the \emph{auxiliary function family} $\aux_{X}\type  O \times \R^X \times \T(X)^I \rightarrow \T(X)$ associated to a $(I,O \times \R^I )$-ary $\T$-algebraic operation $\op$  by:
\begin{equation} \aux_{X}(o,\gamma,u) = \op_X(\tuple{o, \lambda i \in I.\, \ER{\T}{ui}{\gamma}}, u)\label{auxdef}\end{equation}
Below we write  $\aux_{X, o,\gamma}$ for the function $\aux_{X}(o,\gamma, -)$.

\begin{prop}\label{gengen}
Let $\op$ be an $(I,O\times \R^I)$-ary $\T$-algebraic operation. In terms of its associated auxiliary function family $\aux$, the corresponding
$(I,O)$-ary $\S_\T$-algebraic operation $\wop$ is given by:
\[\wop_X(o,a)= \lambda \gamma \in \R^X.\,\aux_{X, o,\gamma}(\lambda i\in I.\, ai\gamma) \]
Conversely, we have:
\[\aux_{X, o,\gamma}(u) = \wop_X(o,\lambda i\in I.\lambda \gamma \in X \rightarrow \R.\, ui)\gamma\]
\end{prop}
\begin{proof}
The generic effect $g \type O \times \R^I \rightarrow \T(I)$ corresponding to $\op$ is given by Equation~\ref{og}:
\[g(o,\ogamma) = \op_I(\tuple{o,\ogamma},(\eta_\T)_I)\]
Currying, we obtain $\wg \type O \rightarrow \S(I)$ where:
\[\wg(o) = \lambda \ogamma \in \R^I.\, \op_I(\tuple{o,\ogamma},(\eta_\T)_I)\]
and then, using Equation~\ref{go}, we have:
\[\wop_X(o,a) = a^{\dagger_{\S_\T}}(\wg(o))\]
We next choose a reward continuation $\gamma \in \R^X$ and examine
$a^{\dagger_{\S_\T}}(\wg(o))\gamma$. To this end we first obtain a reward continuation in $\R^I$ from  $a$ and $\gamma$,  namely:
\[\ogamma \eqdef \lambda i\in I.\, \ER{\T}{ai}{\gamma}\]
and, setting
\[u \eqdef (\lambda i \in I.\, ai\gamma) \in I \rightarrow \T(X)\] we have:
\[\begin{array}{lcll}
a^{\dagger_{\S_\T}}(\wg(o))\gamma & = & u^{\dagger_\T}(\wg(o)(\ogamma))\\
                                                                   & = & u^{\dagger_\T}(\op_I(\tuple{o,\ogamma},(\eta_\T)_I))\\
                                                     & = & \op_X(\tuple{o,\ogamma},u^{\dagger_\T}(\eta_\T)_I)
                                                     & (\mbox{by the Kleisli naturality} \\ &&& \;\mbox{ of algebraic operations})\\
                                                     & = & \op_X(\tuple{o,\ogamma},u)\\
\end{array}\]

Putting these facts together, we have:
\[\begin{array}{lcl}(\wop)_X(o,a)\gamma  & = & a^{\dagger_{\S_\T}}(g_{\S_\T}(o))\\
                                           & = & \op_X(\tuple{o,\ogamma},u)\\
                                           & = & \op_X(\tuple{o,\lambda i\in I.\, \ER{\T}{ai}{\gamma}},\lambda i \in I.\, ai\gamma)\\
                                           & = &  \aux_{X, o,\gamma}(\lambda i\in I.\, ai\gamma)
\end{array}\]
as required.
That
\[\aux_{X, o,\gamma}(u) = \wop_X(o,\lambda i\in I.\lambda \gamma \in X \rightarrow \R.\, ui)\gamma\]
is an immediate consequence, for, setting $a = \lambda i \in I.\, \lambda \gamma\in X \rightarrow \R.\, u(i)$,  we find:
\begin{align*}
\aux_{X, o,\gamma}(u) & = \aux_{X, o,\gamma}(\lambda i\in I.\, ai\gamma)\\
                                      & = (\wop)_X(o,a)\gamma\\
                                      & = (\wop)_X(o,\lambda i \in I.\, \lambda \gamma\in X \rightarrow \R.\, u(i))\gamma \qedhere
\end{align*}
\end{proof}

Note that, as is natural,  the proposition expresses that $(\op_{\S_\T})_X$ uses the reward function in $\R^I$ which assigns to $i \in I$ the reward obtained by following the $i$th branch.

The correspondences between the two kinds of algebraic operations and auxiliary functions fit well with fixing parameters. Given an  $(I,O \times \R^I )$-ary $\T$-algebraic operation $\op$, we obtain an $(I,\R^I )$-ary $\S_\T$-algebraic operation $\op'$ by fixing an $o\in O$. The corresponding auxiliary function family $\aux'_X\type \R^X \times \T(X)^I \rightarrow \T(X)$ is, as one would expect,  $\lambda \gamma, a.\, \aux_X(a,\gamma,a)$; the corresponding $I$-ary algebraic operation
 for $\S_\T$ is $\wop_o$.

Using Proposition~\ref{gengen}, we can reduce questions of equational satisfaction by $\S_\T$-algebraic operations to corresponding questions about their auxiliary functions, so reducing questions about  $\S_\T$ to questions about $\T$. We first need a lemma.
\begin{lem}\label{opaux}
\hfill
\begin{enumerate}
\item The auxiliary function family $\aux_{X,\gamma}$ corresponding to an $[n]$-ary projection $\S_\T$-algebraic operation  $\pi_{n,i}$ is the  family $(\pi_{n,i})_{\T(X)}\type \T(X)^n \rightarrow \T(X)$ of projections.

\item Let $\op$ be an $[n]$-ary $\S_\T$-algebraic operation, and, for $i = 0,\ldots, n-1$, let $\op_i$ be $[m]$-ary $\S_\T$-algebraic operations  for $\S_\T$, and let their corresponding  auxiliary function families be $\aux_{X}$ and $(\aux_i)_{X}$, respectively. Then the auxiliary function family $\aux'_{X}$ corresponding to
the composition $\op'$ of $\op$ with the $\op_i$ is the corresponding composition of auxiliary functions:
\[\begin{array}{l}\aux'_{X,\gamma}(u_0,\ldots, u_{m-1}) \\
\hspace{45pt}= \; \aux_{X,\gamma}((\aux_0)_{X,\gamma}(u_0,\ldots, u_{m-1}),\ldots, (\aux_{n-1})_{X,\gamma}(u_0,\ldots, u_{m-1}))\end{array}\]
\end{enumerate}
\end{lem}
\begin{proof}
\hfill
\begin{enumerate}
\item This is immediate from the second part of Proposition~\ref{gengen}.
\item Making use of both parts of Proposition~\ref{gengen} we calculate:
\begin{align*}
&\aux'_{X,\gamma}(u_0,\ldots, u_{m-1}) \\
&                 \hspace{25pt} =  \op'_X(\lambda \gamma. u_0,\ldots,\lambda \gamma. u_{m-1})\gamma\\
&                 \hspace{25pt} =\op_X((\op_0)_X(\lambda \gamma. u_0,\ldots, \lambda \gamma. u_{m-1}),\ldots, (\op_{n-1})_X(\lambda \gamma. u_0,\ldots, \lambda \gamma. u_{m-1}))\gamma\\
&                  \hspace{25pt}=  \aux_{X,\gamma}((\op_0)_X(\lambda \gamma. u_0,\ldots, \lambda \gamma. u_{m-1})\gamma,\ldots, (\op_{n-1})_X(\lambda \gamma. u_0,\ldots, \lambda \gamma. u_{m-1})\gamma)\\
&                  \hspace{25pt} = \aux_{X,\gamma}((\aux_0)_{X,\gamma}(u_0,\ldots, u_{m-1}),\ldots, (\aux_{n-1})_{X,\gamma}(u_0,\ldots, u_{m-1})) \qedhere
\end{align*}
 \end{enumerate}
\end{proof}

\begin{prop}\label{geneq} Let $\op_i$ be $([n_i],O_i\times \R^{[n_i]})$-ary $\T$-algebraic operations,  for $i = 1,\ldots, k$,
and choose $o_i \in O_i$ ($i = 1,\ldots,k$).
Then an equation is satisfied by $(\widetilde{op_1})_{o_1},\ldots,(\widetilde{op_k})_{o_k}$, if, for all sets $X$ and $\gamma\type X \rightarrow \R$,
it is satisfied by $(\aux_1)_{X,o_1,\gamma},\ldots,(\aux_k)_{X,o_k,\gamma}$, where, for $i=1,\ldots, k$, $\aux_i$ is the auxiliary function family obtained from $\op_i$.
\end{prop}
\begin{proof} We can assume  without loss of generality that the $O_i$ are all $\mathbbm{1}$ and so can be ignored. The interpretation of an algebraic term $t$ with $m$ free variables built from  function symbols $f_1\type n_1, \ldots, f_k\type n_k$  can be considered as an $m$-ary function, and an equation $t= u$ over $m$ free variables holds in the interpretation if the two such interpretations are equal.

Fixing a term $t$ with $m$ free variables, for any set $X$, using the  $(\wop_i)_X$ to interpret the $f_i$, we obtain functions
$\den{\mathcal{O}}{t}_{X}\type \S_\T(X)^m \rightarrow \S_\T(X)$, say, and  for any set $X$ and $\gamma\type X \rightarrow \R$, using the $(\aux_i)_{X,\gamma}$ we obtain functions
$\den{\mathcal{A}}{t}_{X,\gamma}\type \T(X)^m \rightarrow \T(X)$, say. As the projections are algebraic operations and as algebraic operations are closed under composition,  a straightforward structural induction shows that the family $\den{\mathcal{O}}{t}$ is an $m$-ary algebraic operation for $\S_\T$. Using Lemma~\ref{opaux}, a further straightforward structural induction shows that $\den{\mathcal{A}}{t}_{X,-}$ is the corresponding auxiliary function family.

Now suppose an equation $t=u$ over $m$ variables is satisfied by $(\aux_1)_{X,\gamma},\ldots,(\aux_k)_{X,\gamma}$ for all sets $X$ and $\gamma\type X \rightarrow \R$, that is, suppose that $\den{\mathcal{A}}{t}_{X,\gamma} = \den{\mathcal{A}}{u}_{X,\gamma}$, for all such $X$ and $\gamma$.
Then,  using  Proposition~\ref{gengen}, we see that:
\[\begin{array}{lcl}\den{\mathcal{O}}{t}_{X}(F_0,\ldots, F_{m-1})\gamma
& = &
\den{\mathcal{A}}{t}_{X,\gamma}(F_0\gamma,\ldots, F_{m-1}\gamma) \\
& = & \den{\mathcal{A}}{u}_{X,\gamma}(F_0\gamma,\ldots, F_{m-1}\gamma) \\
& = & \den{\mathcal{O}}{u}_{X}(F_0,\ldots, F_{m-1})\gamma\\
\end{array}\]
holds for all sets $X$ and $\gamma\type X \rightarrow \R$, concluding the proof.
\end{proof}

We next see that, as one would expect, we can use algebraic operations for $\T$-effects to obtain corresponding ones for $\S_\T$-effects.
If $\op$ is an $(I,O)$-ary $\T$-algebraic operation, it can be considered to be an $(I \times \R^I,O)$-ary algebraic operation which ignores its reward function argument.  The auxiliary  functions $\aux_{X,o,\gamma}$ are the same as the $(\op_o)_X$ and Proposition~\ref{gengen} then yields a $(I,O)$-ary $\S_\T$-algebraic operation $\wop$, where:
\begin{equation} \wop_X(o,a)= \lambda \gamma \in \R^X.\, \op_X( o,\lambda i\in  I.\, ai\gamma)\label{algalgops} \end{equation}
which is the natural pointwise definition.
In the case where $I = [n]$ this can be written as:
\begin{equation} \wop_X(o,F_1,\ldots,F_n)= \lambda \gamma \in \R^X.\, \op_X( o,F_1\gamma, \ldots, F_n\gamma)%
\label{finalgalgops} \end{equation}
From Proposition~\ref{geneq} we further have (as is, in any case, evident from a pointwise argument):
\begin{cor}\label{auxeq} Let $\op_i$ be $[n_i]$-ary $\T$-algebraic operations, for $i = 1,\ldots, k$.
Then an equation is satisfied by $\widetilde{op_1},\ldots, \widetilde{op_k}$, if
it is satisfied by $\op_1,\ldots,\op_k$.
\end{cor}

Another way to obtain algebraic operations is to start from the basic selection monad $\S$. Consider an $(I, O \times  \R^I)$-ary generic effect $g\type O \times  \R^I\rightarrow I$ for the identity monad (equivalent via currying to an $(I, O)$-ary generic effect for $\S$).
 Viewed as a $\T$-generic effect $\eta_I\circ g$, via the unit for $\T$ and using Equation~\ref{go}, we obtain an $(I, O \times  \R^I)$-ary $\T$-algebraic operation $\op_g$ where, for $o\in O, \ogamma \in \R^I, u \in \T(X)^I$:
\[(\op_g)_X(\tuple{o,\ogamma},u) = u^\dagger_\T( \eta_I(g(o,\ogamma))) = u (g(o,\ogamma))\]
Then the corresponding auxiliary functions $(\aux_g)_{X}\type  O \times \R^X \times \T(X)^I \rightarrow \T(X)$ are given, using Definition~\ref{auxdef},  by:
\[(\aux_g)_{X,o,\gamma}(u)
    = (\op_g)_X(\tuple{o, \lambda i \in I.\, \ER{\T}{ui}{\gamma}}, u)
    = u(g(o,\lambda i \in I.\, \ER{\T}{ui}{\gamma}))\]

Finally, via Proposition~\ref{gengen}, we obtain the $(I,O)$-ary $\S_\T$-algebraic operation $\wop_g$ corresponding to $\op_g$. For
for $o \in O, a \in \S_\T(X)^I, \gamma \in \R^X$, we have:

\begin{equation} (\widetilde{op_g})_X(o,a)\gamma
= (\aux_g)_{X, o,\gamma}(\lambda i\in I.\, ai\gamma)
= a(g(o,\lambda i \in I.\, \ER{\S_\T}{ai}{\gamma}))\gamma%
 \label{selop}\end{equation}

So  each component $(\widetilde{op_g})_X$ of $\widetilde{op_g}$ uses $g$ to select a branch of $a$, depending only on the parameter $o \in P$ and the rewards $\ER{\S_\T}{ai}{\gamma}$ associated to the  branches of  $a$ relative to the reward continuation $\gamma$. We can turn this observation into a definition. Say that a family of functions
\[f_X\type O \times \S_\T(X)^I \rightarrow \S_\T(X)\]
is an  $(I,O)$-ary-\emph{selection operation} if there is a function $g: O \times \R^I \rightarrow I$ such that
\begin{equation} f_X(o,a)\gamma = a(g(o,\lambda i \in I.\, \ER{\S_\T}{ai}{\gamma}))\gamma\label{selopdef}\end{equation}
Equation~\ref{selop}  then tells us that the selection operations are exactly the algebraic operations of the form
 $\widetilde{\op_g}$ where, modulo currying, $g$ is a basic selection monad generic effect.

We next consider a particular case: binary selection  operations. Here
$O = \mathbbm{1}$ and $I = [2]$. Such operations arise from $[2]$-ary generics $g\type \R^{[2]} \rightarrow [2]$ for the basic
selection monad. Viewed as a binary algebraic operation on $\S_\T$, Equation~\ref{selop} 
becomes:

\[ (\widetilde{op_g})_X(G_0,G_1)\gamma
= \left \{ \begin{array}{ll}
         G_0\gamma & (g(\lambda i \in I.\, \ER{\S_\T}{G_i}{\gamma}) = 0) \\
         G_1\gamma & (g(\lambda i \in I.\, \ER{\S_\T}{G_i}{\gamma}) = 1) \\
\end{array}\right.\]
Note that $[2]$-ary generics $g\type \R^{[2]} \rightarrow [2]$ for the basic selection monad are in bijection with  binary relations $B$ on $\R$, with relations $B$ corresponding to generics $g_B$, where:
 \[ g_B(\ogamma) = 0  \equiv_{\scriptsize \mathrm{def}} \ogamma(0)B\ogamma(1)\]
(read $rBs$ as ``$r$ beats $s$'').
Defining  $op_B$ to be the binary $\S_\T$-algebraic operation $\widetilde{op_{g_B}}$, we have:
\[ (op_B)_X(G_0,G_1)\gamma
= \left \{ \begin{array}{ll}
         G_0\gamma & ( \ER{\S_\T}{G_0}{\gamma}  \, B \, \ER{\S_\T}{G_1}{\gamma})\\
         G_1\gamma & (\ER{\S_\T}{G_0}{\gamma} \, B \ER{\S_\T}{G_1}{\gamma})\\
\end{array}\right. \]

For optimization purposes it is natural to assume $B$ is a  total  order $\geq$. We define $\myor$ to be the resulting binary algebraic operation on $\S_\T$; it is this operation that we  use for the semantics of decision-making in our two languages. Explicitly we have:

\[\myor_X(G_0,G_1)(\gamma) = \left \{\begin{array}{ll} G_0\gamma &
(\mbox{if\ } \ER{\S_\T}{G_0}{\gamma}  \, \geq \, \ER{\S_\T}{G_1}{\gamma} )
\\
                                                                        G_1\gamma & (\mbox{otherwise})
                                       \end{array}\right.\]
This can be usefully rewritten.  For  $\gamma\type X \rightarrow \R$ define  $\max{\gamma}\type X^2\rightarrow X$ (written infix) by:
\begin{equation} x \,\max{\gamma}\, y = \left \{\begin{array}{ll} x & (\mbox{if\ }\gamma(x) \geq \gamma(y))\\
                                                                        y & (\mbox{otherwise})
                                       \end{array}\right.%
\label{maxdef}
\end{equation}
Then:
\begin{equation}
\myor_X(G_0,G_1)(\gamma) = G_0\gamma\;  \max{\ER{\T}{-}{\gamma}} \; G_1\gamma%
\label{ordef}
\end{equation}

Taking $B$ to be a  total  order is equivalent to  using a version of $\argmax$ as a generic effect. First, for finite   totally ordered sets $I$, assuming a  total  order $\geq$ on $\R$, we define
$\argmax_I \in \S(I) = (I \rightarrow \R) \rightarrow I$, by taking $\argmax \gamma$ to be the least  $i \in I$ among those maximizing  $\gamma(i)$. Then $B$ corresponds to $\argmax_{[2]}$, with $[2]$ ordered by setting $0 < 1$. We could as well have used generics picking from finite  totally ordered sets, with resulting choice functions of corresponding arity.

We next investigate the equations that the algebraic operations $\op_B$ obey and their relation to properties of the relations $B$. Define $(\aux_B)_{X}\type  \R^X \times \T(X)^2 \rightarrow \T(X)$ to be  $(\aux_{g_B})_{X}$. So:
\[(\aux_B)_{X,\gamma}(u_0,u_1)
 \eqdef (\aux_g)_{X,\gamma}(\lambda i \in  [2].\, u_i)
 = u_{g(\lambda i \in [2].\, \ER{\T}{u_i}{\gamma})}\]
and we have:
\[ (\aux_B)_{X,\gamma}(u_0,u_1) = \left \{\begin{array}{ll} u_0 &
(\mbox{if\ } \ER{\T}{u_0}{\gamma} \, B \, \ER{\T}{u_1}{\gamma}) \\
                                                                                    u_1 & (\mbox{otherwise})
                                                          \end{array}\right.\]
In particular, for $x_0,x_1 \in X$ we have:
\begin{equation}(\aux_B)_{X,\gamma}(\eta_{\T}(x_0), \eta_{\T}(x_1))(\gamma) = \left \{\begin{array}{ll} \eta_\T(x_0) & (\mbox{if\ } \gamma(x_0) B \gamma(x_1))\\
                                                                        \eta_\T(x_1) & (\mbox{otherwise})
                                       \end{array}\right.\label{etaop} \end{equation}
We see from Proposition~\ref{geneq} that $\op_B$ satisfies an equation if, and only if, $(\aux_B)_{X,\gamma}$ does for every $X$ and $\gamma\type X \rightarrow \R$.

Say that a binary function $f$ is \emph{left-biased} if the following equation holds:
\[f(x,f(y,x)) =  f(x,y) \]
and is \emph{right-biased} if the following equation holds:
\[f(f(x,y),x) =  f(y,x) \]
and recall that a relation $R$ is
\emph{strongly connected} iff, for all $x, y$, either $xRy$ or $yRx$.

\begin{thm}\label{genax} For every binary relation $B$ on $\R$ we have:
\begin{enumerate}
\item $\op_B$ is idempotent.
\item  $\op_B$ is  associative iff $B$ and its complement is transitive.
\item $\op_B$ is left-biased  iff $B$ is 
strongly connected.
\item $\op_B$ is right-biased  iff the complement of $B$ is 
strongly connected.
\item $\op_B$ is not commutative (assuming $\R$ non-empty).

\end{enumerate}
\end{thm}
\begin{proof} Throughout the proof, we use the fact that, like any monad  unit, all components of $\eta_{\S_\T}$ are 1--1.
\begin{enumerate}
\item  This is evident.
\item \begin{enumerate}
\item

Suppose $\op_B$ is  associative,  and choose $r_0,r_1, r_2 \in \R$.
Define $\gamma\type [3] \rightarrow \R$ by:  $\gamma(i) = r_i$, for $i = 0,1,2$, and set
$f = (\aux_B)_{\gamma,[3]}$ and $x_i =(\eta_{\T})_{[3]}(i)$, for $i = 0,1,2$. Note that the $x_i$ are all different.
By Proposition~\ref{geneq} $f$ is associative as $\op_B$ is.

Suppose first that $r_0 B r_1$ and  $r_1B r_2$. Using Equation~\ref{etaop} we see that, as $r_0 B r_1$ and  $r_1 B r_2$, $f(x_0, f(x_1,x_2)) = f(x_0, x_1)  = x_0$,
and $f(f(x_0, x_1),x_2) = f(x_0,x_2)$. So, as $f$ is associative  $f(x_0,x_2) = x_0$. As $x_0 \neq x_2$, we have $x_0Bx_2$, as required.
Suppose next that $\neg\, r_0 B r_1$ and  $\neg\, r_1B r_2$. Then, using Equation~\ref{etaop} again, we see that $f(x_0, f(x_1,x_2)) = f(x_0,x_2)$ and $f(f(x_0, x_1),x_2) = f(x_1,x_2) = x_2$, and, similarly to before, we conclude that $\neg\, r_0B r_2$.
\item For the converse, suppose that $B$ and its complement is transitive. It suffices to prove that every
$f = (\aux_B)_{X,\gamma}\type \T(X)^2 \rightarrow \T(X)$ is associative. Choose $u_i \in \T(X)$ ($i = 0,2$) and set
$r_i = \ER{}{u_i}{\gamma}$.  
The proof divides into cases according as  each of $r_0Br_1$ and $r_1Br_2$ does or does not hold:
\begin{enumerate}
\item Suppose that $r_0Br_1$ and $r_1Br_2$ (and so $r_0Br_2$).
By the definition of $\aux_B$,  we then have:
\[f(u_0, f(u_1,u_2))   =  f(u_0,u_1)  =  u_0  =  f(u_0,u_2) = f(f(u_0, u_1),u_2) \]
\item Suppose that $r_0Br_1$ and $\neg\, r_1Br_2$. Then:
\[f(u_0, f(u_1,u_2))  = f(u_0,u_2)   = f(f(u_0, u_1),u_2) \]
\item Suppose that $\neg\,  r_0Br_1$ and $r_1Br_2$. Then:
\[f(u_0, f(u_1,u_2))  =  f(u_0,u_1)  = u_1  = f(u_1,u_2 )  = f(f(u_0, u_1),u_2) \]
\item Suppose that $\neg\,  r_0Br_1$ and $\neg\, r_1Br_2$.Then $\neg r_0Br_2$, and we have:
\[f(u_0, f(u_1,u_2))  = f(u_0,u_2)  = u_2   = f(u_1,u_2)  = f(f(u_0, u_1),u_2) \]
\end{enumerate}
So in all cases we have
\[f(u_0, f(u_1,u_2)) =  f(f(u_0, u_1),u_2)\]
and so $\op_B$ is associative, as required.
\end{enumerate}

\item

\begin{enumerate}
\item Suppose $\op_B$ is  left-biased and choose $r_0,r_1 \in \R$. Define $\gamma\type [2] \rightarrow \R$ by:  $\gamma(i) = r_i$, for $i = 0,1$, and set $f = (\aux_B)_{\gamma,[2]}$ and $x_i =(\eta_{\T})_{[2]}(i)$, for $i = 0,1$. Note that  $x_0 \neq x_1$.  By Proposition~\ref{geneq} $f$ is left-biased as $\op_B$ is.

 Suppose that $\neg r_1 B r_0$. Then we have:
\[ f(x_0,x_1) = f(x_0,f(x_1,x_0)) =f(x_0,x_0) = x_0 \]
and so, as $x_0 \neq x_1$, $r_0 B r_1$.
\item For the converse, suppose the relation $B$  is strongly connected. It suffices to prove that every
$f = (\aux_B)_{X,\gamma}\type \T(X)^2 \rightarrow \T(X)$ is left-biased. Choose $u_i$ in $\T(X)$ ($i = 0,1$) and set
$r_i = \ER{\T}{u_i}{\gamma}$.    %
Suppose  first that $r_1 B r_0$ holds. Then
$f(u_0,f(u_1,u_0)) =  f(u_0,u_1) $.
Otherwise, as $B$ is strongly connected, we have $\neg\, r_1 B r_0$ and $r_0 B r_1$, and so,
$f(u_0,f(u_1,u_0)) =  f(u_0,u_0) = u_0 = f(u_0,u_1)$.
So in either case we have
$f(u_0,f(u_1,u_0)) =  f(u_0,u_1) $
 as required.

\end{enumerate}
\item
\begin{enumerate}
\item Suppose $\op_B$ is  right-biased and choose $r_0,r_1 \in \R$. Define $\gamma\type [2] \rightarrow \R$ by:  $\gamma(i) = r_i$, for $i = 0,1$, and set $f = (\aux_B)_{\gamma,[2]}$ and $x_i =(\eta_{\T})_{[2]}(i)$, for $i = 0,1$. Note that  $x_0 \neq x_1$.  By Proposition~\ref{geneq} $f$ is right-biased as $\op_B$ is.

Suppose that $\neg(\neg r_0 B r_1)$, i.e., that $r_0 B r_1$. Then we have:\[f(f(x,y),x) =  f(y,x) \]
\[f(x_1,x_0) = f(f(x_0,x_1),x_0)  =f(x_0,x_0) = x_0 \]
and so, as $x_0 \neq x_1$, $\neg r_1 B r_0$.
\item  For the converse, suppose that $\neg B$  is strongly connected. It suffices to prove that every
$f = (\aux_B)_{X,\gamma}\type \T(X)^2 \rightarrow \T(X)$ is right-biased. Choose $u_i \in \T(X)$ ($i = 0,1$) and set
$r_i = \ER{\T}{u_i}{\gamma}$. %
Suppose first that $\neg r_0 B r_1$ holds. Then we have that
$ f(f(x_0,x_1),x_0) = f(x_1,x_0) $.
Otherwise, as $B$ is strongly connected, we have $r_0 B r_1$ and $\neg\, r_1 B r_0$, and so,
$f(u_0,f(u_1,u_0)) =  f(u_0,u_0) = u_0 = f(u_0,u_1)$.
So in either case we have
$f(u_0,f(u_1,u_0)) =  f(u_0,u_1) $
 as required.

\end{enumerate}
\item Choose $r \in \R$.   Define $\gamma\type [2] \rightarrow \R$ by:
 $\gamma(0) = \gamma(1) = r$, and set $f= (\aux_B)_{\gamma, [2]}$ and $x_i = (\eta_{\S_\T})(i)$, for $i =0,1$. Note that $x_0 \neq x_1$. By Proposition~\ref{geneq} it suffices to prove that $f$ is not commutative.

 In case  $rBr$ holds, we have:
\[f(x_0, x_1) = x_0 \neq x_1 = f(x_1, x_0)\]
In case  $rBr$ does not hold, we have:
\[f(x_0, x_1) = x_1 \neq x_0 = f(x_1, x_0)\]
In either case $(\op_B)_{[2]}$ is not commutative.	 \qedhere
\end{enumerate}
\end{proof}

\noindent
Given a binary relation $B$ on $\R$ and an $\S_\T$-algebraic operation $\op\type [n] $, we say that
$\op$ \emph{distributes} over $op_B$ iff for all $X$, $i \in [n]$, $F_j \in \S_\T(X) \; (j\in [n], j \neq i)$, and $G_0,G_1 \in \S_\T(X)$,  we have:
\[\begin{array}{l}\op_X(F_0,\ldots,F_{i-1}, op_B(G_0,G_1), F_{i+1},\ldots,F_{n-1} )  = \\
\hspace{30pt} op_B(\op_X(F_0,\ldots,F_{i-1}, G_0, F_{i+1},\ldots,F_{n-1} ),\op_X(F_0,\ldots,F_{i-1}, G_1, F_{i+1},\ldots,F_{n-1} ))
\end{array} \]
Also, given a binary relation $B$ on $\R$ and a function $f\type \R^n$ we say that $f$ \emph{distributes} over $B$ iff
for all
$0 \leq i  < n$, $r_j \in \R \; (0 \leq j  < n, j \neq i)$, and $s_0,s_1 \in \R$ we have:
\[s \, B \, t \iff f(r_0,\ldots,r_{i-1}, s_0, r_{i+1},\ldots,r_{n-1}) \, B \, f(r_0,\ldots,r_{i-1}, s_1, r_{i+1},\ldots,r_{n-1})\]
We say that an $n$-ary function $f\type \R^n \rightarrow \R$, where $n \geq 1$,  \emph{distributes} over a binary relation $B$ on $\R$ iff it preserves and reflects $B$ in each argument, i.e., iff for $1 \leq i \leq n$ and  $x_1,\ldots,x_{i-1}, y_0,y_1, x_{i+1},\ldots,x_n\in \R$ we have:
\[B(f(x_1,\ldots,x_{i-1}, y_0, x_{i+1},\ldots,x_n), f(x_1,\ldots,x_{i-1}, y_1, x_{i+1},\ldots,x_n))\;\iff\; B(y_0,y_1)\]
\begin{thm}\label{distributes} Let $\op\type [n] $ be a $\T$-algebraic operation,  and  let $B$ be a binary relation on $\R$. If $\op_\R$  distributes over  $B$ then $\wop$ distributes over $op_B$.
\end{thm}
\begin{proof} To keep notation simple we  suppose that $\wop$ is binary and establish distributivity in its second argument. That is, we prove, for any $X$, that:
\[\wop(F, op_B(G,H)) = op_B(\wop(F, G),\wop(F,H))\quad (F,G,H \in \S_\T(X))\]
To do so we use Proposition~\ref{geneq}  and establish the corresponding equation for the auxiliary functions of these operations.
The auxiliary function $\aux_{X,\gamma}\type  \T(X)^2 \rightarrow \T(X)$ of $\wop$ is $\op_X$. So we need to show for any $\gamma\type X \rightarrow \R$ that
\[\op(u, (\aux_B)_{X,\gamma}(v_0,v_1)) = (\aux_B)_{X,\gamma}(\op(u, v_0),\op(u,v_1))\quad (u,v_0,v_1 \in \T(X))\]
From the definition of the auxiliary function of $op_B$ we see that each side of this equation is either $\op(u, v_0)$ or
$\op(u, v_1)$, and that the LHS is $\op(u, v_0)$ iff
\[\ER{\T}{v_0}{\gamma} \, B \, \ER{\T}{v_1}{\gamma}  \qquad (*)\]
and that the RHS is $\op(u, v_0)$ iff
\[\ER{\T}{\op(u, v_0)}{\gamma} \, B \, \ER{\T}{\op(u, v_1)}{\gamma} \qquad (**)\]
As both $\T(\gamma)$ and $\alpha_\T$ are homomorphisms, so is $\ER{\T}{-}{\gamma} = \alpha_\T\circ \T(\gamma)$, and so this last condition is equivalent to:
\[\op_\R(\ER{\T}{u}{\gamma}, \ER{\T}{v_0}{\gamma}) \, B \, \op_\R(\ER{\T}{u}{\gamma}, \ER{\T}{v_1}{\gamma})\]
and we see, using the fact that $\op_\R$  distributes over $B$, that the conditions $(*)$ and  $(**)$ are equivalent. \qedhere

 \end{proof}

\section{A general language with algebraic operations}\label{basic}

The goal of this section is to give some definitions and results---in
particular an adequacy theorem---for a general language with algebraic
operations. We treat our two languages  of later sections as instances of this language via such algebraic operations.

\subsection{Syntax}\label{gen-syntax}

We make use of a standard call-by-value $\lambda$-calculus equipped with algebraic operations.
Our language is a convenient variant of the one in~\cite{PP01} (itself building on  Moggi's computational $\lambda$-calculus~\cite{Moggi89}). The somewhat minor  differences from~\cite{PP01} are that we allow a variety of base types, our algebraic operations may have parameters, and 
we make use of  general big-step transition relations as well as small-step ones.

The types $\sigma,\tau, \ldots$ and terms $L,M,N,\ldots$ of our language are built from:
 \begin{itemize}
 \item[-] a   \emph{basic vocabulary}, consisting  of:
 \begin{enumerate}
 \item \emph{base types}, $b$
(including $\bool$);
\item \emph{constants}, $c \type  b$ of given base types $b$
(including $\true, \false \type \bool$); and
\item  first-order
 \emph{function symbols}, $f  \type   b_1\ldots b_m   \rightarrow   b$, of given arity $b_1\ldots b_m$ and co-arity $b$ (including equality symbols $=_b \type   b\,  \times \,  b  \rightarrow  \bool$),
\end{enumerate}
 together with
 \item[-]  \emph{algebraic operation symbols}    $\op\!\type\! b_1\ldots b_n;m$,
 with given parameter base types $b_1,\ldots , b_n$ and arity $m \in \mathbb{N}$.
 \end{itemize}
 The types are given by:
 \[\sigma \BEQ  b \BOR \unit \BOR \sigma \times \sigma \BOR \sigma \rightarrow \sigma\]
and the terms are given by:
\[\begin{array}{lcl} M  & \BEQ  &  x \BOR c \BOR f(M_1,\ldots,M_m) \BOR \myif L \mythen M \myelse N \BOR \\
&& \op( N_1,\ldots,N_n; M_1,\ldots,M_m)  \BOR \\
 &&\ast \BOR \tuple{M,N} \BOR \fst(M) \BOR \snd(M) \BOR\lambda x\type \sigma.\, M \BOR MN
 \end{array}\]

The languages considered in the next two sections provide examples of this general setup.
We write $\BT$ for the set of base types and $\Con_b$ for the set of constants of type $b$.
We define the \emph{order} (or \emph{rank}) of types  by:
\[\order(b) = \order(\unit) = 0 \qquad \order(\sigma \times \tau) = \max{}{(\order(\sigma),\order(\tau))} \qquad \order(\sigma \rightarrow \tau) = \max{}{(\order(\sigma)+1,\order(\tau))}\]

We work up to $\alpha$-equivalence, as usual, and free variables and substitution are also defined as usual. The typing rules are standard, and omitted, except for that for the algebraic operation symbols, which, aside from their parameters, are polymorphic:
\[\frac{\Gamma \vdash N_1\type b_1, \ldots, \Gamma \vdash N_n\type b_n\quad \Gamma \vdash M_1\type \sigma, \ldots ,\Gamma \vdash M_m\type \sigma}{\Gamma \vdash \op(N_1,\ldots,N_n; M_1,\ldots,M_m) \type \sigma}
\qquad (\op\type b_1\ldots b_n;m)\]
where $\Gamma = x_1\type \sigma_1,\ldots, x_n\type \sigma_n$ is an environment. We write $M\type \sigma $ for $\vdash M \type \sigma$ and say then that the (closed) term $M$ is \emph{well-typed}; such terms are the \emph{programs} of our language.
We employ standard  notation, for example for local definitions writing
$\mylet x\type \sigma \mybe M \myin N$
for  $(\lambda x\type \sigma.\, N)M$.
We also use a cases form
\[\mathtt{cases} \, M_1 \!\Rightarrow\! N_1\,  \mid \, \ldots \, \mid \, M_n\! \Rightarrow \! N_n \myelse N_{n+1} \; (\mbox{for }n \geq 0)\]
defined by iterated conditionals (where the $M_i$ are boolean).

Moggi's language has local definitions and computational types $T\sigma$ (with associated term syntax)  as primitives; these can be viewed as abbreviations in our language,  in particular setting $T\sigma = 1\rightarrow \sigma$.

\subsection{Operational semantics}\label{gen-opsem}

The operational semantics of programs is given in three parts: a small-step semantics, a big-step semantics, and an evaluation function.
We make use of evaluation contexts, following~\cite{FF87}. The set of \emph{values} $V,W,\ldots $ is given by:
\[V \BEQ c \BOR \ast \BOR \tuple{V,W} \BOR \lambda x\type\sigma.\, M\]
where we restrict  $\lambda x\type\sigma.\, M$ to be closed.  We write  $\Val_\sigma$ for the set of values of type $\sigma$, i.e., the $V$ such that $V\type\sigma$.

The \emph{evaluation contexts} are given by:
\[\begin{array}{lcl}\E & \BEQ & [\;] \BOR f(c_1,\ldots,c_{k-1}, \E, M_{k+1}, \ldots, M_m) \BOR  \myif \E \mythen M \myelse N \BOR\\
&&  \op(c_1,\ldots,c_{k-1}, \E, N_{k+1}, \ldots, N_n; M_1,\ldots,M_m)\\
&&   \tuple{\E,N} \BOR \tuple{V,\E}\BOR \fst(\E) \BOR \snd(\E)  \BOR \E N
\BOR (\lambda x\type \sigma.\, M)\E
\end{array}\]
and are restricted to be closed.
The \emph{redexes} are defined by:
\[\begin{array}{lcl} R & \BEQ & f(c_1,\ldots,c_m) \BOR  \myif   \true \mythen M \myelse N \BOR \myif \false \mythen M \myelse N \BOR\\
&& \op(c_1,\ldots,{c_n};M_1,\ldots,M_m) \\
&&  \fst(\tuple{V,W})\BOR \snd(\tuple{V,W}) \BOR (\lambda x\type \sigma.\, M)V
 \end{array}\]
and are restricted to be closed. Any
program is of one of two mutually exclusive forms: it is either a value $V$ or else has the form $\E[R]$ for a unique evaluation context $\E$ and redex~$R$.

 We define two small-step transition relations on redexes, \emph{ordinary} transition relations
 and algebraic operation symbol transition relations:
 \[ R \rightarrow N \qquad \mbox{and} \qquad R \xrightarrow[\op_i]{c_1,\ldots,c_n} N \;\; (\op\type b_1\ldots b_n;m {\rm \ and\ } i=1,m)\]
 The idea of the algebraic operation symbol transitions is to indicate with which parameters an operation is being executed, and which of its arguments is then being followed.
 The definition of the first kind of transition is standard; we just mention that for each function symbol  $f\type b_1\dots b_m \rightarrow b$
 and constants $c_1\type b_1,\ldots, c_m\type b_n$, we assume we are given a constant $\val_{f}(c_1, \ldots, c_m)\type b$, where, in the case of equality, we have:
 \[\val_{=_b}(c_1,c_2) = \left \{\begin{array}{ll} \true & (\mbox{if $c_1 = c_2$})\\
                                                                         \false & (\mbox{otherwise})\end{array} \right.\]
We then have the ordinary transitions:
 \[ f(c_1,\ldots,c_m) \rightarrow c \quad (\val_f(c_1, \ldots, c_m) = c)\]
 The algebraic operation symbol transition relations are  given by the following rule:
\[\op(c_1,\ldots,c_n;M_1,\ldots,M_m) \xrightarrow[\op_i]{c_1,\ldots,c_n} M_i\]
 We next extend these transition relations to corresponding ordinary and algebraic operation symbol transition relations on programs
 \[M \rightarrow M' \quad \mbox{and} \quad M \xrightarrow[\op_i]{c_1,\ldots,c_n} M'\]
 To do so, we use evaluation contexts in a standard way by means of the following rules:
  \[\frac{R \rightarrow M'}{\E[R] \rightarrow \E[M']} \qquad \quad
   \frac{R \xrightarrow[\op_i]{c_1,\ldots,c_n} M'}{\E[R] \xrightarrow[\op_i]{c_1,\ldots,c_n} \E[M']}\]
  These transition relations are all deterministic.

   For any program $M$ which is not a value, exactly one of two mutually exclusive possibilities holds:
\begin{itemize}
\item[-]  For some program $M'$ \[M\rightarrow M'\] In this case $M'$ is determined and of the same type as
  $M$.
  \item[-]   For some $\op\type b_1\ldots b_n;m$ and
  $c_1\type b_1,\ldots,c_n\type b_n$ \[M \xrightarrow[\op_i]{c_1,\ldots,c_n} M_i\] for all $i = 1,\ldots, n$ and some $M_i$.
  In this case $\op$, the $c_j$ and the $M_i$ are uniquely determined and the $M_i$ have the same type as $M$.
  \end{itemize}
    We say a program $M$ is \emph{terminating} if there is no infinite chain of (small-step) transitions from $M$.
  \begin{lem}\label{ter} Every program is terminating.
  \end{lem}
  \begin{proof} This is a standard computability argument; see the proof of Theorem 1 in~\cite{PP01} for some detail.
  One defines a computability predicate on values by induction on types, and then extends it to well-typed terms by taking such a term $M$ to be computable if there is no infinite chain of
  (small-step) transitions from $M$, and every terminating sequence of small-step transitions from $M$ ends in a computable value.
  \end{proof}

  Using the small-step relations one defines big-step ordinary and algebraic operation symbol transition relations  by:
  \[\frac{M \rightarrow^* V}{M \Rightarrow V}  \qquad
  \frac{M \rightarrow^* M' \quad M' \xrightarrow[\op_i]{c_1,\ldots,c_n} M''}
          {M \xRightarrow[\op_i]{c_1,\ldots,c_n} M''}\]
   For any program $M$ which is not a value, similarly to the case of the small-step relations, exactly one of two mutually exclusive possibilities holds:
\begin{itemize}
\item[-]  For some value $V$ \[M\Rightarrow V\] In this case $V$ is determined and of the same type as
  $M$.
  \item[-]   For some $\op\type b_1\ldots b_n;m$ and
  $c_1\type b_1,\ldots,c_n\type b_n$ \[M \xRightarrow[\op_i]{c_1,\ldots,c_n} M_i\] for all $i = 1,\ldots,n$ and some $M_i$.
  In this case $\op$, the $c_j$ and the $M_i$ are uniquely determined and the $M_i$ have the same type as $M$.
  \end{itemize}
  The big-step transition relations from a given program $M$ form a finite tree with values at the leafs, with all transitions, except for those leading to values, being algebraic operation symbol transitions, and with transitions of algebraic operation symbols  of type $(w;n)$ branching $n$-fold. We write $\norm{M}$ for the height of this tree.

 Rather than use trees, we follow~\cite{PP01} and use \emph{effect values}  $E$. These give the same information and, conveniently, form a subset of our programs. They are defined as follows:
 \[E \BEQ V \BOR \op(c_1,\ldots,c_n; E_1,\ldots,E_m)\]
 (Our effect values are a finitary version of the interaction trees of~\cite{XiaZHHMPZ20}).
 Every program
 $M\type\sigma$ has an effect value $\Op(M)\type\sigma$ defined using the big-step transition relations:
 \[\Op(M) = \left \{ \begin{array}{ll} V & (\mbox{if\ }M \Rightarrow V)\\
                                                         \op(c_1,\ldots,c_n; \Op(M_1),\ldots,\Op(M_m)) &
                                            ( \mbox{if\ }M \xRightarrow[\op_i]{c_1,\ldots,c_n} M_i \mbox{ for } i = 1,m)
                               \end{array}
                     \right.\]
This definition is justified by induction on  $\norm{M}$.
Note that $\Op(E) = E$, for any effect value  $E\type \sigma$. Further, program transitions and evaluations closely parallel each other, indeed:
\begin{equation}\label{opequiv1} M \Rightarrow V \iff \Op(M) = V \end{equation}
and
\begin{equation}\label{opequiv2} M \xRightarrow[c_1,\ldots, c_n]{\op_i} M_i \iff \Op(M) \xRightarrow[c_1,\ldots, c_n]{\op_i} \Op(M_i)\end{equation}

We next give a proof-theoretic account of the evaluation function $\Op$ to help us prove our general adequacy theorem. There is a natural equational theory for the operational semantics, with evident rules, which establishes judgments of the form $ \vdash_o  M = N\type \sigma$, taken to be well-formed in case  $ M\type \sigma$  and  $ N\type \sigma$. The axioms are the small-step reductions for the redexes together with a commutation schema that algebraic operations commute with evaluation contexts; they are given (omitting type information) in Figure~\ref{axioms}.
\begin{figure}[h]
\[\begin{array}{c}
f(c_1,\ldots,c_m) = c \quad (\val_f(c_1, \ldots, c_m) = c)\\\\
\myif   \true \mythen M \myelse N = M \qquad  \myif \false \mythen M \myelse N = N\\\\
 \fst(\tuple{V,W})= V \qquad  \snd(\tuple{V,W}) = W\\\\
  (\lambda x\type \sigma.\, M)V = M[V/x]\\\\
  \E[\op(c_1,\ldots, c_n; M_1,\ldots, M_m)] =   \op(c_1,\ldots, c_n; \E[M_1],\ldots, \E[M_m])\\
\end{array}\]
\caption{Axioms}%
 \label{axioms}
 \end{figure}

  \begin{lem}\label{proof-lemma} For any well-typed term $M\type \sigma$ we have:
  \begin{enumerate}
  \item \(M \rightarrow M' \implies \vdash_o M = M'\type\sigma \)
  \item
  \(M \xrightarrow[\op_i]{c_1,\ldots,c_n} M_i, \mbox{ for } i = 1,\ldots,m \implies
  \vdash_o M = \op(c_1,\ldots,c_n; M_1,\ldots, M_m)\type\sigma\)

  \item  \(M \Rightarrow V \implies \vdash_o M = V \type\sigma \)
  \item
    \(M \xRightarrow[\op_i]{c_1,\ldots,c_n} M_i, \mbox{ for } i = 1,\ldots,m \implies
  \vdash_o M = \op(c_1,\ldots,c_n; M_1,\ldots, M_m)\type\sigma\)
  \end{enumerate}
  \end{lem}
  \noindent
The following proposition is an immediate consequence of this lemma:
  \begin{prop}\label{proof} For any program $M\type \sigma$ we have:
  \[\vdash_o M = \Op(M)\type\sigma\]
  \end{prop}

There is a useful substitution lemma. Given any effect value $E\type b$, a nonempty finite set $u \subseteq \Con_b$ that includes all the constants of type $b$ in $E$, and a function $g$ from $u$ to programs of type $b'$, 
 $E[g]\type b'$, the substitution $g$ of programs for constants,  is defined homomorphically by:
\[\begin{array}{ccc}
c[g]& = & g(c) \\
 \op(c_1,\ldots,c_n; E_1,\ldots,E_m)[g] & = &  \op(c_1,\ldots,c_n; E_1[g],\ldots,E_m[g])\\
 \end{array}\]
Let $c_1,\ldots,c_n$ enumerate $u$ (the order does not matter) and define $\myF_g\type b \rightarrow b'$ to be
\[\lambda x:b.\, \mathtt{cases} \, x = c_1 \Rightarrow g(c_1)\,  \mid \, \ldots \, \mid \, x = c_{n-1} \Rightarrow g(c_{n-1}) \myelse g(c_{n})\]

With this notation we have:
\begin{lem}\label{consub}
\[\Op(\myF_gE) = \Op(E[g])\]
\end{lem}
\begin{proof} The proof is a structural  induction on $E$.  For $E$ a constant $c$ we have:
\[\begin{array}{lcl}\Op(\myF_gE) & = & \Op(\mathtt{cases} \, c = c_1 \Rightarrow g(c_1)\,  \mid \, \ldots \, \mid \, c = c_{n-1} \Rightarrow g(c_{n-1}) \myelse g(c_{n}))\\
& = & \Op(g(c))\\ & = & \Op(c[g])
\end{array}\]
and for $E$ of the form $\op(c_1,\ldots,c_n; E_1,\ldots,E_m)$ we have:
\begin{align*}
\Op(\myF_gE) &=  \op(c_1,\ldots,c_n; \Op(\myF_gE_1),\ldots,\Op(\myF_gE_m))\\
& = \op(c_1,\ldots,c_n; \Op(E_1[g]),\ldots,\Op(E_m[g]))\\
& = \Op(\op(c_1,\ldots,c_n; E_1[g],\ldots,E_m[g])\\
& = \Op(\op(c_1,\ldots,c_n; E_1,\ldots,E_m)[g])\\
& = \Op(E[g]) \qedhere
\end{align*}
\end{proof}

  \subsection{Denotational semantics}\label{gen-densem}
  The semantics of our language makes use of a given strong monad, following that of Moggi’s computational $\lambda$-calculus~\cite{Moggi89}.  In order to be able to give semantics to effectual operations we use the apparatus of generic effects and algebraic operations as discussed above.
  For the sake of simplicity we work in the category of sets, although the results go through much more generally, for example in any cartesian closed category with binary sums.

  To give the semantics of our language a number of ingredients are needed. We assume given:
  \begin{itemize}
 \item[-] a (necessarily) strong monad $\M$ on the category of sets,
  \item[-] nonempty sets $\sden{b}$ for the base types $b$ (with $\sden{\bool} = \B \eqdef \{0,1\}$),
  \item[-] elements $\sden{c}$ of $\sden{b}$ for constants $c\type b$ (with $\sden{\true} = 1$ and $\sden{\false} = 0$),
  \item[-]  functions $\sden{f} \type \sden{b_1} \times \cdots \times \sden{b_m} \rightarrow \sden{b}$ for function symbols $f \type b_1\ldots b_m \rightarrow b$, and
  \item[-] generic effects
\[g_{\op}\type \sden{b_1}\times \cdots \times \sden{b_n} \rightarrow \M([n])\]
for algebraic operation symbols $\op\type b_1\ldots b_n;m$.
  \end{itemize}
  We further assume that different constants of the same type receive different denotations, i.e., the
  $\sden{\mbox{-}}\type \Val_b \rightarrow \sden{b}$ are 1--1 (so we can think of constants as just names for their denotations, just as one thinks of numerals), and that the given denotations of function symbols are consistent with their operational semantics in that:
  \begin{equation}\val_f(c_1, \ldots, c_m) = c \implies \sden{f}(\sden{c_1}, \ldots, \sden{c_m}) = \sden{c}\label{opdencon}\end{equation}

  With these ingredients, we can give our language its semantics. Types are interpreted by putting:
  \[\begin{array}{lcl}\den{\Msem}{b} & = & \sden{b}\\
  \den{\Msem}{\sigma \times \tau} & = & \den{\Msem}{\sigma} \times \den{\Msem}{\tau}\\
  \den{\Msem}{\sigma \rightarrow  \tau} & = & \den{\Msem}{\sigma} \rightarrow \M(\den{\Msem}{\tau})
  \end{array}\]
  To every  term
  \[\Gamma \vdash N\type \sigma\]
  we associate a function
  \[\den{\Msem}{\Gamma \vdash N\type \sigma}\type \den{\Msem}{\Gamma} \rightarrow \M(\den{\Msem}{\sigma})\]
  where $\den{\Msem}{x_1:\sigma_1,\ldots,x_n:\sigma_n} \eqdef \den{\Msem}{\sigma_1} \times \cdots \times \den{\Msem}{\sigma_n}$. When the typing $\Gamma \vdash N\type \sigma$ is understood, we generally write $\den{\Msem}{N}$ rather than $\den{\Msem}{\Gamma \vdash N\type \sigma}$.

  The semantic clauses for conditionals and the product and function space terms are standard, and we omit them.
For constants $c\type b$ we put:
\[\den{\Msem}{c}(\rho) \;=\;   {(\eta_\M)}_{\sden{b}}(\sden{c})\]
For function symbol applications $f(M_1,\ldots,M_m)$, where $f\type b_1\ldots b_m \rightarrow b$, we put:
\[\den{\Msem}{f(M_1,\ldots,M_m)}(\rho)\;=\; \sden{f}^{\sim}(\sden{\Msem}{(M_1)}(\rho),\ldots, \sden{\Msem}{(M_m)}(\rho))\]
where
\[\sden{f}^{\sim} \type \M(\sden{b_1}) \times \cdots \times \M(\sden{b_m})\rightarrow \M(\sden{b})\]
is obtained from $\sden{f}$ in a standard way e.g., via iterated Kleisli extension.
 For
  terms $\Gamma \vdash \op(N_1,\ldots, N_n; M_1,\ldots,M_m)\type \sigma$,
 where  $\op\type b_1\ldots b_n;m$, we  make use of the algebraic operation
  \[\op_X\type (\sden{b_1} \times \cdots \times \sden{b_n}) \times \M(X)^{[m]} \rightarrow \M(X)\]
 corresponding to the generic effects $g_{\op}$  and put:
  \[
  \begin{array}{l}\den{\Msem}{\op(N_1,\ldots, N_n; M_1,\ldots,M_m)}(\rho) =\\
  \qquad\qquad\quad \op^\dagger_{\sden{\sigma}}(\tuple{\den{\Msem}{N_1}(\rho),\ldots, \den{\Msem}{N_n}(\rho)},
    \tuple{\den{\Msem}{M_1}(\rho),\ldots, \den{\Msem}{M_m}(\rho)})
  \end{array}\]
  where $\op^\dagger_{\sden{\sigma}}$ is again defined in a standard way, as discussed in Section~\ref{genalgops}.
We further give values $V\!\type \!\sigma$ an \emph{effect-free}  (or \emph{pure}) semantics $\den{\Msem_p}{V} \!\in \!\den{\Msem}{\sigma}$:
\[\begin{array}{lcl}
\den{\Msem_p}{c} & =  &  \sden{c}\\
\den{\Msem_p}{\tuple{V,V'}} & =  &  \tuple{\den{\Msem_p}{V},\den{\Msem_p}{V'}}\\
\den{\Msem_p}{\lambda x\type\tau.\, N} & =  & \den{\Msem}{x\type\tau\vdash  N\type \tau'}
\end{array} \]
  This effect-free semantics of values $V\!\type \!\sigma$ determines their denotational semantics:
  \[\den{\Msem}{V}(\rho)= (\eta_{\M})_{\den{\Msem}{\sigma}}(\den{\Msem_p}{V})\]
 Below, we regard the effect-free semantics as providing functions:
\[\Msem_p\type \Val_{\sigma} \rightarrow \den{\Msem}{\sigma}\]
  \subsection{Adequacy}\label{gen-ad}

  Our proof system is consistent relative to our denotational semantics:
 \begin{lem}\label{consistent} If $\vdash_o M = N\type\sigma$ then $\den{\Msem}{M} = \den{\Msem}{N}$.
 \end{lem}
The proof of this lemma uses the naturality condition
on algebraic operations  to establish the soundness of the commutation schema.

Our general adequacy theorem  is an immediate consequence of  Proposition~\ref{proof} and Lemma~\ref{consistent}:
  \begin{thm}\label{basic-ad} For any program $N$ we have:
  $\den{\Msem}{N} = \den{\Msem}{\Op(N)}$.
  \end{thm}
  This adequacy theorem differs somewhat from the usual ones where the denotational semantics determines termination and the denotation of any final result; further, for base types they generally determine the  value produced by the operational semantics. In our case the first part is not relevant as terms always terminate. We do have that the denotational semantics determines the denotation of any final result. For base types (as at any type) it determines the effect values produced up to their denotation, though the extent of that determination  depends on the choice of the generic effects.

  \subsection{Program equivalences and purity}\label{gen-equations}

  The equational system of Section~\ref{gen-opsem}, helps prove adequacy, but  is too weak  for our purposes which are to establish completeness results for programs of base type. Moggi gave a suitable consistent and complete system  for his computational $\lambda$-calculus in~\cite{Moggi89}. His system has equational  assertions $\Gamma \vdash M = N\type \sigma$ and  purity (meaning effect-free) assertions   $\Gamma \vdash M\downarrow_{\sigma}$; we always assume that the terms are appropriately typed, and may omit types or environments when the context makes them clear. One can substitute a term $M$ for a variable  in Moggi's system only if one can prove $M\downarrow_{\sigma}$.

 Our $\lambda$-calculus is an extension of Moggi's and we extend his logic correspondingly; an alternate approach, well worth pursuing, would be to use instead the purely equational fine-grained variant of the computational $\lambda$-calculus: see~\cite{LevyPT03}. We keep Moggi's axioms and rules, other than those for computational types $T\sigma$, but extended to our language. (If we set $T\sigma = 1 \rightarrow \sigma$, then the rules for computational types, extended to our language, are derived.)

  For conditionals we add:
 \[ \myif \true \mythen M \myelse N = M \qquad  \myif \false \mythen M \myelse N = N\]
 \[ u(x) = \myif x \mythen u(\true) \myelse u(\false)\]
 For the algebraic operations we add two equations,
 one:
 \begin{equation}\label{scontext} u(\op(y_1,\ldots,y_n; M_1,\ldots, M_n)) = \op(y_1,\ldots,y_n; u(M_1),\ldots,u(M_n))\end{equation}
 expressing their naturality (and generalizing the commutation schema of Figure~\ref{axioms}), and the other:
\begin{equation}\label{parord} {\small \begin{array}{l}\op(N_1,\ldots,N_n; M_1,\ldots,M_m) =
 \begin{array}{ll}\mathtt{let}\,  y_1\!\type\! b_1,\ldots, y_n\!\type\! b_n\, \mathtt{be}\, N_1,\ldots,N_n\\
 \mathtt{in}\,  \op(y_1,\ldots,y_n; M_1,\ldots,M_m)
 \end{array}\;  (\mbox{no $y_j$ in any $\FV(M_i)$})
 \end{array}}\end{equation}
expressing the order of evaluation of the parameter arguments of $\op$. For function symbols and constants we add the purity axiom $c\downarrow$ and the equation in Figure~\ref{axioms}. This equation enables us to evaluate function symbol applications to constants within our proof system. One could certainly add further useful axioms and rules (e.g., that some function on base types is commutative or a form of induction if the natural numbers were a base type); indeed it would be natural to extend to a predicate logic. However, such extensions are not needed for our purposes.

We write
\[\Gamma \vdash_{\Ax} M = N\type \sigma \;\;\mbox{and}\;\; \Gamma \vdash_{\Ax} M\downarrow_\sigma\]
to mean $M=N$ (resp.\ $M\downarrow_\sigma$) is provable from a  set of equational or purity axioms $\Ax$ (where $\Gamma \vdash M\type \sigma$ and $\Gamma \vdash N\type \sigma$). In particular all the axioms of Figure~\ref{axioms} are provable.
An equational  assertion is true (or holds) in $\Msem$,
written $\Gamma \models_{\Msem} M = N\type\sigma$
if $\den{\Msem}{M} = \den{\Msem}{N}$;
similarly, a purity assertion is true (or holds) in $\Msem$,
written $\Gamma\models_{\Msem} M\downarrow_{\sigma}$, if $\exists a \in \den{\Msem}{\sigma}.\, \den{\Msem}{M}(\rho) = \eta_{M}(a)$.
A \emph{theory}, i.e., a set of axioms, $\Ax$ is \emph{valid} in $\Msem$ if all the assertions in $\Ax$ are true in $\Msem$.

\emph{Equational consistency} holds, meaning that, if a theory $\Ax$ is valid in $\Msem$ then:
\[\Gamma \vdash_{\Ax} M = N\type \sigma \implies \Gamma \models_{\Msem} M = N\type \sigma\]
as does the analogous \emph{purity consistency}.

We can use $\Ax$ to give axioms for particular algebraic operations. For example, we  consider languages  with a binary decision algebraic operation symbol $\myor\type \varepsilon; 2$ with semantics given by the algebraic operation family $\myor_X$ of Definition~\ref{ordef}. Here the associative axioms
\[(L\,\myor\, M) \, \myor \, N = L\,\myor\, (M \myor \, N)\]
hold at all types as, by Theorem~\ref{genax}, every component $\myor_X$ is associative. We will do this extensively for our two languages, as in Figures~\ref{termequivs} and~\ref{ptermequivs}, below.

  \section{A  language of choices and rewards}\label{first}

  Building on the framework of Section~\ref{basic}, in this section we define and study a language with constructs for choices and rewards.

  \subsection{Syntax}\label{subsec:first-syntax}

  For the basic vocabulary of our language,  in addition to the boolean primitives of Section~\ref{gen-syntax}, we assume available: a base type $\reward$; a constant $0\type \reward$; and function symbols $+\type \reward\,\reward\rightarrow \reward$ and $\leq\type \reward \reward \rightarrow \bool$. There are exactly  two  algebraic operation symbols: a \emph{choice operation} $\myor\type \varepsilon;2$ to make binary choices, and a \emph{reward operation} $\give\type \reward; 1$, to prescribe rewards.
  We leave any other base type symbols, constants, or function symbols unspecified.

  We may use infix for $+$ and $\leq$. Similarly,  
 we may use infix notations   $M_0 \, \myor\, M_1$ or $N\cdot M$ for the algebraic operation terms $\myor(; M_0,M_1)$ and $\give(N;M)$.
  The signature $\myor\,\type\, \varepsilon;2$ means that $M_0$ and $M_1$ must have the same type and that is then the type of $M_0 \hspace{0.75pt}\myor\hspace{0.75pt} M_1$; the signature $\give\type \reward; 1$ means $N\cdot M$ has the same type as $M$ and that $N$ must be of type $\reward$.
For example, assuming that $5$ and $6$ are two constants of
  type $\reward$, we may write the tiny program:
\[(5 \cdot \true) \hspace{1pt}\myor\hspace{0.75pt} (6 \cdot \false)\type \bool\]
Intuitively, this program could potentially return either $\true$ or $\false$, with respective rewards $5$ and $6$. In the intended semantics that maximizes rewards, then, the program returns $\false$ with reward $6$.

When designing our language, we could as well have used
choice functions of any finite arity, as in the example in Figure~\ref{smart-ex}.
However we felt that binary choice
was sufficiently illustrative.

  \subsection{Rewards and additional effects}\label{subsec:firstrewards}

 For both the operational and denotational semantics of our language we need  a set of rewards $\R$ with appropriate structure and a monad employing it.
 So, we assume such a set $\R$ is available, and that it is  equipped with:
 \begin{itemize}
 \item a commutative monoid structure, written additively, and
 \item  a total order with addition preserving and reflecting the order in its first argument (and so, too, in its second), in that, for all $r,s,t \in \R$:
\[r \leq s \iff r + t \leq s + t\]
 \end{itemize}
For example, $\R$ could be the  reals (or the nonnegative reals) with addition, or the positive reals with multiplication, in all cases with the usual order.
We further assume that there is an element $\sden{c}$ of $\R$ for each $c\type \Rew$ (with, in particular, $\sden{0} = 0$), and that  $\R$ is \emph{expressively non-trivial} in that there is a $c\type\Rew$ with $\sden{c} \neq 0$.

Our monad is the so-called \emph{writer} monad $\W(X) = \R \times X$, defined using the commutative monoid structure on $\R$.
The operational semantics defined below evaluates programs $M$ of type $\sigma$ to pairs $\tuple{r,V}$, with $r \in \R$ and $V\type\sigma$, that is to elements of $\W(\Val_\sigma)$. The denotational semantics uses the selection monad augmented with  the writer monad, as described in Section~\ref{selection}.

The writer monad is the free-algebra monad for  $\R$-actions, i.e., the algebras with an $\R$-indexed family of unary operations, which we write as $\give(r, -)$ or $r \cdot -$, satisfying the equations
\begin{equation} \label {reward-eqn}0 \cdot x = x \qquad r \cdot s \cdot x = (r +s) \cdot x\end{equation}
The resulting algebraic operation $(\give_W)_X\type \R \times \W(X) \rightarrow \W(X)$ is given by:
\[(\give_W)_X(r, \tuple{s,x}) = \tuple{r+s,x}\]
and is induced by the generic effect $(g_{\W})_{\give}: \R \rightarrow \W([1])$, where $(g_{\W})_{\give}(r) \eqdef \tuple{r,\ast}$.
We generally write applications of $(\give_{\W})_X$ using an infix operator, $(\cdot_{\hspace{0.3pt}\W})_X$, and, in either case, may drop subscripts when they can be understood from the context.  As $\R$ is itself an $\R$-action (setting $r\cdot s = r + s$), we obtain a $\W$-algebra  $\alpha_\W\type \W(\R) \rightarrow \R$  as described in Section~\ref{genalgops},  finding that $\alpha_\W = +$.

\subsection{Operational semantics}\label{sopsem1}

While the operational semantics of Section~\ref{basic} is ordinary and does not address optimization, the
\emph{selection operational semantics} selects  an optimal choice strategy, 
as suggested in the Introduction.
Below we prove an adequacy
  result relative to a  denotational semantics using the selection monad $\S_{\W}$. We thereby give a compositional account of a global quantity: the optimal reward of a program.

 For the ordinary operational semantics, we assume available functions $\val_f$ for the function symbols $f$ of the basic vocabulary, as discussed in Section~\ref{gen-opsem}. The global operational semantics selects strategies maximizing the reward they obtain. To define such strategies we employ the version of $\argmax$ defined in Section~\ref{selection}: given a finite  totally-ordered set $S$ and a reward function
$\gamma\! \type\! S \rightarrow \R$,  $\argmax_X(\gamma)$ selects the least $s \in S$ maximizing $\gamma(s)$.
So,  totally ordering $S$ by:
 \[s \preceq_\gamma s'\; \iff\;\gamma(s) > \gamma(s') \vee (\gamma(s) = \gamma(s') \wedge s \leq s')\]
 the selection is of the least element in this  total order.
 It is convenient to use the notation $\argmax\; s \type S.\,  e$ for $\argmax(\lambda s \in S.\, e)$.

 We next define our strategies. The idea is to view an effect value $E\type \sigma$ as a one-player game for Player. The subterms $E'$ of $E$ are the positions of the game.
 In particular:
\begin{itemize}
\item[-] if $E'$ is a value, then $E$ is a final position and the reward is $0$;
\item[-]  if $E' = E'_0\, \myor\, E'_1$ then Player can choose whether to move to the position $E'_0$ or the position $E'_1$; and
\item[-]  if $E' = c\cdot E''\type \sigma$  then Player moves to $E''$ and $\sden{c}$ is added to the final reward.
\end{itemize}

\noindent
The finite set $\Str(E)$
of strategies of an effect value
$E$ is defined by the following rules, writing $s\type E$ for $s \in \Str(E)$:
\[\ast\type V \qquad \frac{s\type E_1}{1s \type E_1\, \myor\, E_2} \qquad \frac{s\type E_2}{2s \type E_1\, \myor\, E_2}\qquad \frac{s\type E}{s\type c\cdot E}\]

These strategies can be reformulated as boolean functions on choice subterms; though standard, this is less convenient. Equivalently, one could work with boolean functions on choice nodes (terms) of the tree naturally associated to a term by the big-step reduction relation, noting that this tree is isomorphic to effect values considered as trees (as we see from equivalences~\ref{opequiv1} and~\ref{opequiv2}). In this way we would obtain an equivalent optimizing operational semantics which makes no use of effect values. We preferred to work with effect values as they provide a convenient way to work directly with trees formulated as terms. There are also probabilistic strategies, although, as is  generally true for MDPs~\cite{Bel57}, they would not change the optimal expected reward.

For any effect value $E\type\sigma$,
the outcome
$\Out(s,E) \in \R \times \Val_{\sigma} = \W(\Val_{\sigma})$
of  a strategy $s\type E$
is defined by:
\[\begin{array}{lcl}
\Out(\ast, V) & = & \tuple{0,V}\quad  (= \eta_{\W}(V))\\[0.2em]
\Out(1s, E_1 \,\myor\, E_2) & = & \Out(s, E_1)\\[0.2em]
\Out(2s, E_1 \,\myor\, E_2) & = & \Out(s, E_2)\\[0.2em]
\Out(s,c\cdot E) & = & \give_W(\sden{c},\Out(s,E) )
\end{array}\]
We can then define the reward of such a strategy by:
\[ \Rew(s, E) = \pi_1(\Out(s,E))\]
Note that $\pi_1\type \R \times X \rightarrow \R$ can be written as $\ER{\W}{-}{0} = \alpha_{\W}\circ\W(0_X)$, with $0_X\type X \rightarrow \R$  the constantly $0$ reward function.

As there can be several strategies maximizing the reward of a game, we need a way of choosing between them. We therefore define  a  total order $\leq_E$ on the strategies of a given game $E\type \sigma$:
\begin{itemize}
\item Game is $V$:
\[\ast \leq_V \ast\]
\item Game is $E_1 + E_2$:
\[(i,s) \leq_{E_1 + E_2} (j,s') \iff
\begin{array}{l} i < j \quad \vee\\
                                                i = j = 1 \wedge s\leq_{E_1}s' \quad \vee \\
                                               i = j = 2 \wedge s\leq_{E_2}s'
\end{array}\]
\item Game is $c\cdot E$:
\[s \leq_{c\cdot E} s' \iff s \leq_E s' \]

\end{itemize}

\noindent
We can now give our selection operational semantics $\SOp(M)\in \R \times \Val_{\sigma} = \W(\Val_\sigma)$ for programs $M:\sigma$. We first find the $\Op(M)$-strategy $s_{\mathrm{opt}}$ maximizing the reward; if there is more than one such strategy, we take the least, according to the $\Op(M)$-strategy  total order $\leq_{\Op(M)}$.
So we set:
\[s_{\mathrm{opt}} \eqdef \argmax \, s\type \Op(M).\,\Rew(s,\Op(M))\]
and then we use that strategy to define $\SOp(M)$ by setting:
\[\SOp(M) = \Out(s_{\mathrm{opt}}, \Op(M))\]
With this idea, the definition is:
\[\SOp(M) \eqdef \Out(\argmax \, s\type \Op(M).\,\Rew(s,\Op(M)), \Op(M))\]
Note that $\SOp(M) = \SOp(\Op(M))$. (This follows from the form of the definition of the optimizing operational semantics and the fact that $\Op^2(M) = \Op(M)$.)

While the operational semantics is defined by a global optimization over all strategies, it can be equivalently given locally without reference to any strategies. We  first need two lemmas. Their statements use the  $\max{\gamma}$ infix notation introduced in Definition~\ref{maxdef}. We omit their straightforward proofs.
\begin{lem}\label{trivial}Given functions $X \xrightarrow{g} Y \xrightarrow{\gamma} \R$, for all $u,v \in X$ we have
\[g(u \,\max{ \gamma \circ g} v ) \;\; = \;\; g(u) \,\max{\gamma}\, g(v)\]
\end{lem}
\begin{lem}[First argmax lemma]\label{argmax1} Let $S_1 \cup S_2$ split a finite  total order $\tuple{S,\leq}$ into two with $S_1 < S_2$ (the latter in the sense that $s_1 < s_2$ for all $s_1 \in S_1$ and $s_2 \in S_2$). Then, for all $\gamma\type S \rightarrow \R$, we have:
\[\argmax \,\gamma = \argmax\, (\gamma|S_1) \;\max{\gamma} \; \argmax\, (\gamma|S_2)\]
\end{lem}

We now have our local characterization of the operational semantics:
\begin{thm}\label{op-comp1} For well-typed effect values we have:
\begin{enumerate}
\item $\SOp(V) = \tuple{0,V}\;\, (= \eta_{\W}(V))$\label{partone-1}
\item $ \SOp(E_1 \,\myor\, E_2)  =   \SOp(E_1) \,\max{\pi_1}\, \SOp(E_2) \;\, ( = \SOp(E_1) \,\max{\ER{\W}{-}{0}}\, \SOp(E_2) )$\label{parttwo-1}.
\item $ \SOp(c\cdot E) = \give_{\W}(\sden{c},\SOp( E) )$\label{partthree-1}
\end{enumerate}
\end{thm}
\begin{proof}
\POPLomit{
For Part~\ref{partone-1} we calculate:
\[\begin{array}{lcl}\SOp(V) & = &\Out(\argmax\, s\type V.\,\Rew(s,V), V)\\
                                          & = &\Out(\ast, V)\\
                                          & = & \tuple{0,V}\\
           \end{array}\]
The first equality is as $\Op(V) = V$; the second is as values $V$ have only  one strategy,  $\ast$.

For part~\ref{parttwo-1}, we calculate:
\[\begin{array}{lcl}
\SOp(E_1 \,\myor\, E_2)  & =  &   \Out(\argmax\, s\type E_1 \,\myor\, E_2.\,\Rew(s,E_1 \,\myor\, E_2), E_1 \,\myor\, E_2)\\\\

  & = & \Out\left ( \left(
 \begin{array}{c} \argmax\, 1s\type E_1 \,\myor\, E_2.\,\Rew(1s,E_1 \,\myor\, E_2)\\
  \;\; \max{\Rew(-,E_1 \,\myor\, E_2)} \;\;\\
  \argmax\, 2s\type E_1 \,\myor\, E_2.\,\Rew(2s,E_1 \,\myor\, E_2)\\
   \end{array}\right), E_1 \,\myor\, E_2\right)\\
   && \hspace{110pt} (\mbox{by the first argmax lemma (Lemma~\ref{argmax1})})\\\\
 & = &  \Out\left (\left(\begin{array}{c} \argmax\, 1s\type E_1 \,\myor\, E_2.\,\Rew(s,E_1)\\
  \;\; \max{\pi_1\circ \Out(-,E_1 \,\myor\, E_2)} \;\;\\
  \argmax\, 2s\type E_1 \,\myor\, E_2.\,\Rew(s,E_2)\\
   \end{array}\right), E_1 \,\myor\, E_2\right)\\\\

    & = &
 \begin{array}{c} \Out(\argmax\, 1s\type E_1 \,\myor\, E_2.\,\Rew(s,E_1), E_1 \,\myor\, E_2)\\
  \;\; \max{\pi_1} \;\;\\
  \Out(\argmax\, 2s\type E_1 \,\myor\, E_2.\,\Rew(s,E_2), E_1\,\myor\, E_2)\\
   \end{array}\\
      && \hspace{130pt} (\mbox{by Lemma~\ref{trivial}})\\\\
   & = &  \begin{array}{c} \Out(\argmax\, s\type E_1.\,\Rew(s,E_1), E_1)\\
  \, \max{\pi_1} \,\\
  \Out(\argmax\, s\type E_2.\,\Rew(s,E_2),  E_2)\\
  \end{array}\\\\
   & = &  \SOp(E_1) \,\max{\pi_1}\, \SOp(E_2)
    \end{array}\]

And for part~\ref{partthree-1} we calculate:
\[\begin{array}{lcl}\SOp(c\cdot E)
  & = & \Out(\argmax\, s\type c\cdot E.\,\Rew(s, c\cdot E), c\cdot E)\\
  & = & \give_{\W}(\sden{c}, \Out(\argmax\, s\type c\cdot E.\, \sden{c} + \Rew(s, E), E))\\
  & = & \give_{\W}( \sden{c}, \Out(\argmax\, s\type c\cdot E.\,\Rew(s, E), E))\\
  & = & \give_{\W}( \sden{c}, \Out(\argmax\, s\type E.\,\Rew(s, E), E))\\
  & = & \give_{\W}( \sden{c},  \SOp(E))
               \end{array}\]
where
the third equality holds as the monoid preserves and reflects the ordering of $\R$.
\end{proof}}

Using this theorem we can show that substitutions of constants for constants can equivalently be done via $\W$. This will prove useful for our investigations of observational equivalence in Section~\ref{subsec:equiv1}.
\begin{lem}\label{sopsub1} Suppose $E\type b$ is an effect value, and that $f\type \Val_b \rightarrow \Val_{b'}$.
Let
$g\type u \subseteq \Con_b$ be the restriction of $f$ to a finite set that includes all the constants of type $b$ in $E$. Then:
\[\SOp(E[g]) = \W(f)(\SOp(E))\]
\end{lem}
\begin{proof} The proof is by structural induction. In case $E$ is a constant we have:
\[\W(f)(\SOp(c)) = \W(f)(\tuple{0,c}) = \tuple{0,f(c)} = \SOp(c[g])\]

In case $E$ has the form $E_1\,\myor\, E_2$ we have:
\[\begin{array}{llll} \W(f)(\SOp(E_1\,\myor\, E_2)) \\
\hspace{70pt} =  \;\W(f)(\SOp(E_1) \,\max{\pi_1}\, \SOp(E_2))  & (\mbox{by Theorem~\ref{op-comp1}.\ref{parttwo-1}})\\
 \hspace{70pt} =  \;  \W(f)(\SOp(E_1) \,\max{\pi_1\circ \W(f)}\, \SOp(E_2))& (\mbox{as $\pi_1\circ \W(f)  = \pi_1$})\\
\hspace{70pt} =  \;  \W(f)(\SOp(E_1)) \,\max{\pi_1}\, \W(f)(\SOp(E_2))) & (\mbox{by Lemma~\ref{trivial}})\\
\hspace{70pt} =  \;  \SOp(E_1[g]) \,\max{\pi_1}\, \SOp(E_2[g])\\
\hspace{70pt} =  \; SOp(E_1[g] \,\myor\, E_2[g]) & (\mbox{by Theorem~\ref{op-comp1}.\ref{parttwo-1}})\\
\hspace{70pt} =  \;  \SOp((E_1 \,\myor\, E_2)[g])\\
\end{array}\]

In case $E$ has the form $c\cdot E_1$ we have:
\begin{align*}
\W(f)(\SOp(c\cdot E_1)) & = \W(f)(\sden{c}\cdot \SOp(E_1))
& (\mbox{by Theorem~\ref{op-comp1}.\ref{partthree-1}})\\
                                                                               & = \sden{c}\cdot ( \W(f)(\SOp(E_1))&
 (\mbox{as $\W(f)$ is homomorphic})\\
                                                                               & = \sden{c}\cdot \SOp(E_1[g])\\
                                                                               & = \SOp(c \cdot (E_1[g]))&
(\mbox{by Theorem~\ref{op-comp1}.\ref{partthree-1}})\\
                                                                               & = \SOp((c \cdot E_1)[g]) \tag*{\qedhere}
\end{align*}
\end{proof}

\subsection{Denotational semantics}\label{den-sem1}

For the denotational semantics,  as discussed in Section~\ref{selmonad},  we need an auxiliary monad $\T$, here to handle the reward effect.
and we take $\T$ to be  $\W = \R \times \mbox{--}$, the  writer monad, and we have the $\W$-algebra $\alpha_\W\type \W(\R) \rightarrow \R$
where $\alpha_\W(\tuple{r,s}) = r+s$ as discussed in Section~\ref{subsec:firstrewards}.
We therefore have a strong monad
\[\S(X) =  (X \rightarrow \R) \rightarrow \R \times X\]
and use this monad to give the denotational semantics
\[\den{\Ssem_\T}{M}\type \den{\Ssem_\T}{\Gamma} \rightarrow \S(\den{\Ssem_\T}{\sigma}) \qquad (\mbox{for}\; \Gamma \vdash M\type \sigma )\]
  of our language, following the pattern explained in the previous section. (We often drop the subscript on $\Ssem_\T$ below.)

We assume available semantics of base types, constants, and function symbols, as discussed in Section~\ref{gen-densem} with, in particular:
$\sden{\reward} = \R$; $\sden{c}$ as in Section~\ref{subsec:firstrewards}, for $c\type\reward$; and $\sden{+}$ and $\sden{\leq}$ the monoid operation and ordering on $\R$. Recall that different constants of the same type are required to receive different denotations and that the consistency condition~\ref{opdencon} is required to be satisfied.

Turning to the algebraic operation symbols,  for $\myor$ we use the algebraic operation family $\myor_X$  given by Equation~\ref{ordef},
so:
\begin{equation} (\myor)_X(G_0,G_1)(\gamma) = G_0\gamma\;  \max{X,\ER{\W}{-}{\gamma}} \; G_1\gamma%
\label{or1def} \end{equation}
where, for any $\gamma\type X \rightarrow \R$, $\max{X,\gamma}\type X^2\rightarrow X$ is defined by:
\[x \,\max{X,\gamma}\, y = \left \{\begin{array}{ll} x & (\mbox{if\ }\gamma(x) \geq \gamma(y))\\
                                                                        y & (\mbox{otherwise})
                                       \end{array}\right.\]
For  $\give$ we take the algebraic operation family $(\give_\S)_X$ induced by the
$(\give_{\W})_X$, so, using Equation~\ref{finalgalgops}:
 \begin{equation}(\give_\S)_X(r,G)(\gamma) \;\; = \;\; (\give_{\W})_X(r, G\gamma)\;\; = \;\; \tuple{r + \pi_1(G\gamma), \pi_2(G\gamma)}\label{rew1def} \end{equation}

\subsection{Adequacy}\label{subsec:first-adeq}

We next aim to prove that the selection operational semantics  essentially coincides with its denotational semantics. This coincidence is our \emph{selection adequacy theorem}.

We need some notation to connect the operational semantics of programs with their denotations.
We set $( \sem_\W)_\sigma =  \W(\Ssem_p)\type  \W(\Val_\sigma) \rightarrow \W(\den{\Ssem}{\sigma})$.
 So for $u = \tuple{r,V}$ in $ \R \times \Val_\sigma =  \W(\Val_\sigma)$ we have
\[(\sden{\tuple{r,V}}_\W)_\sigma = \tuple{r,\den{\Ssem_p}{V}}\]

\begin{lem}\label{eff-sem1} For any effect value $E\type \sigma $ we have:
\[\den{\Ssem_\W}{E}(0) =  \sden{\SOp(E)}_\W\]
\end{lem}
\cutproof{\begin{proof}
We proceed by structural induction on $E$, and cases according to its form.
\begin{enumerate}
\item

Suppose $E$ is a value $V$.
Using Theorem~\ref{op-comp1}.\ref{partone-1}, we calculate:
\[\sden{\SOp(V)}_\W  \; = \; \sden{\tuple{0,V}}_\W
                                        \; = \;  \tuple{0,\den{\Ssem_p}{V}}
                                         \; = \;  \eta_{\W}(\den{\Ssem_p}{V})
                                        \; = \; \eta_{\S}(\den{\Ssem_p}{V})(0)
                                        \; = \; \den{\Ssem}{V}(0)
\]
%
\item

Suppose next that $E = E_1\; \myor \;E_2$. Then:
\[\begin{array}{llll} \hspace{-10pt}
 \sden{\SOp(E_1 \;\myor \;E_2)}_\W  \\
\hspace{30pt} = \;   \sden{\SOp(E_1) \, \max{\ER{\W}{-}{0}} \, \SOp(E_2)}_\W
                               & (\mbox{by Theorem~\ref{op-comp1}.\ref{parttwo-1}})\\
\hspace{30pt} = \;   \sden{\SOp(E_1) \, \max{\alpha_\W\circ\W(0)} \, \SOp(E_2)}_\W \\
\hspace{30pt} = \; \sden{\SOp(E_1) \, \max{\alpha_\W\circ\W(0)\circ\W(\sden{\;}_\sigma)} \, \SOp(E_2)}_\W\\
\hspace{30pt} = \; \sden{\SOp(E_1)}_\W \, \max{\alpha_\W\circ\W(0)} \, \sden{\SOp(E_2)}_\W
                              & (\mbox{using Lemma~\ref{trivial}}) \\
\hspace{30pt} = \; \den{\Ssem}{E_1}(0) \, \max{\alpha_\W\circ\W(0)} \, \den{\Ssem}{E_2}(0)
                              & (\mbox{by induction hypothesis}) \\
\hspace{30pt} = \; \myor_{\sden{\sigma}}(\den{\Ssem}{E_1},\den{\Ssem}{E_2})(0)
                              & (\mbox{by Equation~\ref{or1def}})\\
\hspace{30pt} = \; \den{\Ssem}{ E_1 \,\myor \,E_2}(0)\\
\end{array}\]

\item
Suppose instead that $E = c\cdot E'$.  Then:
\begin{align*}
\sden{\SOp(c\cdot E')}_\W
                                                & = \W(\Ssem_p)(\SOp(c\cdot E') )   \\                                                                     & = \W(\Ssem_p)(\sden{c}\cdot_{\W} \SOp(E'))
                                                    &   (\mbox{by Theorem~\ref{op-comp1}.\ref{partthree-1}})\\
                                                  & = \sden{c}\cdot_{\W} \W(\Ssem_p)(\SOp(E'))
                                                        &  (\mbox{as $ \W(\Ssem_p)$ is a homomorphism})\\
                                                  & = \sden{c}\cdot_{\W} \den{\Ssem}{E'}(0)
                                                        &   (\mbox{by induction hypothesis})\\
                                                 & = (\sden{c}\cdot_\S \den{\Ssem}{E'} )(0)
                                                      &                                   (\mbox{by Equation~\ref{rew1def}})\\
                                                 & = \den{\Ssem}{c\cdot E'}(0) \tag*{\qedhere}
\end{align*}
\end{enumerate}
\end{proof}}

\begin{thm}[Selection adequacy]\label{sel-ad1} For any program $M:\sigma$ we have:
\[\den{\Ssem_\W}{M}(0) =  \sden{\SOp(M)}_\W\]
\end{thm}
\begin{proof} We have:
\begin{align*}
\den{\Ssem_\W}{M}(0) & = \den{\Ssem_\W}{\Op(M)}(0)& (\mbox{by Theorem~\ref{basic-ad}})\\
             & = \sden{\SOp(\Op(M))}_\W & (\mbox{by Lemma~\ref{eff-sem1}})\\
                  & = \sden{\SOp(M)}_\W \tag*{\qedhere}
\end{align*}
\end{proof}

This theorem relates the compositional denotational semantics to the globally optimizing operational semantics. In particular, the latter determines the former at the zero-reward continuation. Whereas the denotational semantics optimizes only locally, as witnessed by the semantics of $\myor$, the latter optimizes over all possible Player strategies.
The use of the zero-reward continuation  is reasonable as the operational semantics of a program does not consider any continuation, and so,  as rewards mount up additively, the zero-reward continuation is appropriate at the top level.

In more detail, setting $\tuple{r,V} = \SOp(M)$, the theorem states that
$\den{\Ssem_\T}{M}(0) = \tuple{r,\den{\Ssem_p}{V}} $.
So the rewards according to both semantics agree, and the denotation of the value returned by the globally optimizing operational  semantics is given by the denotational semantics. In the case of base types  (or, more generally, products of base types)  the globally optimizing operational semantics is determined by the denotational semantics as the denotations of values of  base types determine the values (see Section~\ref{gen-ad}), and so, in that case, there is complete agreement between the operational semantics and the denotational semantics at the zero-reward continuation.

\subsection{Full abstraction, program equivalences, and purity}\label{subsec:equiv1}

 Given a notion of observations $\Ob(M)$ of programs $M\type b$ of a base type $b$, one  can define a notion of \emph{observational} or \emph{behavioural}  equivalence in a standard contextual manner; such notions are usually syntactical, being derived from operational semantics, though that is not necessary. Observational equivalence is generally robust against variations in the notion of observation, and we explore such variations in the context of our decision-making languages.

 So, for such a notion of observations $\Ob(M)$ of programs of base type $b$,   for programs $M,N\type\sigma$, define operational equivalence $M (\approx_{b,\Ob})_{\sigma} N$ between them  by:
\[M (\approx_{b,\Ob})_{\sigma} N \iff \forall C[\;\;]\type \sigma \rightarrow b.\, \Ob(C[M]) = \Ob(C[N]) \]
 (Here $C[\;\;]$ ranges over contexts with a single hole, defined in a standard way, and by  $C[\;\;]\type \sigma \rightarrow \tau$ we mean that for any $L\type\sigma$ we have $C[L]\type \tau$.)
We generally drop the type subscript $\sigma$ below.  Observational equivalence is an equivalence relation at any type, and it is closed under contexts, in the sense that for all programs  $M\type \sigma$, $N\type \sigma$ and contexts
$C[\;\;]\type \sigma \rightarrow \tau$ we have:
\[M \approx_{b,\Ob} N \implies C[M] \approx_{b,\Ob} C[N] \]

 Operational adequacy generally yields the implication:
\begin{equation} \models_\Msem M = N\type b   \implies \Ob(M) = \Ob(N)\label{adimp} \end{equation}
and it then follows that
\begin{equation} \models_\Msem M = N\type \sigma   \implies M \approx_{b,\Ob} N\label{moimp} \end{equation}
 As a particular case of this implication we have $M \approx_{b,\Ob} \Op(M)$ for programs $M\type \sigma$.
 The converse of the implication~\ref{moimp} is \emph{full abstraction} (of $\Msem$ with respect to $\approx_{b,\Ob}$) at type $\sigma$.

In the case of our language of choice and rewards, we work with observational equivalence at boolean type, and take the notion of observation to be simply the optimizing operational semantics $\SOp$, and write $\approx_{b}$ for $\approx_{b,\SOp}$, and $\approx$ for $\approx_\bool$. Note that the selection adequacy theorem (Theorem~\ref{sel-ad1}) immediately yields the implication~\ref{adimp} (and so also implication~\ref{moimp}) for $\Ssem_\W$ and $\SOp$, as expected, and we  then also have $M \approx_{b} \Op(M)$ for base types $b$ and programs $M\type \sigma$.

We next see  that, with this notion of observation, observational equivalence is robust against changes in choice of base type (Proposition~\ref{obvar1}). We investigate the robustness of observational equivalence against
 weakenings of the notion of observation later, observing either only values (Theorem~\ref{obweak1}) or only rewards  (Corollary~\ref{notrab}).

\begin{lem}\label{oevb} Suppose that $b$ is a base type with at least two constants. Then for any base type $b'$ and programs $M_1, M_2\type b'$ we have:
\[M_1 \approx_b  M_2 \implies \SOp(M_1)  = \SOp(M_2)\]
\end{lem}
\begin{proof} Let $E_i$ be $\Op(M_i)$ for $i=1,2$. Then $E_1 \approx_b  E_2$ (as $M_i \approx_{b} \Op(M_i)$ for $i=1,2$) and it suffices to prove that $\SOp(E_1)  = \SOp(E_2)$.
Suppose that  $\SOp(E_i) = \tuple{r_i,c_i}$ for $i = 1,2$. Let $f\type \Val_{b'} \rightarrow \Val_b$ be such that $f(c_1)$ and $f(c_2)$ are distinct, in case $c_1$ and $c_2$ are, and let $g$ be its restriction to the constants of the $E_i$ of type $b'$. For $i = 1,2$, we have:
\[\begin{array}{lcll} \SOp(\myF_gE_i) &=& \SOp(E_i[g])& (\mbox{by Lemma~\ref{consub}})\\
                                                     &=&  \W(f)(\SOp(E_i)) & (\mbox{by Lemma~\ref{sopsub1}})  \\
                                                     &=&  \tuple{r_i,f(c_i)}
\end{array}\]
As $E_1 \approx_b  E_2$, we have $\myF_gE_1 \approx_b  \myF_gE_2$ and so $\SOp(\myF_gE_1) = \SOp(\myF_gE_2)$ and so, from the above equations for the $\SOp(\myF_gE_i)$, that
$\tuple{r_1,f(c_1)} = \tuple{r_2,f(c_2)}$. So, as $f$ is 1--1 on $\{c_1,c_2\}$,
$\tuple{r_1, c_1} = \tuple{r_2,c_2}$ as required.
\end{proof}

As an immediate consequence of this lemma we have the following proposition that change of non-trivial base type does not affect observational equivalence:
\begin{prop}\label{obvar1} For all base types $b$ and programs $M,N\type \sigma$, we have
\[M \approx N \implies \; M \approx_b N\]
with the converse  holding if there are at least two constants of type $b$.
\end{prop}

\newcommand{\EVal}{\mathrm{EVal}}
\newcommand{\WVal}{\mathrm{WVal}}
\newcommand{\Obv}{\Ob_\mathrm{v}}
\newcommand{\Obr}{\Ob_\mathrm{r}}

Because the denotational semantics is compositional,
it facilitates proofs of program equivalences, including ones
that justify program transformations, and more broadly can be convenient for certain arguments about programs. For this purpose,
we rely on the equivalence relation
$\Gamma \vdash_{\Ax} M = N\type \sigma$
described in Section~\ref{gen-equations}. As remarked there, our general semantics
is equationally consistent.
We interest ourselves in a limited converse, with $\sigma$  a base type and $M$ and $N$  programs; we call this \emph{base type program completeness}.

Our system of axioms, $\Ax$,  is given in Figure~\ref{termequivs}.
As shown in Theorem~\ref{genax},  the choice operation is associative and idempotent;
from Corollary~\ref{auxeq} we have that the reward operation is an $\R$-action on the $\S(X)$ since it is on the $\W(X)$;
and we see from Theorem~\ref{distributes} that the reward operation commutes with the choice operation as the monoid addition preserves and reflects the order.
This justifies the first five of our axioms.
A  pointwise argument then shows that the following
equality holds for $r,s \in \R$ and $F,G \in \S(X)$, for any set $X$:
\begin{equation} r\!\cdot\! F \,\myor\, s\!\cdot\! F = t\!\cdot\! F \quad (t = \mathrm{max}(r,s))\label{r1} \end{equation}
Using this equality, the left-bias of the choice operation (shown in Theorem~\ref{genax}), and associativity, we have:
\begin{equation} (r\!\cdot\! F \,\myor\, G) \,\myor\, s\!\cdot\! F = r\!\cdot\! F \,\myor\, G  \quad (r \geq s)\label{r2} \end{equation}
and another pointwise argument establishes the equation:
\begin{equation} (r\!\cdot\! F \,\myor\, G) \,\myor\, s\!\cdot\! F =  G \,\myor\, s\!\cdot\! F \quad  (r < s)\label{r3}\end{equation}
These remarks justify our last two axioms.

\begin{figure}[h]
  \[(L \,\myor\, M) \,\myor\, N \;=\; L \,\myor\, (M \,\myor\, N) \qquad
M \,\myor\, M \;=\; M  \]
\[0\cdot N \;=\; N \qquad x \cdot (y \cdot N) \; =\;  (x + y)\cdot N\]
\[x \cdot (M\, \myor\, N) \; =\;  (x\cdot M)\, \myor\, (x \cdot N)\]
\[\myif x \geq y \mythen x\cdot M \myelse y\cdot M \;=\; x\cdot M\, \myor\, y\cdot M \]
\[\myif x \geq z \mythen (x\cdot M\, \myor\, N) \myelse ( N \,\myor\,z\cdot M)\,\;=\;\,
(x\cdot M\, \myor\, N) \,\myor\, z\cdot M \]
\caption{Equations for choices and rewards}\label{termequivs}
\end{figure}

Some useful consequences of these equations, mirroring the equalities~\ref{r1}--\ref{r3}, are:
\begin{equation}
  \tag{R$_1$}
c\cdot M \,\myor\, c' \cdot M \;\; =  \;\; c''\cdot M \quad (\mbox{where $\sden{c''}\; = \; \mathrm{max}(\sden{c},\sden{c'})$})%
  \label{R1}
\end{equation}
\begin{equation}
  \tag{R$_2$}
(c\cdot M \, \myor\,  N) \,\myor\,  c'\cdot M\; = \; c\cdot M \, \myor\, N \quad (\mbox{if $\sden{c} \geq \sden{c'}$})\%
  \label{R2}
\end{equation}
\begin{equation}
  \tag{R$_3$}
(c\cdot M \, \myor\, N) \,\myor\,  c'\cdot M\; = \;   N \,\myor\, c'\cdot M \quad (\mbox{if $\sden{c} < \sden{c'}$})\%
  \label{R3}
\end{equation}

Our equational system allows programs to be put
into a canonical
form. We say that  a  \emph{canonical form}  (ignoring bracketing of $\myor$) is an effect value of the form
\[(c_1\cdot V_1) \,\myor\, \ldots \,\myor\,   (c_n\cdot V_n)\]
with $n >0$ and no $V_i$ occurring twice.
\begin{lem}\label{norm} Every program $M$ 
is provably equal to a canonical 
form $\CF(M)$.
\end{lem}
\cutproof{\begin{proof} By the ordinary adequacy theorem (Theorem~\ref{basic-ad}), $M$ can be proved equal to an effect value $E$. Using the associativity equations, the fact that $\give$ and $\myor$ commute, and the $\R$-action equations, $E$ can be proved equal to a term of the form $c_1\cdot V_1 \,\myor\, \ldots \,\myor\,   V_n\cdot d_n$, possibly with some $V$'s occurring more than once. Such duplications can be removed using equations~\ref{R1},~\ref{R2}, and~\ref{R3} and associativity.
\end{proof}}

The next theorem shows that, for programs of base type, 
four equivalence relations coincide, and thereby simultaneously establishes for them: a normal form
for provable equality; completeness of our proof system for equations between such programs; and full abstraction.

\newcommand{\ov}[1]{\overline{#1}}
\begin{thm}\label{theorem:equivalences}
For any two programs $M$ and $N$ of base type $b$, the following equivalences hold:
\[\CF(M) = \CF(N) \iff \vdash_{\Ax} M=N\type b \iff \models_\Ssem M = N \type b \iff M \approx N\]
\end{thm}
\begin{proof} We already  know the implications from left-to-right hold.
So it suffices to show that:
\[\CF(M) \neq \CF(N) \implies M \not\approx N\]
First fix $l,r\type \Rew$ with $l < r$ (possible as $\R$ is expressively non-trivial).
We remark that, in general, to prove $A \not\approx B$ for $A,B\type \sigma$ it suffices to
to prove $A' \not\approx B'$ if we have $\vdash_{\Ax} A = A'\type \sigma$ and $\vdash_{\Ax} B = B'\type \sigma$.
We use this fact freely below. We also find it convenient to confuse sums of $\Rew$ constants with their denotations.

Let the canonical forms of $M$ and $N$ be
\[A = (c_1\cdot d_1) \,\myor\, \ldots \,\myor\, (c_n \cdot d_n)
\quad \mbox{and} \quad
B = (c'_1\cdot d'_1) \,\myor\, \ldots \,\myor\, (c'_{n'} \cdot d'_{n'})\]
and suppose they are different. It suffices to prove that $A \not\approx B$. Suppose, first, that for some $i_0$, $d_{i_0}$ is no $d'_j$.  Choose $c$ to be the maximum of the $c_i$ and the $c'_j$, other than $c_{i_0}$. Consider the context:
\[C_1[-] \eqdef \myif [-] =  d_{i_0} \mythen (c + r)\cdot \true \myelse  (c_{i_0} + l) \cdot \true\]
As $c_i + (c_{i_0} + l) < c_{i_0} + (c + r)$, we have $\pi_1(\SOp(C_1[A])) = c_{i_0} + c + r$. Further
$\pi_1(\SOp(C_1[B]))$ is  the maximum of the $c'_j + c_{i_0} + l$ and so $< c_{i_0} + c + r = \SOp(C_1[A]))$.
So we see that $A \not \approx B$ in this case.

Suppose, instead, that for some $i_0$, $d_{i_0}$ is $d'_{j_0}$ for some $j_0$ but that $c_{i_0} < c'_{j_0}$. Then we find that $\pi_1(\SOp(C_1[A])) = c_{i_0} + c + r$, as before, and that $\pi_1(\SOp(C_1[B]))$ is the maximum of the  $c'_j + (c_{i_0} + l)$, for $j \neq j_0$ and $c'_{j_0} + (c + r)$, which is  $c'_{j_0} + (c + r)$, and so we have again distinguished $A$ and $B$.

So, we may assume that for every $1\leq i \leq n$ there is a $1\leq j \leq n'$ such that $d_i = d'_j$ and $c_{i}\geq c'_{j}$. Arguing symmetrically, and recalling that none of the $d_i$ are repeated, and neither are any of the $d'_j$, we see that we may assume that $n = n'$ and that $(c_1\cdot d_1), \ldots ,(c_n \cdot d_n)$ and
$(c'_1\cdot d'_1) ,\ldots, (c'_n \cdot d'_n)$ are permutations of each other.

For the last case, suppose there is a first point $i_0$ at which $A$ and $B$ differ. We can then write them as:
\[A =  A_0\,\myor\, c_{i_0} \cdot d_{i_0} \,\myor\,  A_1 \,\myor\, c_{i_1} \cdot d_{i_1} \,\myor\, A_2\]
and
\[B =  A_0\,\myor\, c'_{i_0} \cdot d'_{i_0} \,\myor\,  B_1 \,\myor\, c'_{i_2} \cdot d'_{i_2} \,\myor\, B_2\]
with $d_{i_0} \neq d'_{i_0}$, $c_{i_1} \cdot d_{i_1} = c'_{i_0} \cdot d'_{i_0}$, and $c'_{i_2} \cdot d'_{i_2} = c_{i_0} \cdot d_{i_0}$, and where we allow any of $A_0,A_1,A_2, B_1$ or $B_2$ to be either a canonical form or the empty sequence, and, continuing to ignore parentheses,  interpret $A \,\myor\, B$ and
$B \,\myor\, A$ as $B$ when $A$ is empty and $B$ is not.

Let $c$ be the maximum of the $c_i$ and the $c'_i$, except for $c_{i_0}$ and $c'_{i_0}$, and consider the context
\[C_2[-] \eqdef \begin{array}{l} \mylet x\type b \mybe [ - ] \myin\\
                                               \myif x = d_{i_0} \mythen (c + c'_{i_0} + r) \cdot \true \myelse \\
                                               \myif x = d'_{i_0} \mythen (c + c_{i_0} + r) \cdot \false \myelse\\
                                                 \quad(c_{i_0} + c'_{i_0} + l)\cdot \false
                      \end{array}
\]
Then $C_2[A]$ is provably equal to
\[(\ov{c}_1\cdot \ov{d}_1) \,\myor\, \ldots \,\myor\, (\ov{c}_n \cdot \ov{d}_n)\]
where \[\begin{array}{lcll}\ov{c}_{i_0}\cdot \ov{d}_{i_0} &=& (c_{i_0} + c + c'_{i_0} + r)\cdot \true\\
            \ov{c}_{i_1}\cdot \ov{d}_{i_1} &=& (c'_{i_0} + c + c_{i_0} + r)\cdot \false\\
           \ov{c}_{i}\cdot \ov{d}_{i} & = & (c_{i} + c_{i_0} + c'_{i_0} + l)\cdot \false & (i \neq i_0,i_1)
            \end{array}\]
and  we see that $\SOp(C_2[A]) = \tuple{c_{i_0} + c + c'_{i_0} + r, \true}$.

Further, $C_2[B]$ is provably equal to
\[(\ov{c}'_1\cdot \ov{d}'_1) \,\myor\, \ldots \,\myor\, (\ov{c}'_n \cdot \ov{d}'_n)\]
where \[\begin{array}{lcll}\ov{c}'_{i_0}\cdot \ov{d}'_{i_0} &=& (c'_{i_0} + c + c_{i_0} + r)\cdot \false\\
            \ov{c}'_{i_2}\cdot \ov{d}'_{i_2} &=& (c_{i_0} + c + c'_{i_0} + r)\cdot \true\\
           \ov{c}'_{i}\cdot \ov{d}'_{i} & = & (c'_{i} + c_{i_0} + c'_{i_0} + l)\cdot \false & (i \neq i_0,i_2)
            \end{array}\]
and  we see that $\SOp(C_2[B]) = \tuple{c'_{i_0} + c + c_{i_0} + r, \false}$. So $C_2[-]$ distinguishes $A$ and $B$, concluding this final case.
\end{proof}

Theorem~\ref{theorem:equivalences} is in the spirit of~\cite{LS18} in giving axiomatic and denotational  accounts of observational equivalence at base types, though here at the level of terms rather than, as there, only effect values. (A natural axiomatic account of the observational equivalence of effect values at base types can be given by specializing the above axioms to them, including~\ref{R1},~\ref{R2}, and~\ref{R3}, but deleting the last two in Figure~\ref{termequivs}.)

Theorem~\ref{theorem:equivalences} holds a little more generally: for products of base types. The proof remains the same, using the fact that equality at any product of base types can be programmed using equality at base types. It follows that we have full abstraction at products of base types, i.e., for all programs of types of order 0.  A standard argument then shows that full abstraction holds for values of types of order 1; whether or not it holds for programs of types of order 1 is, however, open.

As a corollary of Theorem~\ref{theorem:equivalences} we have completeness for purity (i.e., effect-freeness) assertions at base types. Indeed we have it in a strong form:
\begin{cor}\label{purity1} For any program $M\type b$, we have:
\[\models_\Ssem M\downarrow b \implies \exists\, c\!\type \! b \vdash_\Ax M = c\]
\end{cor}
\begin{proof}
Suppose $\models_\Ssem M\!\downarrow \!b$. That is,  for some $x \in \sden{b}$,
$\den{\Ssem}{M} = \eta_{\S_\W}(x) = \lambda \gamma.\, \tuple{0,x}$.
For some $r\!\in \!\R$ and $c\!\type\! b$, $\SOp(M) = \tuple{r,c}$. So,  by adequacy we have:
\[\den{\Ssem}{M}(0) = \sden{\SOp(M)}_\W = \sden{\tuple{r,c}}_\W = \tuple{r,\sden{c}}\]
As $\den{\Ssem}{M} = \lambda \gamma.\, \tuple{0,x}$ we therefore have $r = 0$ and $\sden{c} = x$ and so $\models_\W M = c$.
It then follows from Theorem~\ref{theorem:equivalences} that $\vdash_\Ax M = c$.
\end{proof}

As may be expected, more generally we have strong purity completeness for products of base types, i.e., for any
$M\type \sigma$ where $\sigma$ is a product of base types we have:
\[\models_\Ssem M\downarrow \sigma \implies \exists\, V\!\type \! \sigma \vdash_\Ax M = V\]
and, indeed, this is a straightforward consequence of the corollary.

 \newcommand{\C}{\mathrm{C}}

 A natural question is whether, instead of using the selection monad $\S_\T$, we can treat  the choice operator at the same level as the reward one, say using a suitable free-algebra monad. This can be done, to some extent, by making use of Theorem~\ref{theorem:equivalences} and the equations we have established for these operations at the term level. Consider an equational system with a binary (infix) operation symbol  $- \,\myor\, -$ and an $\R$-indexed family of unary  operation symbols $r\cdot - \; (r \in \R)$, and impose Equations~\ref{reward-eqn}, associativity and commutativity equations:
 \[x \,\myor\, (y\,\myor\,z) = (x \,\myor\, y) \,\myor\,z \qquad r\cdot (x \,\myor\, y ) = r\cdot x \,\myor\, r\cdot y\]
and equations corresponding to Equations~\ref{R1},~\ref{R2}, and~\ref{R3}:
\[\begin{array}{cccl} r\cdot x \,\myor\, r' \cdot x   & = & \mathrm{max}( r, r')\cdot x\\[0.25em]
(r\cdot x \, \myor\, r'\cdot y) \,\myor\,  r''\cdot x & = & r\cdot x \, \myor\, r'\cdot y & (\mbox{if $ r  \geq  r''$})\\[0.25em]
(r\cdot x \, \myor\, r'\cdot y) \,\myor\,  r''\cdot x & = & r'\cdot y \,\myor\,  r''\cdot x & (\mbox{if $ r  <  r''$})\
\end{array}\]
Let $\C$ be the resulting free-algebra monad, and let  $\mathcal{C}$ be the  corresponding denotational semantics. One can  show that for all effect values $E,E'\type b$ of a base type $b$ we have:
\[\models_\mathcal{C} E = E' \;\; \iff \;\; \vdash_\Ax E = E'\type b\]
Using Theorems~\ref{basic-ad} and~\ref{theorem:equivalences} we then obtain a version of Theorem~\ref{theorem:equivalences} for $\mathcal{C}$, that, for any two programs $M$ and $N$ of base type $b$:
\[\vdash_\Ax M = N \type b \;\iff\; \models_\mathcal{C} M = N \type b\; \iff \; M \approx N \]

However we do not obtain an adequacy theorem analogous to the adequacy theorem (Theorem~\ref{sel-ad1}) which relates the operational semantics to the selection monad semantics at the zero-reward continuation. Consider, for example, the two boolean effect values $\true$ and $\true \,\myor\,  \false$. Operationally they both evaluate to $\true$. But they have different $\Ssem$-semantics as the second value is sensitive to the choice of reward continuation. They therefore have different $\mathcal{C}$-semantics, i.e., in this sense the $\mathcal{C}$-semantics is not sound.  An alternative would be to extend the operational semantics of programs to take a reward continuation into account, as done in~\cite{LS18}; however such an extension would be in tension with the idea that programs should be executable without additional information.

Turning to weakening the notion of observation, we may observe either just the reward or just the final value, giving two weakened notions of observation $\Obr = \pi_1\circ \SOp$, for the first, and $\Obv = \pi_2\circ \SOp$, for the second.
We begin by  investigating observing only values.
\begin{lem}\label{vequalss1} For programs $M,N\type \bool$ we have:
 \[M_1 \approx_{\Obv} M_2 \implies \SOp(M_1) = \SOp(M_2)\]
\end{lem}
\begin{proof} As $\Obv$ is weaker than $\SOp$ the implication~\ref{adimp} holds for $\Ssem_\W$ and it. We can therefore assume  without loss of generality that $M_1$ and $M_2$ are effect values, $E_1$ and $E_2$, say.
Suppose $\SOp(E_i) = \tuple{r_i,c_i}$ ($i = 1,2$).

Assume $E_1 \approx_{\Obv}E_2$. We then have $c_1 = c_2 = \true$, say. Suppose, for the sake of contradiction, that $r_1 \neq r_2$, and then,  without loss of generality,  that $r_1 < r_2$.  Define $f\type \Val_\bool \rightarrow \Val_\bool$ to be constantly
$\false$.
Then we have
\[\begin{array}{lcll}
\Obv(E_1 \,\myor\, \myF_fE_2) & = & \pi_2(\SOp(E_1 \,\myor\, \myF_fE_2))\\
                                                             & = & \pi_2(\SOp(E_1) \,\max{\pi_1}\, \SOp(\myF_fE_2))&
                                                                        (\mbox{by Theorem~\ref{op-comp1}.\ref{parttwo-1}})\\
                                                             & = & \pi_2(\tuple{r_1,\true} \,\max{\pi_1}\, \SOp(E_2[f]))&
                                                                        (\mbox{by Lemma~\ref{consub}})\\
                                                             & = & \pi_2(\tuple{r_1,\true} \,\max{\pi_1}\, \W(f)(\SOp(E_2)))&
                                                                        (\mbox{by Lemma~\ref{sopsub1}})\\
                                                             & = & \pi_2(\tuple{r_1,\true} \,\max{\pi_1}\, \tuple{r_2,\false})\\
                                                             & = & \false
\end{array}\]
and, similarly,
\[\begin{array}{lcll}
\Obv(E_2 \,\myor\,  \myF_fE_2) & = &  \pi_2(\tuple{r_2,\true} \,\max{\pi_1}\, \tuple{r_2,\false})\\
                                                             & = & \true
\end{array}\]
yielding the required contradiction, as $E_1 \approx_{\Obv}E_2$.
\end{proof}
It immediately follows that observing only values does not weaken the notion of observational equivalence.
\begin{thm}\label{obweak1} For programs $M,N\type \bool$ we have:
\[M \approx N \iff M \approx_{\Obv} N\]
\end{thm}

\newcommand{\Mr}{\M_{\mathrm{r}}}
\newcommand{\Mrsem}{\mathcal{M}_\mathrm{r}}

To investigate observing only rewards, we consider another free algebra monad, $\Mr$. It is the free algebra monad for the equational system with a binary (infix) associative, commutative, absorptive binary operation $- \,\myor\, -$ which forms a module relative to  the max-plus structure of $\R$, meaning that there is an $\R$-indexed family of unary  operation symbols $r\cdot - \; (r \in \R)$ forming an $\R$-action and with the following two equations holding:%
\[r\cdot (x \,\myor\, y ) = r\cdot x \,\myor\, r\cdot y \qquad r\cdot x \;\myor\; r' \cdot x    =  \mathrm{max}( r, r')\cdot x\]
We write $\Mrsem$ for the associated denotational semantics of our language with rewards.
\begin{lem}\label{rob}
For any programs $M_1,M_2\type b$ of base type, we have:
\[\Mrsem(M_1) = \Mrsem(M_2) \implies \Obr(M_1) =  \Obr(M_2)\]
\end{lem}
\begin{proof}
Assume $\Mrsem(M_1) = \Mrsem(M_2)$.  We can assume $M_1$ and $M_2$ are effect values, say $E_1$ and $E_2$.  These effect values take their denotations in the free algebra $\Mr(\sden{b})$. They can be considered as algebra terms  if we add  the constants $c\type b$ in $E_1$ and $E_2$ to the signature and
identify the constants $c\type \Rew$ occurring in subterms of the form $c\cdot E$ with their denotations. With that, their denotations are the same as their denotations in the free algebra extended so that the two denotations of the constants agree. So, as their denotations are equal, they can be proved equal in equational logic using closed instances of the axioms. We  show by induction on the size of proof that if $E = E'$ is so provable, then $\Obr(E) =  \Obr(E')$.

Other than commutativity, all closed instances $E = E'$ of the axioms hold in  $\Ssem$ and so $\SOp(E) = \SOp(E')$ for such instances. By Theorem~\ref{op-comp1}.\ref{parttwo-1}, for any effect values $E, E'\type b$ we have $\Obr(E \,\myor\, E') = \Obr(E)\, \max{} \, \Obr(E')$,
and so $\Obr(E \,\myor\, E') = \Obr(E \,\myor\, E')$ for all closed instances $E \,\myor\, E' = E \,\myor\, E'$ of commutativity. The only remaining non-trivial cases are the congruence rules. For that for choice we again use Theorem~\ref{op-comp1}.\ref{parttwo-1}; for that for rewards we use Theorem~\ref{op-comp1}.\ref{partthree-1}, which implies $\Obr(r\cdot E) = r + \Obr(E)$, for any effect value $E\type b$.
\end{proof}

\begin{thm}\label{vfullab} For any programs $M_1,M_2\type b$ of base type, we have:
\[\Mrsem(M_1) = \Mrsem(M_2) \iff M_1 \approx_{\Obr} M_2\]
\end{thm}
\begin{proof}
The implication from left to right follows immediately from Lemma~\ref{rob}. For the converse, suppose that
$M_1 \approx_{\Obr} M_2$. We can assume $M_1$ and $M_2$ are effect values, say $E_1$ and $E_2$.  Let $A_1 =  (c_1\cdot d_1) \,\myor\, \ldots \,\myor\, (c_n \cdot d_n)$ and
$A_2 = (c'_1\cdot d'_1) \,\myor\, \ldots \,\myor\, (c'_{n'} \cdot d'_{n'})$ be their normal forms.
 As the program equivalences used to put effect values of base type in normal form follow from those true in
 $\Mrsem$, we have $\Mrsem(E_i) = \Mrsem(A_i)\; (i = 1,2)$.
So, as $E_1 \approx_{\Obr} E_2$ we have $A_1 \approx_{\Obr} A_2$, using the implication from left to right.
The first part of the proof of Theorem~\ref{theorem:equivalences} that the two normal forms considered there are identical up to a permutation only uses the reward  part of the observation notion $\SOp$. So, reasoning as there, but now with $\Obr$ replacing $\SOp$, we see that
$(c_1\cdot d_1),\ldots , (c_n \cdot d_n)$ is a permutation of $(c'_1\cdot d'_1),\ldots, (c'_{n'} \cdot d'_{n'})$. As the commutativity program equivalence holds in $\Mrsem$, we therefore have
$\Mrsem(A_1) = \Mrsem(A_2)$ and so $\Mrsem(E_1) = \Mrsem(E_2)$, concluding the proof.
\end{proof}

\begin{cor}\label{notrab} The selection monad semantics augmented with auxiliary monad the writer monad $\W$ is not fully abstract at base types for $\approx_{\Obr}$ (and so $\approx_{\Obr}$ is strictly weaker than $\approx$). Indeed for programs $M,N\type \sigma$ of any type we have:
\[M \,\myor \, N \approx_{\Obr} N \,\myor \, M\]
\end{cor}
So, if we only care about optimizing rewards, we may even assume that $\myor$ is commutative.
\section{Adding probabilities}\label{second}

We next extend the language of choices and rewards by probabilistic nondeterminism. Thus, we have the three main ingredients
of MDPs, though in the setting of a higher-order $\lambda$-calculus rather than the more usual
state machines. We proceed as in
the previous section, often reusing
notation.

\subsection{Syntax}
For the syntax of our language,  in addition to the basic vocabulary and algebraic operations of the
language of  Section~\ref{subsec:first-syntax}, we assume available algebraic operation symbols $+_p\type \varepsilon,2\; (p \in [0,1])$ and function symbols
$\oplus_p \type \reward\,\reward\rightarrow \reward\; (p \in [0,1])$.
We use infix notation for both the $+_p$ and the $\oplus_p$. The former are for binary probabilistic choice. The latter are  for the convex combination of rewards; they prove useful for the equational logics given in Section~\ref{probeqpu}. (For example, see Equations~\ref{gather-eqn2-syntax} and~\ref{gather-eqn3-syntax}.) As before, we leave any other base type symbols, constants, or function symbols unspecified.

For example (continuing an example from Section~\ref{subsec:first-syntax}),
we may write the tiny program:
\[(5 \cdot \true) \hspace{0.5pt}\,\myor\,\hspace{0.5pt} ((5 \cdot \true) +_{.5}  (6 \cdot \false))\]
Intuitively,
like the program of Section~\ref{subsec:first-syntax},
this program could
return either $\true$ or $\false$, with respective rewards $5$ and $6$.
Both outcomes are possible on the right branch of its choice, each with probability $.5$.
The intended semantics aims to maximize expected rewards, so that branch is selected.

This example illustrates how the language can express MDP-like
transitions. In MDPs, at each time step, the decision-maker chooses an
action, and the process randomly moves to a new state and yields
rewards; the distribution over the new states depends on the current
state and the action. In our language, all decisions are binary,
but bigger decisions can be programmed
from them. Moreover, the decisions are separate from the probabilistic
choices and the rewards, but as in this example it is a simple matter of programming to combine them.
A more complete encoding of MDPs can be done by adding primitive recursion to the language, as suggested in the Introduction.

\subsection{Rewards and additional effects}
As in Section~\ref{subsec:firstrewards}  for both the operational and denotational semantics of our language we need a set of rewards $\R$ with appropriate structure and a monad employing it. To specify the structure we require on $\R$, we employ  the notion of a barycentric commutative monoid.
\emph{Barycentric algebras}  (also called convex algebras) are equipped with binary \emph{probabilistic choice} functions $+_p\type \R^2 \rightarrow \R \;\; (p \in [0,1])$ such that the following four equations hold:
\[\begin{array}{lcll} x +_1 y & = & x\\
x +_p x & = & x \\
x +_p y & =&  y +_{1- p} x \\
(x +_p y) +_q z & = & x +_{pq} (y +_{\frac{(1 - p)q} {1 - pq}}  z) & (p,q < 1)
\end{array}\]
 \emph{Barycentric commutative monoids} are barycentric algebras further equipped with a commutative monoid structure such that the monoid operation distributes over probabilistic choice, i.e., writing additively:
\[r + (s +_p s') = (r + s) +_p (r + s')\]

Barycentric algebras, introduced by Stone in~\cite{stone1949postulates}, provide a suitable algebraic structure for probability. They are equivalent to \emph{convex spaces} (also called convex algebras), which  are algebras equipped with 
operations $\sum_{i =1}^n p_ix_i$ (where the $p_i$ are in $[0,1]$, and $\sum_{i=1}^n p_i = 1$), subject to natural axioms
~\cite{pumplun1995convexity};  we  use the two notations interchangeably. Any mathematical expression built up using the operations of convex spaces from  mathematical expressions $e_i \; (n> 0, i=1,n)$ can be rewritten in the form $\sum_{i =1}^n p_ie_i$ using the axioms of convex spaces (and uniquely so if the $e_i$ do not involve the operations of convex spaces). For information on the extensive history of these concepts see~\cite{SW15,KeimelP16}.

Barycentric commutative monoids appear in the semantics of programming languages with probabilistic choice and
nondeterminism and  in categorical treatments of probability (for example, see~\cite{VW06,KeimelP16,DS21,Jac21,DPS18}).

 Turning to our assumptions on rewards, we assume  a set $\R$ of rewards is available, and that it is  equipped with:
 \begin{itemize}
 \item a barycentric commutative monoid structure, and
 \item  a total order with probabilistic choice and addition preserving and reflecting the order in their first argument (and so too in their second), in that, for all $r,s,t \in \R$:
\[r \leq s \iff r +_p t \leq s +_p t\quad (p \in (0,1))\]
and
\[r \leq s \iff r + t \leq s + t\]
 \end{itemize}
(Note the restriction on $p$ in the above condition on probabilistic choice.)
In the three examples of Section~\ref{subsec:firstrewards} (where the domain of R is the set of reals, nonnegative reals, or positive reals, respectively), probabilistic choice can be defined using the usual convex combination of real numbers:
$r +_p s = pr + (1-p)s$.
As in Section~\ref{subsec:firstrewards} we further assume that there is  an element $\sden{c}$ of $\R$ for each $c\type \Rew$ (with, in particular, $\sden{0} = 0$), and that  $\R$ is expressively non-trivial.

\newcommand{\DW}{\mathrm{DW}}

Our monad is the combination \[\DW(X) \eqdef \Pr(\R \times X)\] of the finite probability distribution monad with the writer monad for both  operational and denotational semantics. Our selection operational semantics, defined below,  evaluates programs $M$ of type $\sigma$ to finite distributions of pairs $\tuple{r,V}$, with $r \in \R$ and $V\type\sigma$, that is to elements of $\DW(\Val_\sigma)$.
 The monad is the free-algebra monad for \emph{barycentric $\R$-modules}. These are algebras with: an $\R$-indexed family of unary operations, written as $\give(r, -)$ or $r \cdot -$, forming an $\R$-action (Equation~\ref{reward-eqn}); and  a $[0,1]$-indexed family $- +_p -$ of binary operations forming a barycentric algebra over which the $\R$-action distributes, i.e., with the following equation holding:
\begin{equation} \label {rew-conv-eqn} r\cdot (x +_p y) = r\cdot x +_p  r\cdot y\end{equation}
The resulting monad has unit $(\eta_{\DW})_X (x) = \delta_{\tuple{0,x}}$;
the extension to $\DW(X)$ of a map $f\type X \rightarrow A$ to an algebra $A$ is given by
\[f^{\dagger_{\DW}}(\sum_{i = 1}^n p_i \tuple{r_i,x_i}) = \sum_{i = 1}^n p_i (r_i\cdot f(x_i))\]
(We used the Dirac distribution $\delta_z$ here; below, as is common, we just write $z$.)
With the assumptions made on $\R$, it forms a barycentric $\R$-module. Viewing $\R$ as a $\DW$-algebra, $\alpha_{\DW}\!\type\! \DW(\R) \rightarrow \R$, we have $\alpha_{\DW} =  (\id_\R)^{\dagger_{\DW}}$; explicitly:
\[\alpha_{\DW}(\sum_{i = 1}^n p_i \tuple{r_i,s_i}) = \sum_{i = 1}^n p_i (r_i + s_i)\]
The two $\DW$-algebraic operations  are:
\[(\give_{\DW})_X(r, \sum_{i = 1}^n p_i (r_i,x_i)) = \sum_{i = 1}^n p_i (r+ r_i,x_i)
\qquad
 ({+_p}_{\DW})_X(\mu,\nu) = p\mu + (1-p)\nu\]
 They are induced by the generic effects
 \[(g_{\DW})_{\give} \type \R \rightarrow \DW([1])\qquad
(g_{\DW})_{+} \type [0,1] \rightarrow \DW([2])\]
where  $(g_{\DW})_{\give}(r) = \tuple{r,\ast}$ and $(g_{\DW})_{+}(p) = p\tuple{0,0} + (1-p)\tuple{0,1}$.  We generally write $(\give_{\DW})_X$ using an infix operator $(\cdot_{\DW})_X$, as in Section~\ref{subsec:firstrewards}.

\subsection{Operational semantics}\label{sopsem2}

   For the ordinary operational semantics, as in Section~\ref{sopsem1} we assume available functions $\val_f$ for the function symbols $f$ of the basic vocabulary, as discussed in Section~\ref{gen-opsem}. For the selection operational semantics, we again take a game-theoretic point of view, with Player now playing a game against Nature, assumed to make probabilistic choices. Player therefore seeks to optimize their expected rewards. Effect values $E\type \sigma $ are regarded as games as before, but with one additional clause:
  \begin{itemize}
  \item[-] if $E = E_1 +_p E_2$, it is Nature's turn to move. Nature picks $E_1$ with probability $p$, and $E_2$ with probability $1-p$.
  \end{itemize}
To account for probabilistic choice we add a rule to the definition of strategies:
\[\frac{s_1\type E_1 \quad s_2 \type E_2}{(s_1,s_2)\type E_1 +_p E_2}\]
(Player will need a strategy for whichever move Nature chooses) and a case to the definition of the  total orders on strategies:
\begin{itemize}
\item Game is $E_1 +_p E_2$:
\[(s_1,s_2) \leq_{E_1 \,+_p\, E_2} (s'_1,s'_2) \iff  s_1 <_{E_1} s'_1 \quad  \vee \quad (s_1 = s'_1 \;\wedge\; s_2 \leq_{E_2} s'_2)\]
\end{itemize}

\noindent
For any effect value $E\type\sigma$, the outcome
$\Out(s,E)$
of  a strategy $s\type E$  is a finite probability distribution over $\R \times \Val_{\sigma}$, i.e., an element of $\Pr(\R \times \Val_{\sigma})$.
It is defined by:
\[\begin{array}{lcl}
\Out(\ast, V) & = & \tuple{0,V}\\
\Out(1s, E_1 \,\myor\, E_2) & = & \Out(s, E_1)\\
\Out(2s, E_1 \,\myor\, E_2) & = & \Out(s, E_2)\\
\Out(s, c\cdot E) & = & \sden{c} \cdot_{\Val_{\sigma}}\Out(s,E)\\
\Out((s_1,s_2), E_1 \,+_p\, E_2) & = & p \,\Out(s_1,E_1) \;+\; (1-p)\,\Out(s_2,E_2)\\
\end{array}\]

 The expected reward of a finite probability distribution on $\R \times X$, for a set $X$, is
\[ \Ex_X\left (\sum_{i = 1}^n p_i (r_i,x_i) \right ) \eqdef
\sum_{i = 1}^n p_ir_i\]
Note that $\Ex\type \Pr(\R \times X) \rightarrow \R$ can be written as $\ER{\DW}{-}{0}\;  (= \alpha_{\DW}\circ\DW(0))$, similarly to how  $\pi_1\type \R \times X \rightarrow \R$  could be in Section~\ref{sopsem1}.
 The expected reward of a strategy is:
\[ \Rew(s, E) \eqdef \Ex(\Out(s,E))\]

Our selection operational semantics, $\SOp(M)\in \R \times \Val_{\sigma}$ for $M:\sigma$, is defined as before by:
\[\SOp(M) = \Out(\argmax \, s\type \Op(M).\,\Rew(s,\Op(M)), \Op(M))\]
where we are now, as anticipated, maximizing  expected rewards.

We remark that, now that probabilistic choice is available, we could change our strategies to make a probabilistic choice for effect values $E_1\,\myor\, E_2$. However, as with Markov decision processes~\cite{Fel08}, that would make no change to the optimal  expected reward. It would, however, make a difference to the equational logic of choice if we chose with equal probability between effect values with equal expected reward: choice would then be commutative, but not associative.

Much as in Section~\ref{first}, we now develop a local characterization of the globally optimizing selection operational semantics.
We give this characterization in Theorem~\ref{op-comp2}, below; it is analogous to Theorem~\ref{op-comp1} in Section~\ref{first}. Some auxiliary lemmas are required. The first of them is
another $\argmax$ lemma  enabling us to deal with strategies for probabilistic choice.
\begin{lem} (Second argmax lemma)\label{argmax2} Let $P$ and $Q$ be finite  total orders,
let $P \times Q$ be given the lexicographic ordering, and suppose $\gamma\type P \times Q \rightarrow \R$. Define $g\type P \rightarrow Q$, $\un{u} \in P$ and $\un{v} \in Q$ by:
\[\begin{array}{lcl}
g(u) & = & \argmax\, v\type Q.\, \gamma(u,v)\\
\un{u}& = & \argmax\, u\type P.\, \gamma(u,g(u)) \\
\un{v}& = & g(\un{u})\\
\end{array}\]
Then:
\[(\un{u},\un{v}) = \argmax\, (u,v)\type P \times Q.\, \gamma(u,v)\]
\end{lem}
\cutproof{\begin{proof} Consider any pair $(u_0,v_0)$. By the definition of $g$ we have $g(u_0) \preceq v_0$ in the sense that:
\[\gamma(u_0, g(u_0)) > \gamma(u_0,v_0) \vee (\gamma(u_0,g(u_0)) = \gamma(u_0,v_0) \wedge g(u_0) \leq v_0)\]
and it follows that $(u_0,g(u_0)) \preceq_\gamma (u_0,v_0)$.

Next, by the definition of $\un{u}$ we have $\un{u} \preceq u_0$ in the sense that:
\[ \gamma(\un{u}, g(\un{u})) > \gamma(u_0, g(u_0)) \vee (\gamma(\un{u}, g(\un{u})) = \gamma(u_0, g(u_0)) \wedge \un{u} \leq u_0)\]
and it follows that $ (\un{u}, g(\un{u})) \preceq_\gamma (u_0,g(u_0))$. (The only non-obvious point may be that in the case where $\gamma(\un{u}, g(\un{u})) = \gamma(u_0, g(u_0))$, we have $\un{u} \leq u_0$, so either
$\un{u} < u_0$, when  $ (\un{u}, g(\un{u})) <_\gamma (u_0,g(u_0))$ or else
$\un{u} = u_0$, when  $ (\un{u}, g(\un{u})) = (u_0,g(u_0))$.)

So, as $\un{v} = g(\un{u})$, we have
\[ (\un{u}, \un{v}) = (\un{u}, g(\un{u})) \preceq_\gamma (u_0,g(u_0)) \preceq_\gamma  (u_0, v_0)\]
establishing the required minimality of $(\un{u}, \un{v})$.
\end{proof}}

The next lemma concerns expectations for probability distributions constructed by the reward and convex combination operations.
\begin{lem}\label{rewE}
We have:
\begin{enumerate}
\item $\Rew(s, c\cdot E)= \sden{c} + \Rew(s, E) $
\item  $\Rew(is_i, E_1 \,\myor\, E_2) = \Rew(s_i, E_i) \quad (i =1,2)$
\item $\Rew((s_1,s_2), E_1 \,+_p\, E_2)= p \Rew(s_1, E_1) + (1-p)\Rew(s_2, E_2) $
\end{enumerate}
\end{lem}
\begin{proof}
The second part is evident. For the other two, using the fact that $\Ex$ is a homomorphism, we calculate:
\[\begin{array}{lcl}\Rew(s, c\cdot E)& = & \Ex(\Out(s,c\cdot E))\\
& = & \Ex(\sden{c}\cdot\Out(s,\cdot E))\\
& = & \sden{c} + \Ex(\Out(s,\cdot E))\\
& = & \sden{c} + \Rew(s, E)\end{array} \]
and
\begin{align*}
\Rew((s_1,s_2), E_1 \,+_p\, E_2) & = \Ex(\Out((s_1,s_2), E_1 \,+_p\, E_2))\\
                                                                                 & = \Ex(p \,\Out(s_1,E_1) \;+\; (1-p)\,\Out(s_2,E_2))\\
                                                                    & = p\Ex(\Out(s_1,E_1)) +(1-p)\Ex(\Out(s_2,E_2))\\
                                                                    & = p\Rew(s_1,E_1) +(1-p)\Rew(s_2,E_2) \tag*{\qedhere}
\end{align*}
\end{proof}

\begin{thm}\label{op-comp2} The following hold for well-typed effect values:
\begin{enumerate}
\item $\SOp(V) = \tuple{0,V}\; (= \eta_{\DW}(V))$\label{partone-2}
\item
 $\SOp(E_1 \,\myor\, E_2)  =   \SOp(E_1) \,\max{\Ex}\, \SOp(E_2) \;\, (=  \SOp(E_1) \,\max{\ER{\DW}{-}{0}}\, \SOp(E_2))$\label{parttwo-2}

\item  $\SOp(c\cdot E) = \sden{c} \cdot \SOp(E)$\label{partthree-2}
\item  $\SOp(E_1 \,+_p\, E_2) =  p\SOp(E_1) + (1-p)\SOp(E_2)$\label{partfour-2}
\end{enumerate}
\end{thm}
\cutproof{\begin{proof}
\POPLomit{
\begin{enumerate}
\item The proof here is the same as the corresponding case of Theorem~\ref{op-comp1}.
\item The proof here is the same as that of the corresponding case of Theorem~\ref{op-comp1}, except that $\pi_1$ is replaced by $\Ex$.
\item
The proof here is again the same as that of the corresponding case of Theorem~\ref{op-comp1}, except that we use Lemma~\ref{rewE} to show that $\Rew(s, c\cdot E) =\sden{c} + \Rew(s, E)$.

\item
}
We just consider the fourth case. We have:
\[\begin{array}{lcl} \SOp(E_1 \,+_p\, E_2)  = \\ \qquad\Out(\argmax\, (s_1,s_2) \type E_1 \,+_p\, E_2.\,\Rew((s_1,s_2),E_1 \,+_p\, E_2), E_1 \,+_p\, E_2)\\
           \end{array}\]
So, following the second argmax lemma (Lemma~\ref{argmax2}), we first consider the function
\[\begin{array}{lclcl}
  f(s_1,s_2) & \eqdef & \Rew((s_1,s_2),E_1 \,+_p\, E_2) & = &  p\Rew(s_1,E_1) + (1-p) \Rew(s_2,E_2)
\end{array}\]
where the second equality holds by Lemma~\ref{rewE}.
We  then consider the function:
\[\begin{array}{lcl}
g(s_1) & \eqdef & \argmax\, s_2 \type E_2.\, f(s_1,s_2)\\
           & = & \argmax\, s_2 \type E_2.\, p\Rew(s_1,E_1) + (1-p) \Rew(s_2,E_2)\\
           & = & \argmax\, s_2 \type E_2.\,  \Rew(s_2,E_2)\\
\end{array}\]
where the second equality holds  as convex combinations are order-preserving and reflecting in their second argument.
Finally we consider
\[\begin{array}{lcl}
\un{s}_1 & \eqdef & \argmax\, s_1 \type E_1.\, f(s_1,g(s_1))\\
           & = & \argmax\, s_1 \type E_1.\, p\Rew(s_1,E_1) + (1-p)\Rew(g(s_1),E_2) \\
           & = & \argmax\, s_1 \type E_1.\,  \Rew(s_1,E_1)\\
\end{array}\]
where the last equality holds as convex combinations are order-preserving and reflecting in their first argument, and as $g(s_1)$ is independent of $s_1$.

So setting
\[\un{s}_2 = g(\un{s}_1) = \argmax\, s_2 \type E_2.\,  \Rew(s_2,E_2)\]\\[-1em]
by the second argmax lemma (Lemma~\ref{argmax2}) we have:
\[(\un{s}_1,\un{s}_2) = \argmax\, (s_1,s_2) \type E_1 \,+_p\, E_2.\,\Rew((s_1,s_2),E_1 \,+_p\, E_2)\]
so we finally have:
\begin{align*}
 \SOp(E_1 \,+_p\, E_2) & = \Out((\un{s}_1,\un{s}_2), E_1 \,+_p\, E_2)\\
                                             & = p\Out(\un{s}_1, E_1)  + (1-p)\Out(\un{s}_2, E_2)\\
                                             & = p\Out( \argmax\, s_1 \type E_1.\,  \Rew(s_1,E_1), E_1) \\
                                             & \; + (1-p)\Out( \argmax\, s_2 \type E_2.\,  \Rew(s_2,E_2), E_2)\\
                                             & = p \SOp(E_1) + (1-p) \SOp(E_2) \tag*{\qedhere}
\end{align*}
\POPLomit{
\end{enumerate}
}
\end{proof}}

\noindent
There is an analogous lemma to Lemma~\ref{sopsub1}, that  substitutions of constants for constants can equivalently be done via $\DW$.
\begin{lem}\label{sopsub2} Suppose $E\type b$ is an effect value, and that $f\type \Val_b \rightarrow \Val_{b'}$.
\begin{enumerate}
\item $\Ex_{\Val_{b'}} = \Ex_{\Val_{b}}\circ \DW(f)$\label{sopsub2part1}
\item Let
$g\type u \subseteq \Con_b$ be the restriction of $f$ to a finite set that includes all the constants of type $b$ in $E$. Then:
\[\SOp(E[g]) = \DW(f)(\SOp(E))\]\\[-2.5em]\label{sopsub2part2}
\end{enumerate}
\end{lem}
\begin{proof}
The first part is a straightforward calculation. The proof of the second part is by structural induction on $E$. The only non-trivial case is where  $E = E_1 \,\myor\, E_2$.
We calculate:
\begin{align*}
\SOp((E_1 \,\myor\, E_2)[g]) & = \SOp(E_1[g] \,\myor\, E_2[g])\\
                                              & = \SOp(E_1[g]) \,\max{\Ex_{\Val_{b'}}}\, \SOp(E_2[g])
                             \tag{By Theorem~\ref{op-comp2}.\ref{parttwo-2}} \\
                             & = \DW(f)(\SOp(E_1)) \,\max{\Ex_{\Val_{b'}}}\, \DW(f)(\SOp(E_2))
                             \tag{by induction hypothesis} \\
                             & = \DW(f)(\SOp(E_1) \,\max{\Ex_{\Val_{b'}}\circ\DW(f)}\, \SOp(E_2))\\
                             & = \DW(f)(\SOp(E_1) \,\max{\Ex_{\Val_{b}}}\, \SOp(E_2))
                             \tag{by Lemma~\ref{trivial}}\\
                             & = \DW(f)(\SOp(E_1\,\myor\,E_2))
                             \tag*{(By Theorem~\ref{op-comp2}.\ref{parttwo-2}) \qedhere}\\
\end{align*}
\end{proof}

\subsection{Denotational semantics}\label{den-sem2}

For the  denotational semantics we  consider three auxiliary monads $\T_1,\T_2$, and $\T_3$,  corresponding to three notions of observation
with varying degrees of correlation between possible program values and expected rewards.  Consider, for example, the effect value $1 \cdot \true +_{0.5} (2 \cdot \false +_{0.4} 3\cdot \true)$. With probability $0.5$ this returns $\true$ with reward $1$, with probability $0.2$ it returns $\false$ with reward $2$, and with probability $0.3$ it returns  $\true$ with reward $3$. This level of
detail is recorded using $\DW$ as our first monad. At a much coarser grain, we may simply record that $\true$ and $\false$ are  returned with respective probabilities $0.8$ and $0.2$, and that the overall expected reward is  $1.8$. This level of
detail is recorded using our third monad $\T_3$. At an intermediate level we may record the same outcome distribution and the expected reward given a particular outcome (in the example, the expected reward is $1.4$, given outcome $\true$,
and $2$, given outcome $\false$).
This level of detail is recorded using our second monad $\T_2$.

We  work with a general auxiliary monad, and then specialize our results to the $\T_i$. Specifically, we assume we have: a monad $\T$;
 $\T$-generic effects $(g_{\T})_{\give}: \R \rightarrow \T([1])$
 and $(g_{\T})_{+_p} \in \T([2])$,
with corresponding algebraic operations  \[(\give_\T)_X\type \R \times \T(X) \rightarrow \T(X) \qquad
 ({+_p}_{\T})_X\type \T(X)^2 \rightarrow \T(X)\]
 together with a $\T$-algebra $\alpha_{\T}\type\T(\R) \rightarrow \R$, such that, using  evident infix notations:
\begin{itemize}
\item[] (A1) for any set $X$, $(\give_\T)_X\type \R \times \T(X) \rightarrow \T(X)$
and $({+_p}_{\T})_X\type \T(X)^2 \rightarrow \T(X)$ form a barycentric $\R$-module, and
\item[]
(A2) the algebra map $\alpha_\T\type \T(\R) \rightarrow \R$ is a barycentric $\R$-module homomorphism, i.e., for $x,y \in \T(\R)$ we have:
\[\alpha_\T(r\cdot x) = r + \alpha_\T(x) \qquad \alpha_\T(x\, +_p\, y) = \alpha_\T(x)\, +_p\, \alpha_\T(y)\]
\end{itemize}
So we have the anticipated strong monad
\[\S(X) =  (X \rightarrow \R) \rightarrow \T(X)\]
We assume available semantics of base types, constants, and function symbols, as discussed for the language without probability in Section~\ref{den-sem1} with, additionally, the function symbols $\oplus_p$ denoting the corresponding convex combination operations on $\R$. As before, different constants of the same type are required to receive different denotations and the consistency condition~\ref{opdencon} is required to be satisfied.

As regards the algebraic operation symbols,  for $\myor$ we use the algebraic operation $\myor$ given by Equation~\ref{ordef}, so
\begin{equation} (\myor)_X(G_0,G_1)(\gamma) = G_0\gamma\;  \max{X,\ER{\T}{-}{\gamma}} \; G_1\gamma%
\label{or2def}\end{equation}
For  $\give$ and $+_p$
we take the algebraic operations $(\give_\S)_X$ and $({+_p}_\S)_X$ induced by the
$(\give_{\T})_X$ and $({+_p}_{\T})_X$, so:
\[  (\give_\S)_X(r,G)(\gamma)  =  (\give_{\T})_X(r, G\gamma)\]
and
 \[ ({+_p}_\S)_X(F,G)(\gamma)  =  ({+_p}_{\T})_X(F(\gamma), G(\gamma))
\]

As mentioned above, our first monad is $\T_1 \eqdef \DW$. With its associated generics for reward and probabilistic choice and $\T$-algebra it evidently satisfies the two assumptions (A1) and (A2).

 Writing $\supp(\nu)$ for the support of a probability distribution $\nu$, our second monad  is
 \[\T_2(X)  =  \{\tuple{\mu, \rho} \mid \mu \in \Pr(X), \rho\type \supp(\mu) \rightarrow \R\}\]
It  is the free-algebra monad for algebras with an $\R$-indexed family of unary operations, written as $\give(r, -)$ or $r \cdot -$, and a $[0,1]$-indexed family $- +_p -$ of binary operations satisfying the equations for $\DW$ together with the equation:
\begin{equation} \label {gather-eqn2} r\cdot x \,+_p \, s\cdot x = (r +_p s) \cdot x \end{equation}
The two $\T_2(X)$-algebraic operations are:
\[(\give_{\T_2})_X(r, \tuple{\mu, \rho}) = \tuple{\mu, x \mapsto r + \rho(x) }\]
and
\[ ({+_p}_{\T_2})_X(\tuple{\mu, \rho },\tuple{\mu',\rho'}) = \tuple{p\mu + (1-p)\nu, \rho''}\]
 where:
 \[\rho''(x) = \left \{\begin{array}{ll}
                                 \rho(x) +_q \rho'(x) & (x \in \supp(\mu) \cap \supp(\mu'))\\
                                 \rho(x) & (x \in \supp(\mu) \backslash \supp(\mu') )\\
                                 \rho'(x) & (x \in \supp(\mu') \backslash \supp(\mu) )\\
                            \end{array}
                   \right.\]
 where $q = p\mu(x)/(\mu(x) +_p \mu'(x))$. One can then show that:
 \begin{equation} \sum_{i = 1}^np_i \tuple{\mu_i,\rho_i} = \tuple{\mu,\rho}\label{bigavT2} \end{equation}
 where $\mu = \sum_{i = 1}^np_i\mu_i$ and, for $x \in \supp(\mu)$:
 \[\rho(x) = \sum_{\supp(\mu_i) \,\ni\, x} \frac{p_i\mu_i(x)}{\mu(x)}\rho_i(x)\]

The resulting monad has unit $(\eta_{\T_2})_X (x) = \tuple{x,x \mapsto 0\}}$;
the extension to $\T_2(X)$ of a map $f\type X \rightarrow A$ to an algebra $A$ is given by
\[f^{\dagger_{\T_2}}(\mu, \rho) = \sum_{x \in \supp(\mu)} \mu(x)(\rho(x)\cdot f(x))\]
 and  for any $f\type X \rightarrow Y$ we have:
\[\T_2(f)(\mu, \rho) = \tuple{\Pr(\mu), \rho'}\]
where
\[\rho'(y) = \frac{\sum_{f(x) = y} \mu(x) \rho(x)}{\Pr(\mu)(y)}\quad (y \in \supp(\Pr(\mu)))\]
Equation~\ref{gather-eqn2} holds for $\R$, using commutativity and homogeneity, so we can take the algebra map $\alpha_{\T_2}$  to be $\id_\R^{\dagger_{\T_2}}$; explicitly we find:
\[\alpha_{\T_2} \left ( \sum_{i = 1}^n p_i r_i, \rho \right) = \sum_{i = 1}^n p_i (\rho(r_i) + r_i)\]

 Our third monad $\T_3(X)  =  \Pr(X) \times \R$ is the free-algebra monad for algebras with an $\R$-indexed family of unary operations, written as $\give(r, -)$ or $r \cdot -$, and a $[0,1]$-indexed family $- +_p -$ of binary operations satisfying the equations for $\DW$ and the equation:
\begin{equation} \label {gather-eqn3} r\cdot x \,+_p \, s\cdot y = (r +_p s) \cdot x \,+_p\,  (r +_p s) \cdot y \end{equation}
The two $\T_3(X)$-algebraic operations are:
\[(\give_{\T_3})_X(r, \tuple{\mu, s}) = \tuple{\mu, r +s}\]
and
\[ ({+_p}_{\T_3})_X(\tuple{\mu,r},\tuple{\nu,s}) = \tuple{p\mu + (1-p)\nu, pr + (1-p)s}\]
One can then show that:
\begin{equation} \sum_{i = 1}^n p_i \tuple{\mu_i,r_i} = \tuple{\sum_{i = 1}^n p_i, \sum_{i = 1}^n\mu_i,r_i}%
\label{bigavT3}\end{equation}

The resulting monad has unit \[(\eta_{\T_3})_X (x) = \tuple{x,0}\]
the extension to $\T_3(X)$ of a map $f\type X \rightarrow A$ to an algebra $A$ is given by
\[f^{\dagger_{\T_3}}(\sum p_i x_i, r) = r\cdot \sum p_i f(x_i)\]
and  \[\T_3(f)(\tuple{ \sum p_i x_i, r}) =  \tuple{ \sum p_i f(x_i), r}\]
Unfortunately Equation~\ref{gather-eqn3} need not hold for $\R$ with the assumptions made on it so far; indeed, while it does hold for the two examples with the reals and addition, it does not hold for the example of the positive reals and multiplication.
When dealing with $\T_3$ we therefore assume additionally that $\R$ satisfies Equation~\ref{gather-eqn3}, and
so we can take $\alpha_{\T_3}$ to be $\id_\R^{\dagger_{\T_3}}$; explicitly we find:
\[\alpha_{\T_3} \left ( \sum_{i = 1}^n p_i r_i, r \right) = r + \left (\sum_{i = 1}^n p_i r_i \right )\]

Define comparison maps:
\[ (\theta_{\DW,\T})_X\type \DW(X) \rightarrow \T(X) \; \eqdef \; (\eta_{\T})_X^{\dagger_{\DW}}\]
These functions are useful when discussing adequacy and full abstraction.
Explicitly we have:
\begin{equation}\label{thetafor} (\theta_{\DW,\T})_X(\sum_{i = 1}^n p_i\tuple{r_i,x_i}) =
\sum_{i = 1}^n p_i (r_i(\cdot_\T)_{\T(X)} \eta_\T(x_i))\end{equation}
\begin{lem}  $\theta_{\DW,\T}$ is a monad morphism.\label{thetamor}
\end{lem}
\begin{proof} We have to show that $\theta$ is natural and preserves the unit and multiplication maps. We make use of Equation~\ref{thetafor} throughout the proof.

For naturality we need to show that for $f\type X \rightarrow Y$ we have
$\T(f)\circ \theta_X = \theta_Y \circ \DW(f)$. Choosing
$\sum_{i = 1}^n p_i\tuple{r_i,x_i} \in \DW(X)$ we have:
\[\begin{array}{lcl}\T(f)(\theta_X(\sum_{i = 1}^n p_i\tuple{r_i,x_i}))
   & = & \T(f)(\sum_{i = 1}^n p_i (r_i\cdot_\T \eta_\T(x_i))) \\
      & = & \sum_{i = 1}^n p_i (r_i\cdot_\T \T(f)(\eta_\T(x_i)))\\
         & = & \sum_{i = 1}^n p_i (r_i\cdot_\T \eta_\T(f(x_i)))
    \end{array}  \]
    using the fact that maps of the form $\T(f)$ act homomorphically on algebraic operations, and:
    \[\begin{array}{lcl} \theta_Y(\DW(f)(\sum_{i = 1}^n p_i\tuple{r_i,x_i}))
    & = & \theta_Y(\sum_{i = 1}^n p_i\tuple{r_i,f(x_i)})\\
             & = & \sum_{i = 1}^n p_i (r_i\cdot_\T \eta_\T(f(x_i)))
    \end{array}\]

    For preservation of the unit we have to show that $(\eta_T)_X = \theta_X\circ \eta_\DW$. This is immediate from the definition of $\theta$.

    For preservation of multiplication we have to show that
    \[\theta_X\circ (\mu_\DW)_X = (\mu_\T)_X\circ \theta_{\T(X)}\circ \DW(\theta_X)\]
    To this end, choose $\sum_{i = 1}^n p_i\tuple{r_i,u_i} \in \DW(\DW(X))$ where
    $u_i = \sum_{j = 1}^{m_i} q_{ij}\tuple{s_{i,j},x_{i,j}}$, for $i = 1,\ldots,n$.
    Then, using the fact that monad multiplications act homomorphically on algebraic operations,  we have:
    \[\begin{array}{lcl}
    \theta_X((\mu_\DW)_X( \sum_{i = 1}^n p_i\tuple{r_i,u_i}))
    & = &  \theta_X(\sum_{i = 1}^n p_i \sum_{j = 1}^{m_i} q_{ij} \tuple{r_i + s_{i,j},x_{i,j}})\\
    & = &
      \sum_{i = 1}^n p_i \sum_{j = 1}^{m_i} q_{ij} (r_i + s_{i,j})\cdot_\T \eta_\T(x_{i,j})
       \end{array}\]
    and:
    \begin{align*}
&        (\mu_\T)_X(\theta_{\T(X)}( \DW(\theta_X) ( \textstyle\sum\nolimits_{i = 1}^n p_i\tuple{r_i,u_i}))) \\
&        \hspace{100  pt} =         (\mu_\T)_X(\theta_{\T(X)}  ( \textstyle\sum\nolimits_{i = 1}^n p_i\tuple{r_i,\theta_X(u_i)}))\\
& \hspace{100  pt} =  (\mu_\T)_X  ( \textstyle\sum\nolimits_{i = 1}^n p_i (r_i (\cdot_T)_{\T(\T(X))} (\eta_T)_{\T(X)} \theta_X(u_i)))\\
& \hspace{100  pt} =     \textstyle\sum\nolimits_{i = 1}^n p_i (r_i (\cdot_T)_{\T(X)} (\mu_\T)_X((\eta_T)_{\T(X)} \theta_X(u_i)))\\
& \hspace{100  pt} =     \textstyle\sum\nolimits_{i = 1}^n p_i (r_i (\cdot_T)_{\T(X)}  \theta_X(u_i))\\
& \hspace{100  pt} =     \textstyle\sum\nolimits_{i = 1}^n p_i (r_i (\cdot_T)_{\T(X)}  \theta_X(\textstyle\sum\nolimits_{j = 1}^{m_i} q_{ij}\tuple{s_{i,j},x_{i,j}}))\\
& \hspace{100  pt} =     \textstyle\sum\nolimits_{i = 1}^n p_i (r_i \cdot_T  \textstyle\sum\nolimits_{j = 1}^{m_i} q_{ij}(s_{i,j} \cdot_\T\eta_\T(x_{i,j})))\\
& \hspace{100  pt} =     \textstyle\sum\nolimits_{i = 1}^n p_i (\textstyle\sum\nolimits_{j = 1}^{m_i} q_{ij}(s_{i,j}\cdot r_i \cdot_T   \cdot_\T\eta_\T(x_{i,j})))\\
& \hspace{100  pt} =     \textstyle\sum\nolimits_{i = 1}^n p_i (\textstyle\sum\nolimits_{j = 1}^{m_i} q_{ij}((s_{i,j} + r_i) \cdot_T   \cdot_\T\eta_\T(x_{i,j}))) \tag*{\qedhere}
\end{align*}
\end{proof}

In the case of $\T_1$, $\theta_{\T_1}$ is the identity.  In the case of $\T_2$,
first, given a distribution $\mu = \sum_{i = 1}^n p_i\tuple{r_i,x_i} \in \T_1(X)$, define its \emph{value distribution} $\VDis(\mu)$ in
$\Pr(X)$,  and  its \emph{value support} $\vsupp(\mu) \subseteq X$  by:
 \[\VDis(\mu) = \sum_{i = 1}^n p_i x_i \qquad  \vsupp(\mu) = \{x_i\}\]
and then define the \emph{conditional expected reward} of $\mu$ given $x \in \vsupp(\mu)$ by:
\[\Rew(\mu | x) =  \frac{\sum_{x_i = x} p_ir_i}{\sum_{x_i = x} p_i} \]
We then have:
\[(\theta_{\T_2})_X (\mu) = \tuple{\VDis(\mu), \Rew(\mu | - ) }\]
as, using Equation~\ref{bigavT2}, we can calculate:
\[\begin{array}{lcl}\theta_{\DW,\T_2}(\sum_{i = 1}^n p_i\tuple{r_i,x_i})
& = & \sum_{i = 1}^n p_i (r_i\cdot_{\T_2} \eta_{\T_2}(x_i))\\
& = & \sum_{i=1}^n p_i\tuple{x_i, x_i \mapsto r_i}\\
& = & \tuple{\mu,\rho}\\
\end{array}\]
 where $\mu = \sum_{i = 1}^np_ix_i$ and, for $x \in \supp(\mu)$:
 \[\rho(x) = \sum_{x_i = x} \frac{p_i}{\mu(x)}r_i\]

In the case of $\T_3$ we have:
\[(\theta_{\T_3})_X(\mu) = \tuple{ \VDis(\mu),  \Ex(\mu)}\]
as, using Equation~\ref{bigavT3},  we can calculate:
\[
 \sum_{i = 1}^n p_i (r_i\cdot_{\T_3} \eta_{\T_3}(x_i))
 \; = \;  \sum_{i = 1}^n p_i \tuple{x_i,r}\\
 \; =  \;\tuple{\sum_{i = 1}^n p_i, \sum_{i = 1}^n x_i,r} \\
\]

Two properties of the $\T_i$ are useful when we consider full abstraction below.
For the first property, say that $\B$ is \emph{characteristic} for $\T$ if, for any set $X$ and any two $u,v \in \T(X)$ we have:
\[u \neq v \implies \exists f\type X \rightarrow \B.\, \T(f)(u) \neq \T(f)(v)\]

\begin{lem}\label{characteristic} $\B$ is characteristic for each of the $\T_i$.
\end{lem}
\cutproof{\begin{proof}
Fix $X$, and, for any $x \in X$ let $f_x \type X \rightarrow \B$ be the map that sends $x$ to $0$ and everything else in $X$ to $1$.

For the case of $\T_1$ suppose  we have distinct elements of  $\T_1(X)$, viz.\
$\mu = \sum_{i = 1}^m p_i\tuple{r_i,x_i}$  and
$\nu = \sum_{j = 1}^n q_j\tuple{s_j,y_j}$. Then there is an $\tuple{r_i,x_i}$ in the support of (say) $\mu$ that is either not in the support of $\nu$ or has different probability there.
 Then $\tuple{r_i,0}$ is in the support of $\T_1(f_{x_i})(\mu)$ but is either not in the support of $\T_1(f_{x_i})(\nu)$ or has different probability there.

In the case of $\T_2$ suppose we have distinct elements  of $\T_2(X)$, viz.\
$a = \tuple{\sum_{i = 1}^m p_i x_i, \rho_0}$  and  $b = \tuple{\sum_{j = 1}^n q_jy_j, \rho_1}$. If $\sum_{i = 1}^m p_i x_i$ and $\sum_{j = 1}^n q_j y_j$ are distinct we proceed as in the case of $\T_1$. Otherwise there is an $x_i$, say $x_1$, such that $\rho_0(x_1) \neq \rho_1(x_1)$. Let $\rho'_0$ and $\rho'_1$ be the second components of
 $\T_2(f_{x_1})(a)$ and  $\T_2(f_{x_1})(b)$. Then $\rho'_0(0) =   \rho_0(x_i)$ and $\rho'_1(0) =    \rho_1(x_i)$ and these are different.

In the case of $\T_3$ suppose we have distinct elements of $\T_3(X)$, viz.\
$a = \tuple{\sum_{i = 1}^m p_i x_i, r}$  and  $b = \tuple{\sum_{j = 1}^n q_jy_j, s}$. If $\sum_{i = 1}^m p_i x_i$ and $\sum_{j = 1}^n q_j y_j$ are distinct we proceed as in the case of $\T_1$. Otherwise, $r \neq s$, and $\T(f)$  distinguishes $a$ and $b$.
\end{proof}}

 For the second property, for any $X$ and $\gamma \type X \rightarrow \R$, define the \emph{reward addition} function
\[\myk_\gamma \type \T(X) \rightarrow \T(X)\]
to be $f^{\dagger_{\T}}$, where $f(x) \eqdef \gamma(x)\cdot_\T \eta_\T(x)$. Then we say that \emph{reward addition is injective for $\T$} if such functions are always injective.
 \begin{lem}\label{addition} Reward addition is injective for each of the $\T_i$.
\end{lem}
\cutproof{\begin{proof} Fix $X$ and $\gamma \type X \rightarrow \R$. Beginning with $\T_1$ for any
$\mu = \sum_{i = 1}^m p_i\tuple{r_i,x_i}$, with no $\tuple{r_i,x_i}$ repeated, we have
\[\myk_\gamma(\mu) = \sum_{i=1}^m p_i\tuple{\gamma(x_i) + r_i,x_i}\]
 with no
repeated $\tuple{\gamma(x_i) + r_i,x_i}$ (since the monoid addition on $\R$ reflects the order). So, for any such
$\mu = \sum_{i=1}^m p_i \tuple{r_i,x_i}$ and $\nu = \sum_{j=1}^n q_j\tuple{s_j,y_j}$,
if we have $\myk_\gamma(\mu) = \myk_\gamma(\nu)$, i.e., if  we have $\sum_{i=1}^m p_i \tuple{r_i,x_i} = \sum_{j=1}^n
q_j\tuple{s_j,y_j}$, then $m = n$ and, for some permutation $\pi$ of the indices, we have
$p_i\tuple{\gamma(x_i) + r_i,x_i} = q_{\pi(j)}\tuple{\gamma(y_{\pi(j)}) + s_{\pi(j)},y_{\pi(j)}}$. So then $x_i =  y_{\pi(j)}$, and $\gamma(x_i) = \gamma(y_{\pi(j)})$ follows, and we see that
$p_i\tuple{r_i,x_i} = q_{\pi(j)}\tuple{s_{\pi(j)},y_{\pi(j)}}$. So $\mu = \nu$, as required.

The proofs for $\T_2$ and $\T_3$ are similar, using the respective formulas
\[\myk_\gamma(\tuple{\sum_{i = 1}^mp_i x_i,\rho}) = \tuple{\sum_{i = 1}^mp_i x_i, x_i \mapsto \rho(x_i) + \gamma(x_i) }\]
and
\[\myk_\gamma(\tuple{\sum_{i = 1}^mp_i x_i,r}) = \tuple{\sum_{i = 1}^mp_i x_i, r +  \sum_i p_i \gamma(x_i)}\qedhere \]
\end{proof}}

 \subsection{Adequacy}

As in Section~\ref{subsec:first-adeq}, we aim to prove a selection adequacy theorem connecting the globally defined selection operational semantics with the denotational semantics. We again need some notation.
Using assumption (A1) of Section~\ref{den-sem2}, we set
 \[( \sem_\T)_\sigma =  ((\eta_\T)_{\den{\Ssem}{\sigma}}\circ  \Ssem_p)^{\dagger_{\DW}}\type  \DW(\Val_\sigma) \rightarrow \T(\den{\Ssem}{\sigma})\]
  So, for
$\mu = \sum_{i = 1}^n p_i\tuple{r_i,V_i} \in 
\DW( \Val_{\sigma})$ we have:
\[\sden{\mu}_\T =  \sum_{i = 1}^n p_i (r_i \cdot_\T \eta_\T(\den{\Ssem_p}{V_i}))  \]

\begin{lem}\label{auxaux2}
For any $\mu \in \DW(\Val_\sigma)$ we have:
$\alpha_{\T}(\T(0_{\sden{\sigma}})(\sden{\mu}_\T)) = \alpha_{\DW}(\DW(0_{\Val_\sigma})(\mu))$
\end{lem}
\begin{proof}
Suppose $\mu =  \sum_{i = 1}^n p_i\tuple{r_i,V_i}$. We calculate:
\begin{align*}
& \alpha_{\T}(\T(0)(\sden{ \textstyle\sum\nolimits_{i = 1}^n p_i\tuple{r_i,V_i}}_\T))  \\
& \hspace{50pt} = \; \alpha_{\T}(\T(0)(\textstyle\sum\nolimits_{i = 1}^n p_i (r_i \cdot_\T \eta_\T(\den{\Ssem_p}{V_i})))) \\
& \hspace{50pt} = \; \alpha_{\T}(\textstyle\sum\nolimits_{i = 1}^n p_i (r_i \cdot_\T \T(0)(\eta_\T(\den{\Ssem_p}{V_i})))) \\
& \hspace{50pt} = \; \alpha_{\T}(\textstyle\sum\nolimits_{i = 1}^n p_i (r_i \cdot_\T \eta_\T(0))) \\
& \hspace{50pt} = \; \textstyle\sum\nolimits_{i = 1}^n p_i (r_i +  \alpha_{\T}(\eta_\T(0)))  \tag{by assumption (A2) of Section~\ref{den-sem2}}\\
& \hspace{50pt} = \; \textstyle\sum\nolimits_{i = 1}^n p_i r_i\\
& \hspace{50pt} = \; \alpha_{\DW}(\DW(0)( \textstyle\sum\nolimits_{i = 1}^n p_i\tuple{r_i,V_i})) \tag*{\qedhere}
\end{align*}
\end{proof}

\begin{lem}\label{eff-sem2} For any effect value $E\type \sigma $ we have:
$\den{\Ssem}{E}(0) =  \sden{\SOp(E)}_\T$
\end{lem}
\begin{proof}
We proceed by structural induction on $E$:
\begin{enumerate}
\item Suppose that $E$ is a value $V$. We calculate:
\[\sden{\SOp(V)}_\T  \; = \; \sden{\tuple{0,V}}_\T
                                         \; = \;  \eta_{\T}(\den{\Ssem_p}{V})
                                        \; = \; \eta_{\S}(\den{\Ssem_p}{V})(0)
                                        \; = \; \den{\Ssem}{V}(0)
\]

\item Suppose that $E$ has the form $E_1\,\myor\, E_2$. We calculate:
\[\begin{array}{llll} \hspace{-10pt}
 \sden{\SOp(E_1 \;\myor \;E_2)}_\T  \\
 \hspace{30pt} = \;   \sden{\SOp(E_1) \, \max{\ER{\DW}{-}{0}} \, \SOp(E_2)}_\T
                               & (\mbox{by Theorem~\ref{op-comp2}.\ref{parttwo-1}})\\
 \hspace{30pt} = \;   \sden{\SOp(E_1) \, \max{\alpha_\DW\circ\DW(0)} \, \SOp(E_2)}_\T \\
  \hspace{30pt} = \; \sden{\SOp(E_1) \, \max{\alpha_\T\circ\T(0)\circ\T(\sden{\;}_\sigma)} \, \SOp(E_2)}_\T
                                  & (\mbox{by Lemma~\ref{auxaux2}}) \\
\hspace{30pt} = \; \sden{\SOp(E_1)}_\T \, \max{\alpha_\T\circ\T(0)} \, \sden{\SOp(E_2)}_\T
                              & (\mbox{using Lemma~\ref{trivial}}) \\
\hspace{30pt} = \; \den{\Ssem}{E_1}(0) \, \max{\alpha_\T\circ\T(0)} \, \den{\Ssem}{E_2}(0)
                              & (\mbox{by induction hypothesis}) \\
\hspace{30pt} = \; \myor_{\sden{\sigma}}(\den{\Ssem}{E_1},\den{\Ssem}{E_2})(0)
                              & (\mbox{by Equation~\ref{or2def}})\\
\hspace{30pt} = \; \den{\Ssem}{ E_1 \myor \,E_2}(0)\\
\end{array}\]
\item    Suppose that $E$ has the form $c\cdot E'$. We calculate:
\[\begin{array}{lcll}\sden{\SOp(c\cdot E')}_\T
                                                & = & \sden{\sden{c}\cdot_{\T} \SOp(E')}_\T
                                                    &   (\mbox{by Theorem~\ref{op-comp2}.\ref{partthree-1}})\\
                                                  & = &   \sden{c}\cdot_{\T} \sden{\SOp(E')}_\T
                                                        &  (\mbox{as $ \sden{\mbox{-}}_\T$ is a homomorphism})\\
                                                  & = &   \sden{c}\cdot_{\T} \den{\Ssem}{E'}(0)
                                                        &   (\mbox{by induction hypothesis})\\
                                                 & = &   (\sden{c}\cdot_\S \den{\Ssem}{E'} )(0)
                                                      &                                   (\mbox{by Equation~\ref{rew1def}})\\
                                                 & = &   \den{\Ssem}{c\cdot E'}(0)
\end{array} \]

\item Suppose that $E$ has the form $E_1\, +_p \, E_2$. We calculate:
\begin{align*}
 \sden{\SOp(E_1 \, +_p\, E_2)}_\T  &= \sden{\SOp(E_1) \, +_p \, \SOp(E_2)}_\T
                \tag{by Theorem~\ref{op-comp2}.\ref{partfour-2}} \\
               &= \sden{\SOp(E_1)}_\T \, +_p \, \sden{\SOp(E_2)}_\T
                            \tag{as $ \sem_\sigma$ is a homomorphism} \\
               &= \den{\Ssem}{E_1}(0) \, +_p \, \den{\Ssem}{E_2}(0)
                            \tag{by induction hypothesis} \\
               &= ({+_p}_\S)_{\sden{\sigma}}(\den{\Ssem}{E_1},\den{\Ssem}{E_2})(0)\\
               &= \den{\Ssem}{E_1 +_p E_2}(0) \tag*{\qedhere}
\end{align*}
\end{enumerate}

\end{proof}
\noindent
We then have selection adequacy for our language with probabilities:
\begin{thm}[Selection adequacy]\label{sel-ad2} For any program $M\type\sigma$ we have:
\[\den{\Ssem}{M}(0) =  \sden{\SOp(M)}_\T\]
\end{thm}
The  proof of this theorem is the same as that of  Theorem~\ref{sel-ad1}.
As before, the adequacy theorem implies that the globally optimizing operational semantics determines the denotational semantics at the zero-reward continuation.

For the converse direction, noting that
 \[\begin{array}{lclll}( \sem_\T)_\sigma & = &  ((\eta_\T)_{\den{\Ssem}{\sigma}}\circ  \Ssem_p)^{\dagger_{\DW}}\\
                                     & = &  (\T( \Ssem_p)\circ (\eta_\T)_{\Val_\sigma})^{\dagger_{\DW}} & (\mbox{$\eta$ is natural})\\
                                     & = & \T( \Ssem_p)\circ ((\eta_\T)_{\Val_\sigma})^{\dagger_{\DW}}\\
                                     & = & \T( \Ssem_p)\circ ( \theta_{\DW,\T})_{\Val_\sigma} \\
\end{array}\]
we see from the adequacy theorem that, for $M\!\type\! \sigma$, the denotational semantics determines
$ (\theta_{\DW,\T})_{\Val_\sigma}(\SOp(M)) \in \T(\Val_\sigma)$ up to $\T( \Ssem_p)$. We view
$ (\theta_{\DW,\T})_{\Val_\sigma}(\SOp(M))$ as an observation of the selection operational semantics of $M$, and so, for $M\type \sigma$ we adopt the notation:
\[\Ob_{\sigma, \T}(M) = ( \theta_{\DW,\T})_{\Val_\sigma}(\SOp(M)) \in \T(\Val_\sigma)\]
Using this notation, we see that the adequacy theorem determines observations $\Ob_{\sigma, \T}(M)$ up to $\T( \Ssem_p)$.
With the aid of the above discussion of the monad morphism $\theta_{\DW,\T}$ we find for $M\type \sigma$ that:
\[\begin{array}{lcl}
\Ob_{\sigma,\T_1}(M) & = &\SOp(M)\\[0.2em]
\Ob_{\sigma,\T_2}(M) & = & \tuple{\VDis(\SOp(M)), \Rew(\SOp(M)| - )} \\[0.2em]
\Ob_{\sigma,\T_3}(M) & = &  \tuple{\VDis(\SOp(M)), \Ex(\SOp(M))}
\end{array} \]
In the case where $\sigma$ is a product of base types, $ \T(\Ssem_p)$ is an injection. (For $\Ssem_p$  is
then an injection and $\T$ preserves injections with nonempty domain, as do all functors on sets.) So in this case the denotational semantics  determines $\T$-observations $\Ob_{\sigma,\T}(M)$ of the selection operational semantics of terms $M\type\sigma$.

\subsection{Full abstraction}\label{sec:equiv2}

We continue to proceed generally, as above,  in terms of an auxiliary monad $\T$ and algebra $\alpha_\T\type \T(\R) \rightarrow \R$.
Having a general notion of observation $\Ob_{b,\T}$ at base types, we have  corresponding general observational  equivalence relations
$\approx_{b,\T}$, and so, instantiating, observational  equivalence relations   $\approx_{b,\T_i}$ for the $\T_i$.
We write $\Ob_\T$ and $\approx_\T$ for $\Ob_{\bool,\T}$ and
$\approx_{\bool,\T}$, respectively, and similarly for the $\T_i$.
From the discussion of the selection adequacy theorem (Theorem~\ref{sel-ad2}) at base types, we see that the implications~\ref{adimp} and~\ref{moimp} hold for $\Ssem_\T$ and all $\Ob_{b,\T}$ and $\approx_{b,\T}$; we  then also have $M \approx_{b,\T} \Op(M)$ for base types $b$ and programs $M\type \sigma$.

We next consider, as we did for our first language, whether observing at different base types makes a difference to contextual equivalence.
\begin{lem}\label{oeop} Suppose that $\B$ is characteristic for $\T$ and that $b$ is a base type with at least two constants. Then for any base type $b'$ and programs $M_1, M_2\type b'$ we have:
\[M_1 \approx_{b,\T}  M_2 \implies \Ob_{b',\T}(M_1)  = \Ob_{b',\T}(M_2)\]
\end{lem}
 \begin{proof}
 We can assume  without loss of generality that the $M_i$ are effect values, and write $E_i$ for them. We assume $E_1 \approx_{b,\T} E_2$, and
suppose, for the sake of contradiction,  that $\Ob_{b',\T}(E_1)  \neq  \Ob_{b',\T}(E_2)$.
As $\B$ is characteristic for $\T$,  there is a map  $f\!\!\type\!\! \Val_{b'}\!\! \hookrightarrow\!\! \Val_{\bool}$ such that
$\T(f)(\Ob_{b',\T}(E_1))  \neq \T(f)(\Ob_{b',\T}(E_2))$.
As $b$ has at least two constants, there is an injection $\iota\type \Val_{\bool} \rightarrow \Val_b$. Set $f' = f\circ \iota\type \Val_{b'} \rightarrow \Val_{b}$.
As $\T$ preserves injections with nonempty domain we have $\T(f')(\Ob_{b',\T}(E_1))  \neq \T(f')(\Ob_{b',\T}(E_2))$.
Let $g$ be the restriction of $f'$ to $g \type u \rightarrow \Val_b$, where $u$ is the set of constants of type $b$ occurring
in $E_1$ or $E_2$.

As $E_1 \approx_{b,\T} E_2$ we have $\myF_gE_1 \approx_{b,\T} \myF_gE_2$, so
$\Ob_{b,\T}(\myF_gE_1) = \Ob_{b,\T}(\myF_gE_2)$, and so, by adequacy, $\den{\Ssem}{\myF_gE_1}(0) = \den{\Ssem}{\myF_gE_2}(0)$.
For $i = 1,2$, we calculate:
\[\begin{array}{lcll}
 \den{\Ssem}{\myF_gE_i}(0)
                                     & = & \den{\Ssem}{E_i[g]}(0)  & (\mbox{by Lemma~\ref{consub}})\\
                                     & = & \sden{\SOp(E_i[g])}_\T & (\mbox{by adequacy})\\
                                     & = & \sden{\DW(f')\SOp(E_i)}_\T & (\mbox{by Lemma~\ref{sopsub2}}.\ref{sopsub2part2})\\
                                     & = & \T(\sem)( (\theta_{\DW,\T})_{\Val_{b}}(\DW(f')(\SOp(E_i))))&\\
                                     & = & \T(\sem)(\T(f') (\theta_{\Val_{b'}}(\SOp(E_i))))&
                                     (\mbox{as $\theta_{\DW,\T}$ is natural} \\
                                     &&& \mbox{\, by Lemma~\ref{thetamor}})\\
                                     & = & \T(\sem)(\T(f') (\Ob_{b',\T}(E_i)))&
 \end{array}\]
So, as $\T(\sem)$ is injective, $\T(f') (\Ob_{b,\T}(E_1)) = \T(f') (\Ob_{b,\T}(E_2))$, yielding the required contradiction.
\end{proof}

We then have the following analogue of Proposition~\ref{obvar1}:
\begin{prop}\label{obvar2} Suppose that $\B$ is characteristic for $\T$. Then, for all base types $b$ and programs $M,N\type \sigma$, we have
\[M \approx_\T N \implies \; M \approx_{b,\T} N\]
with the converse  holding if there are at least two constants of type $b$.
\end{prop}
 As $\B$ is characteristic for the $\T_i$ (Lemma~\ref{characteristic}), we have invariance of the observational equivalences $\approx_{b,\T_i}$ under changes of base type with at least two constants.
Modulo a reasonable definability assumption, each of our three semantics is 
fully abstract at base types with respect to their corresponding notion of observational equivalence. We establish this via general results for $\T$ and $\alpha_{\T}\type \T(\R) \rightarrow \R$, as above.

We first need a general result on reward continuations.
Suppose $u  \subseteq \Con_b$
and suppose too that $\gamma\type \sden{b} \rightarrow \R$
is \emph{definable on $u$} in the sense that there is a (necessarily unique)
$g\type \Con_b \rightarrow \Con_\Rew$ such that $\gamma(\sden{c}) = \sden{g(c)}$, for $c \in u$.
Set $\myK_{u,\gamma} = \myF_h\type b \rightarrow b$ where $h(c) = g(c)\cdot c\; (c \in u)$. We have:
\[\den{\Ssem}{\myK_{u,\gamma} c}(0) = \gamma(\sden{c})\cdot_\T \eta_\T(\sden{c}) \quad (c \in u)\]
This program $\myK_{u,\gamma}$ can be used to reduce calling definable reward continuations to calling the
zero-reward continuation, modulo reward addition:\\
\begin{lem}\label{reduce} Suppose $E\type b$ is an effect value, $u$ a finite set of constants of type $b$ including all those occurring in $E$,  and
$\gamma \type \sden{b} \rightarrow \R$ is a reward function definable on $u$. Then we have:
\[\myk_\gamma(\den{\Ssem}{E}(\gamma)) = \den{\Ssem}{\myK_{u,\gamma} E}(0)\]
\end{lem}
\cutproof{\begin{proof}
The proof is by structural induction on $E$. If $E$ is a constant $c$, then
\[\myk_\gamma(\den{\Ssem}{c}(\gamma)) = \myk_\gamma(\eta_\T(\sden{c}))
= \gamma(\sden{c})\cdot_\T \eta_\T(\sden{c})
= \den{\Ssem}{\myK_{u,\gamma} c}(0)\]

Suppose next that $E$ has the form $E_1 \,\myor\, E_2$. We first show that
\[\alpha_\T \circ\T(\gamma) = \alpha_\T \circ\T(0) \circ \myk_\gamma  \qquad (\ast)\]
We have
$\myk_\gamma = f^{\dagger_{\T}}\type \T(\sden{b}) \rightarrow \T(\sden{b})$, where $f(x) \eqdef \gamma(x)\cdot_\T \eta_\T(x)$, for $x \in \sden{b}$. Setting  $g(x) \eqdef \T(0)(\gamma(x)\cdot_\T \eta_\T(x)) \; (= \gamma(x)\cdot_\T \T(0)(\eta_\T(x)))$, for $x \in \sden{b}$, we then see that  $\T(0) \circ \myk_\gamma = g^{\dagger_{\T}}\type \T(\sden{b}) \rightarrow \T(\R)$.
 Making use of assumption (A2) of Section~\ref{den-sem2}, we next see that
 $\alpha_\T (g(x)) = \alpha_\T (\gamma(x)\cdot_\T \T(0)(\eta_\T(x))) = \alpha_\T (\gamma(x)\cdot_\T \eta_\T(0)) = \gamma(x) + 0 = \gamma(x)$.
  This, in turn, yields
 $\alpha_\T \circ\T(0) \circ \myk_\gamma = \alpha_\T \circ g^{\dagger_{\T}} = \alpha_\T \circ \T(\alpha_\T \circ g)= \alpha_\T \circ \T(\gamma)$
 as required. (The second equation in this chain holds generally for monad algebras.)

We then calculate:
\begin{align*}
& \myk_\gamma(\den{\Ssem}{E_1 \,\myor\, E_2}(\gamma)) \\
& \hspace{10pt} =
    \; \myk_\gamma(\den{\Ssem}{E_1}(\gamma) \, \max{\ER{\T}{-}{\gamma}} \, \den{\Ssem}{E_2}(\gamma)) \\
& \hspace{10pt} =
    \; \myk_\gamma(\den{\Ssem}{E_1}(\gamma) \, \max{\alpha_\T \circ\T(\gamma)} \, \den{\Ssem}{E_2}(\gamma)) \\
& \hspace{10pt} =
    \; \left \{\begin{array}{ll} \myk_\gamma(\den{\Ssem}{E_1}(\gamma)) &
(\mbox{if\ }(\alpha_\T \circ\T(\gamma))\den{\Ssem}{E_1}(\gamma) \geq (\alpha_\T \circ\T(\gamma))\den{\Ssem}{E_2}(\gamma))\\
                                                                        \myk_\gamma(\den{\Ssem}{E_2}(\gamma)) & (\mbox{otherwise})
   \end{array}\right. \\
& \hspace{10pt} =
    \; \left \{\begin{array}{ll}    \den{\Ssem}{\myK_{u,\gamma} (E_1)}(0) &
            (\mbox{if\ }(\alpha_\T \circ\T(0)\circ \myk_\gamma)\den{\Ssem}{ E_1}(\gamma) \geq (\alpha_\T \circ\T(0)\circ \myk_\gamma)\den{\Ssem}{E_2}(\gamma))\\
                                    \den{\Ssem}{\myK_{u,\gamma} (E_2)}(0)  & (\mbox{otherwise})
        \end{array}\right.\\
        \tag{by induction hypothesis and $(*)$} \\
& \hspace{10pt} = \;
\left \{\begin{array}{ll}    \den{\Ssem}{\myK_{u,\gamma} (E_1)}(0) &
           ( \mbox{if\ }(\alpha_\T \circ\T(0)) \myk_\gamma(\den{\Ssem}{ E_1}(\gamma)) \geq (\alpha_\T \circ\T(0))\myk_\gamma(\den{\Ssem}{E_2}(\gamma)))\\
                                    \den{\Ssem}{\myK_{u,\gamma} (E_2)}(0)  & (\mbox{otherwise})
        \end{array}\right. \\
& \hspace{10pt} =
    \; \left \{\begin{array}{ll}    \den{\Ssem}{\myK_{u,\gamma} (E_1)}(0) &
            (\mbox{if\ }(\alpha_\T \circ\T(0))  \den{\Ssem}{\myK_{u,\gamma} (E_1)}(0) \geq (\alpha_\T \circ\T(0)) \den{\Ssem}{\myK_{u,\gamma} (E_2)}(0))\\
                                    \den{\Ssem}{\myK_{u,\gamma} (E_2)}(0)  & (\mbox{otherwise})
        \end{array}\right. \tag{by Lemma~\ref{reduce}} \\
& \hspace{10pt} =
    \; \den{\Ssem}{(\myK_{u,\gamma} E_1) \,\myor\, (\myK_{u,\gamma}  E_2)}(0) \\ 
&\hspace{10pt} = \;  \den{\Ssem}{\myK_{u,\gamma} (E_1 \,\myor\, E_2)}(0) 
\tag{using Equation~\ref{scontext}}
\end{align*}

Suppose next that $E$ has the form $E_1 \,+_p \, E_2$. Then we calculate:
\[\begin{array}{lcll}
\myk_\gamma(\den{\Ssem}{E_1 +_p  E_2}(\gamma)) & = &  \myk_\gamma(\den{\Ssem}{E_1 }(\gamma) \, +_p\, \den{\Ssem}{E_2}(\gamma)) \\
& = & \myk_\gamma(\den{\Ssem}{E_1 }(\gamma)) \, +_p\,\myk_\gamma(\den{\Ssem}{E_2}(\gamma)) \\
& = &  \den{\Ssem}{\myK_{u,\gamma} E_1}(0)\, +_p\,\den{\Ssem}{\myK_{u,\gamma} E_2} (0) \\
& = &  \den{\Ssem}{\myK_{u,\gamma} E_1\, +_p\, \myK_{u,\gamma} E_2}(0)\\
& = &  \den{\Ssem}{\myK_{u,\gamma} (E_1\, +_p\, E_2)}(0)\\
\end{array}\]

Finally, suppose that $E$ has the form $c\cdot E'$. This case is handled similarly to the previous one:
\begin{align*}
\myk_\gamma(\den{\Ssem}{c \cdot E'}(\gamma)) & = \myk_\gamma(\sden{c}\cdot \den{\Ssem}{E'}(\gamma))\\
                                                                               & = \sden{c}\cdot \myk_\gamma(\den{\Ssem}{E'}(\gamma))\\
                                                                               & =  \sden{c}\cdot\den{\Ssem}{\myK_{u,\gamma} E'}(0)\\
                                                                               & = \den{\Ssem}{c \cdot\myK_{u,\gamma} E'}(0)\\
                                                                               & = \den{\Ssem}{\myK_{u,\gamma} (c \cdot E')}(0) \tag*{\qedhere}
\end{align*}
\end{proof}}

We can now demonstrate full abstraction for general $\T$, subject to three assumptions, Say that a type $b$ is \emph{numerable} if all elements of $\sden{b}$ are definable by a constant.\\
\begin{thm}\label{fullabT} Suppose $\B$ is characteristic for $\T$, reward addition is injective for $\T$, and $\Rew$ is numerable. Then $ \Ssem$ is fully abstract with respect to $\approx_\T$ at $b$.
\end{thm}
\cutproof{\begin{proof} Suppose $M_1 (\approx_\T)_b M_2$.  We wish to show that $\den{ \Ssem}{M_1} = \den{ \Ssem}{M_2}$. By the ordinary adequacy theorem there are effect values $E_1, E_2$ with $\den{ \Ssem}{M_1} = \den{ \Ssem}{E_1}$ and $\den{ \Ssem}{M_2} = \den{ \Ssem}{E_2}$.

Let $u$ be the set of constants  of type $b$ appearing in any one of these effect values.
Let $\gamma \type \sden{b} \rightarrow \R$  be a reward function. It is definable on $u$ by the numerability assumption.
Using Lemma~\ref{reduce} we see that

\[\myk_\gamma(\den{ \Ssem}{M_i}(\gamma))  =  \myk_\gamma(\den{ \Ssem}{E_i}(\gamma)) = \den{ \Ssem}{\myK_{u,\gamma} E_i}(0)\qquad (\ast) \]

As $M_1 (\approx_\T)_b M_2$, we have $E_1 (\approx_\T)_b E_2$ so $\myK_{u,\gamma} E_1 (\approx_\T)_b \myK_{u,\gamma} E_2$.
As $\B$ is characteristic for $\T$, we can then apply Lemma~\ref{oeop}, finding that $\Ob_{\T}(\myK_{u,\gamma} E_1) = \Ob_{\T}(\myK_{u,\gamma} E_2)$. So,  by adequacy, $\den{ \Ssem}{\myK_{u,\gamma} E_1}(0) = \den{ \Ssem}{\myK_{u,\gamma} E_2}(0)$.

With this, we see, using $(\ast)$, that
$\myk_\gamma(\den{ \Ssem}{M_1}(\gamma)) = \myk_\gamma(\den{ \Ssem}{M_2}(\gamma))$, so, as reward addition is injective for $\T$, that $\den{ \Ssem}{M_1}(\gamma) = \den{ \Ssem}{M_2}(\gamma)$.
As $\gamma$ is an arbitrary reward function, we finally have $\den{ \Ssem}{M_1} = \den{ \Ssem}{M_2}$ as required.
\end{proof}}

As, by Lemmas~\ref{characteristic} and~\ref{addition}, $\B$ is characteristic for all of three $\T_i$ and reward addition is injective for all of them, we immediately obtain:
\begin{cor}\label{concrete-case} Suppose that $\Rew$ is numerable. Then $ \Ssem_{\T_i}$ is fully abstract with respect to $\approx_{\T_i}$  at base types, for $i = 1,2,3$.
\end{cor}

Regarding full abstraction at other types, full abstraction for general $\T$ at products of base types and so, too, at values of types of order 1 is a consequence of Theorem~\ref{fullabT} (under the same assumptions as those of the theorem). We then obtain full abstraction for the $\T_i$ at products of base types and at values of types of order 1 (assuming $\Rew$  numerable). As in the case of the language of Section~\ref{first}, the question of full abstraction at other types is open.

There is a ``cheap'' version of the free-algebra monad $\C$ discussed in Section~\ref{subsec:equiv1} for general auxiliary monads $\T$. Take $\Ax_c$ to be the set of equations between effect values that hold in $\Ssem_\T$, and take $\mathrm{D}$ to be the corresponding free algebra monad, yielding a corresponding denotational semantics $\mathcal{D}$. Then we have:
\[\models_\mathcal{D} E = E'\type b\; \iff \; \vdash_{\Ax_c} E = E' \type b \; \iff \; \models_{\Ssem_\T} E = E'\type b\]
Assuming $\B$ characteristic for $\T$ and reward addition  injective for $\T$, using Theorems~\ref{basic-ad} and~\ref{fullabT} we then obtain a version of Theorem~\ref{theorem:equivalences} for $\mathcal{D}$ for numerable $b$:
\[\vdash_{\Ax_c} M = N \type b \;\iff\; \models_\mathcal{C} M = N \type b\; \iff \; M \approx_b N \]

Turning to weakening the notion of observation, analogously to Section~\ref{first} we could forget all reward information. We do this  by taking  our notion of observation $\Ob_{\VDis}$ to be
$\VDis\circ \SOp$, i.e., the distribution of final values. As we next show, the observational equivalence $\approx_{\Ob_{\VDis}}$ resulting from this notion coincides with $\approx_{\T_3}$.
 \begin{lem}\label{expectedreward}   For programs $M,N \type \bool$ we have
 \[M_1 \approx_{\Ob_{\VDis}}   M_2  \implies \Ob_{\T_3}(M_1) = \Ob_{\T_3}(M_2)\]
  \end{lem}
  \begin{proof}
  Set $E_i = \Op(M_i)$, for $i=1,2$. We have $M_i\approx_{\T_3}  E_i$ and so, as  $\Ob_{\VDis}$ is weaker than $ \Ob_{\T_3}$, we also have $M_i\approx_{\Ob_{\VDis}}  E_i$. It therefore suffices to prove that:
   \[E_1 \approx_{\Ob_{\VDis}}   E_2  \implies \Ob_{\T_3}(E_1) = \Ob_{\T_3}(E_2)\]
  So suppose that $E_1 \approx_{\Ob_{\VDis}}   E_2$ and,
for the sake of contradiction, that, for example,  $\Ex(\SOp(E_1)) < \Ex(\SOp(E_2))$.

  Since $E_1 \approx_{\Ob_{\VDis}}   E_2$, they return the same probability distribution $\true +_p \false$ on boolean values.
Suppose,  without loss of generality, that this distribution is not $\false$. (If it is, we can work with $\true$ instead.)
Define $f\type \Val_\bool \rightarrow \Val_\bool$ to be constantly
$\false$.
Then we have
\[\begin{array}{lcll}
\Ob_{\VDis}(E_1 \,\myor\, \myF_fE_2) & = & \VDis(\SOp(E_1 \,\myor\, \myF_fE_2))\\
                                                             & = & \VDis(\SOp(E_1) \,\max{\Ex}\, \SOp(\myF_fE_2))&
                                                                        (\mbox{by Theorem~\ref{op-comp2}.\ref{parttwo-2}})\\
                                                             & = & \VDis(\SOp(E_1) \,\max{\Ex}\, \SOp(E_2[f]))&
                                                                        (\mbox{by Lemma~\ref{consub}})\\
                                                             & = & \VDis(\SOp(E_1) \,\max{\Ex}\, \W(f)(\SOp(E_2)))&
                                                                        (\mbox{by Lemma~\ref{sopsub2}.\ref{sopsub2part2}})\\
                                                             & = & \VDis(\W(f)(\SOp(E_2)))\\
                                                             & = & \false
\end{array}\]
where the next to last equality holds as, using Lemma~\ref{sopsub2}.\ref{sopsub2part1}, we have:
\[\Ex(\SOp(E_1))  < \Ex(\SOp(E_2))= \Ex(\W(f)(\SOp(E_1)))\]

Similarly,
\[\begin{array}{lcll}
\Ob_{\VDis}(E_2 \,\myor\,  \myF_fE_2) & = & \VDis(\SOp(E_2) \,\max{\Ex}\, \W(f)(\SOp(E_2)))\\
                                                             & = & \VDis(\SOp(E_2))\\
                                                             & = & \true +_p \false
\end{array}\]
yielding the required contradiction.
\end{proof}
We then have the following analogue to  Theorem~\ref{obweak1}:
\begin{thm}\label{obweak2} For any programs $M,N \type \sigma$, we have
\[M \approx_{\T_3} N \iff M \approx_{\Ob_{\VDis}}   N\]

\end{thm}

\subsection{Program equivalences and purity}\label{probeqpu}

We begin by considering the equations holding in   $\Ssem_{\T}$ for a general $\T$ as above. We need some terminology and notation. Say that a term $M\type \sigma$ is in \emph{expectation PR-form over terms $L_1, \ldots,L_n$} if it has the form
\[\sum_{i=1}^m p_i\sum_{j=1}^{n_i}q_{ij}(M_{ij}\cdot L_i) \]
where the $M_{ij}$ are either variables or constants (and we say $M$ is an \emph{expectation PR-value} if  the $M_{ij}$ and $L_i$ are all constants).
 For such a term we write $\Ex_s(M)\type \Rew$ for the term:
\[\bigoplus_{i=1}^m p_i\bigoplus_{j=1}^{n_i}q_{ij}M_{ij}\]
(We write $\bigoplus M_i$ for iterated uses of the $\oplus_p$ to avoid confusion with iterated uses of the $+_p$.)
In case the $M_{ij}$ are constants $d_{ij}$, we set:
\[\left [\!\! \! \left [ \bigoplus_{i=1}^m p_i\bigoplus_{j=1}^{n_i}q_{ij}d_{ij}\right ]\!\! \! \right ] = \sum_{i=1}^m p_i\sum_{j=1}^{n_i}q_{ij}\sden{d_{ij}}\]

Our system $\Ax_1$ of equations is given in Figure~\ref{ptermequivs} (where we omit type information). In the last two equations it is assumed that $M$ and $N$ are in expectation PR-form over the same $L_1,\ldots,L_n$. The equations express at the term level, that:
choice is idempotent and associative;
rewards form an action for the commutative monoid structure on $\R$;
probabilistic choice forms a convex algebra;
the $\R$-action acts on both forms of choice; probabilistic choice distributes over choice; and, where this can be seen from the syntax, that choice is made according to the highest reward, with priority to the left for ties.

\begin{figure}[h]
  \[\begin{array}{c}
  M \,\myor\, M \; = \; M \qquad (L \,\myor\, M) \,\myor\, N \; = \; L \,\myor\, (M \,\myor\, N)  \\[0.5em]
0\cdot N \; = \; N \qquad x \cdot (y \cdot N) \; =\;  (x + y)\cdot N\\[0.5em]
M +_1 N = M \qquad M +_p N \; =\;  N +_{1-p} M\\[0.2em]
(M +_p N) +_q P \; =\;  M +_{pq} (N +_{\frac{(1 - p)q} {1 - pq}}  P) \quad (p,q < 1)\\[0.5em]
x\cdot (M+_p N) \; = \; x\cdot M +_p x\cdot N\\[0.2em]
x \cdot (M\, \myor\, N) \; =\;  (x\cdot M)\, \myor\, (x \cdot N)\\[0.2em]
 L +_p (M \,\myor\, N) \; = \; (L +_p M)\, \myor\, (L +_p N)\\[0.3em]
\myif \Ex_s(M) \geq \Ex_s(N) \mythen M \myelse N \; = \; M\, \myor\,  N \\[0.3em]
\myif \Ex_s(M) \geq \Ex_s(N) \mythen (M\, \myor\, P) \myelse ( P \,\myor\,N)\,\; = \;\,
(M\, \myor\, P) \,\myor\, N \\\end{array}\]

\caption{Equations for choices, probability, and rewards}\label{ptermequivs}
\end{figure}

 Below, for $u \in \T(X)$ and $\gamma\type X \rightarrow \R$, we set
 \[\Ex(u|\gamma) \eqdef \ER{\T}{u}{\gamma}\; (= \alpha_\T(\T(\gamma)(u)))\]
 This is  the expected reward of $u$, given $\gamma$.

\begin{prop}
The axioms hold for general $\Ssem_\T$.
\end{prop}
\begin{proof} Other than the last two axiom schemas, this follows from  Theorem~\ref{genax}, Corollary~\ref{auxeq}, and Theorem~\ref{distributes}.
The last two cases are straightforward pointwise arguments, although we need an observation. We calculate that for a PR-term
$M = \sum_{i=1}^m p_i\left(\sum_{j=1}^{n_i}q_{ij}(d_{ij}\cdot L_i)\right )$ of type $\sigma$ and a reward function
$\gamma\type \sden{\sigma} \rightarrow \R$
we have:
{\small \[\begin{array}{lcl} \Ex(\den{\Ssem}{\sum_{i=1}^m p_i\sum_{j=1}^{n_i}q_{ij}(d_{ij}\cdot L_i)}\gamma \mid\gamma)
&\! = \!\!&
\sum_{i=1}^m p_i (\sum_{j=1}^{n_i}q_{ij}(\sden{d_{ij}} +  \Ex(\den{\Ssem}{L_i} \gamma \mid\gamma) ))\\[0.5em]
& \!\! = \!\!&
\sum_{i=1}^m p_i (\sum_{j=1}^{n_i}q_{ij}\sden{d_{ij}}   )+  \\[0.5em]
&& \quad \sum_{i=1}^m p_i (\sum_{j=1}^{n_i}q_{ij}\Ex(\den{\Ssem}{L_i} \gamma \mid\gamma)  )\\[0.5em]
&\!\! = \!\!& \sum_{i=1}^m p_i (\sum_{j=1}^{n_i}q_{ij}(\sden{d_{ij}}   ))+
\sum_{i=1}^m p_i\Ex(\den{\Ssem}{L_i} \gamma \mid\gamma)\\
\end{array}\]}
and we further have:
\[\Ssem\left [\!\! \! \left [\bigoplus_{i=1}^m p_i \bigoplus_{j=1}^{n_i}q_{ij}d_{ij} \right ] \!\! \! \right ]\gamma =  \eta_\T\left (\sum_{i=1}^m p_i\sum_{j=1}^{n_i}q_{ij}\sden{d_{ij}} \right  )\]
So if $M\type \sigma$ and $N\type\sigma $ are in expectation PR-form over the same $L_1,\ldots,L_n$ then, for $\gamma\type \sden{\sigma} \rightarrow \R$, we have:
\[\Ex(\den{\Ssem}{M}  \gamma \mid\gamma) \geq \Ex(\den{\Ssem}{N}  \gamma \mid\gamma)
\iff
\den{\Ssem}{\Ex_s(M) \geq \Ex_s(N)}\gamma = \eta_\T(0)\]
(recall that $\sden{\true} = 0$).
With this observation, the pointwise argument for the last two equation schemas goes through.
\end{proof}
 In the case of $\Ssem_{\T_2}$ we inherit Equation~\ref{gather-eqn2} from  $\T_2$ so we additionally have:
 \begin{equation} x\cdot M \,+_p \, y\cdot M = (x \oplus_p y) \cdot M\label{gather-eqn2-syntax} \end{equation}
  Let $\Ax_2$ be $\Ax_1$ extended with this equation.
  In the case of $\Ssem_{\T_3}$ we inherit Equation~\ref{gather-eqn3} from  $\T_3$ so we have the stronger:
  \begin{equation} x\cdot M \,+_p \, y\cdot N =  (x \oplus_p y) \cdot M \,+_p\,  (x \oplus_p y) \cdot N\label{gather-eqn3-syntax} \end{equation}
    Let $\Ax_3$ be $\Ax_1$ extended with this equation.

Unfortunately, we do not have any results analogous to Theorem~\ref{theorem:equivalences} for any of the above three axiom systems for the probabilistic case---further axioms may well be needed to obtain  completeness for program equivalence at base types.
 We do, however have a completeness result for purity at base types.

 First, some useful consequences of these equations, are the following, where $M$ and $N$ are expectation PR-values over the same $L_1,\ldots,L_n$:
\begin{equation}
  \tag{PR$_1$}
M \,\myor\, N \;\; =  \;\;  M \quad (\mbox{if $\Ex_s(M)\geq \Ex_s(N)$})%
  \label{PR1}
\end{equation}
 \begin{equation}
  \tag{PR$_2$}
M \,\myor\, N \;\; =  \;\; N \quad (\mbox{if $\Ex_s(M) < \Ex_s(N)$})%
  \label{PR2}
\end{equation}

\begin{equation}
  \tag{PR$_3$}
(M \, \myor\,  L) \,\myor\,  N\; = \; M \, \myor\, L \quad (\mbox{if $\Ex_s(M)\geq \Ex_s(N)$})\%
  \label{PR3}
\end{equation}
\begin{equation}
  \tag{PR$_4$}
(M \, \myor\, L) \,\myor\,  N\; = \;   L \,\myor\, N \quad (\mbox{if $\Ex_s(M) <  \Ex_s(N)$})\%
  \label{PR4}
\end{equation}

Next, our equational system $\Ax_1$ allows us to put programs of base type into a weak canonical form.
First consider programs which are \emph{PR-effect values}, i.e., programs obtained by probabilistic and reward combinations of constants. Every such term is provably equivalent to one of the form
$\sum_{j = 1}^{m} p_j (d_j \cdot c_j)$ where $m > 0$, the $d_j$ and the $c_j$ are constants and no $d_j \cdot c_j$ is repeated. We call such terms \emph{canonical  PR-effect values}, and do not distinguish any two such if they are identical apart from the ordering of the $d_j \cdot c_j$.

We say that an effect value of base type is in \emph{weak canonical form} if (ignoring bracketing of $\myor$) it is an effect value of the form
\[ E_1 \,\myor\, \ldots \,\myor\, E_n\]
where $n > 0$, the $E_i$ are canonical PR-effect values, and no $E_i$ occurs twice. (We could have simplified canonical forms further by applying the PR$_i$, obtaining a stronger canonical form. However, we did not do so as, in any case, we do not have an equational completeness result.)
\begin{lem}  Every program $M$ of base type is provably equal to a weak canonical form $\CF(M)$.
\end{lem}
\begin{proof} By Proposition~\ref{proof}, $M$ can be proved equal to an effect value $E$. Using the associativity equation and the fact that $\give$ and $+_p$ distribute over $\myor$, the effect value $E$ can be proved equal to a term of the form
$E_1 \, \myor \, \ldots\, \myor \, E_n$  where each $E_i$ is a PR-effect term.
\end{proof}

Say that a theory $\Ax$, valid in $\S_\T$, is \emph{strongly purity complete for basic PR-effect values}, if for all PR-effect values $E\type b$ we have:
\[ \models_{\S_\T} E \downarrow_b \; \implies \;   \exists c\type b.\, \vdash_{\Ax} E =  c \type b   \]

\begin{lem}\label{ivpcomplete} $\Ax_i$ is strongly purity complete for basic PR-effect values, for $i=1,2,3$.
\end{lem}
\begin{proof} For $\T_1$ we have already noted that every PR-effect value is provably equal using $\Ax_1$ to a term $E\type b$ of the form
$\sum_{j = 1}^{m} p_j (d_j \cdot c_j)$ with no $d_j \cdot c_j$ repeated. For such a term $\models_{\S_{\T_1}} E \downarrow_b$ holds iff there is an $x \in \sden{b}$ such that $\den{\Ssem_{\T_1}}{E}\gamma = \eta_{\T_1}(b)$ for all $\gamma \type \sden{b} \rightarrow \R$. Taking $\gamma = 0$, for example, we then see that
$\sum_{j = 1}^{m} p_j (\sden{d_j} \cdot \sden{c_j}) = \eta_{\T_1}(b)$. As no $d_j \cdot c_j$ is repeated, neither is any $\sden{d_j} \cdot \sden{c_j}$. It follows that $m=1$ and $d_1 = 0$. In that case the term is provably equal, using $\Ax_1$, to $c_1$. The other two cases are similar:
for $\T_2$ we note that every PR-effect value is provably equal using $\Ax_2$ to a term of the form
$\sum_{j = 1}^{m} p_i (d_j \cdot c_j)$ with no $c_i$ repeated,
and for $\T_3$ we note that every PR-effect value is provably equal using $\Ax_3$ to a term of the form
$\sum_{j = 1}^{m} p_i (d \cdot c_j)$ with no $c_j$ repeated.
\end{proof}

 In order to establish purity completeness we need a condition (C) on $\R$. This is that for all $p\in (0,1)$ and $s < 0$ in $\R$, there are $l,r \in\R$ such that  $s + (r +_p l) >  l$.
 Condition (C) evidently holds when there are no negative elements as in our example of the nonnegative reals $\ropen{0,\infty}$ with the addition monoid. It also holds for our other examples of the reals, $(-\infty,\infty)$, and the positive reals, $(0,\infty)$, the former with the addition monoid and the latter with the multiplication monoid.
 Two further examples satisfying the condition are the real intervals $\lopen{-\infty,0}$ and $\lopen{0,1}$, both with the usual ordering, the first with the sum monoid, and the second with the multiplication monoid. In all these examples we employ the usual convex combination, and the verification of Condition (C) is straightforward.
 We give a counterexample to the condition below.

There are natural conditions that imply Condition (C), and which, together, account for these examples. Consider the equation:
\begin{equation} x +_{1/2} (y + z)  = (x + y) +_{1/2} z\label{shift} \end{equation}
and say that condition (D) holds if,
for all $p\in (0,1)$,  there is an $l\in \R$ such that for all $s\in \R$ there is an $r\in \R$ such that $r +_p s > l$. Condition (C) is satisfied if Equation~\ref{shift} holds or Condition (D) does.
All our examples with the addition monoid satisfy the equation, 
and all our examples other than the nonpositive reals satisfy Condition (D).

\begin{thm}[General purity completeness]\label{genpurity2} Suppose that  $\R$ satisfies condition (C).  Let $\Ax$ be a theory extending $\Ax_1$ that is valid in $\S_\T$. If $\Ax$ is strongly purity complete for basic PR-effect values, then it is strongly purity complete at base types, i.e., for all programs $M\type b$ we have:

\[ \models_{\S_\T} M \downarrow_b \implies   \exists c\type b.\, \vdash_{\Ax} M =  c \type b   \]
\end{thm}
\begin{proof}
We remark first that, in general, for any term $N\type b$ and any PR-effect value $E\type b$, if $\models_{\S_\T} N \downarrow_b $ and
$\den{\S_\T}{N}(\gamma) = \den{\S_\T}{E}(\gamma)$ for some $\gamma$, then $\den{\S_\T}{N}(\gamma) = \den{\S_\T}{E}(\gamma)$ for any $\gamma$, and so, also,
$\models_{\S_\T} E \downarrow_b $ and then $\vdash_{\Ax}  E = c$, for some $c\type b$ (this last using the strong purity completeness assumption).

It suffices to prove the claim for terms $M$  in weak canonical form, i.e., of the form
\[ E_1 \,\myor\, \ldots \,\myor\, E_n\]
where $n > 0$, and the $E_i$ are canonical PR-effect values. We proceed by induction on $n$.

So suppose that $\models_{\S_\T} M \downarrow_b$.  For some $E_{i_0}$ we have $\den{\S_\T}{M}(0) = \den{\S_\T}{E_{i_0}}(0) $,
and so, by the above remark, we see that $\den{\S_\T}{N} = \den{\S_\T}{E_{i_0}}$ and also that there is a $\ov{c}\type b$ such that $\vdash_{\Ax}  E_{i_0} = \ov{c}$.

In case $n=1$ we have shown that  $\vdash_{\Ax}  M = c$ for some $c\type b$, as required.
Otherwise
consider $E_{i_1} = \sum_{j = 1}^{n} p_j (d_j \cdot c_j)$ for an $i_1 \neq i_0$. If every  $c_j$ is $\ov{c}$ then both $E_{i_0}$ and $E_{i_1}$ are in expectation PR-value form over $\ov{c}$, and so one of the equations~\ref{PR1}--\ref{PR4} can be used to reduce the size of $M$, and the induction hypothesis can be applied.

Otherwise,  some $c_j$ is not $\ov{c}$, and we show next that, for some $\gamma$ we have
\[\Ex(\den{\S_\T}{E_{i_1}}\gamma \mid \gamma)\; > \;\Ex(\den{\S_\T}{E_{i_0}}\gamma \mid \gamma) \quad\; (\ast)\]
There are two cases. In the first case no $c_{j_1}$ is $\ov{c}$. Then choose $l < r \in \R$ and
define $\gamma \type \sden{b} \rightarrow \R$ by setting $\gamma(x) = r$ for $x\neq  \sden{\ov{c}}$, and  $\gamma(\sden{\ov{c}}) = r_0+ l$, where $r_0$ is the least of the  $\sden{d_j}$. Then we have:
\[\begin{array}{lcl}
\Ex(\den{\S_\T}{E_{i_1}}\gamma \mid \gamma)  & = &  \sum_{j = 1}^{n} p_j(\sden{d_j} + \gamma(c_j))\\
                                                                            & \geq   & \sum_{j = 1}^{n} p_j(r_0 + \gamma(c_j))\\
                                                                            & =   & \sum_{j = 1}^{n} p_j(r_0 + r)\\
                                                                             & = & r_0 + r\\
                                                                             & > & r_0 +  l\\
                                                                             & = & \gamma(\ov{c})\\
                                                                             & = & \Ex(\den{\S_\T}{E_{i_0}}\gamma \mid \gamma)\\
\end{array}\]

In the second case $c_{j_0} = \ov{c}$ for some unique $j_0$. Setting $p = \sum_{j \neq j_0}p_j$, note that $p\in (0,1)$; then, setting $p'_j = p_j/p$ for $j \neq j_0$, note that $\sum_{j \neq j_0} p'_j = 1$.
Taking $r_0$ to be the least of the  $\sden{d_j}$ as before, there are $l$ and $r$ in $\R$ such that
$r_0 + (r +_p l)  > l$.
For if $r_0 < 0$, condition (C) applies, and otherwise $r_0 \geq 0$ and we can choose any $l, r$ with $l <r$.
Define $\gamma \type \sden{b} \rightarrow \R$ by setting $\gamma(x) = r$ for $x\neq  \sden{\ov{c}}$, and  $\gamma(\sden{\ov{c}}) = l$. Then we have:
\[\begin{array}{lcl}
\Ex(\den{\S_\T}{E_{i_1}}\gamma \mid \gamma)  & = &  \sum_{j = 1}^{n} p_j(\sden{d_j} + \gamma(c_j))\\
                                                                            & \geq   & \sum_{j = 1}^{n} p_j(r_0 + \gamma(c_j))\\
                                                                           & = &  (\sum_{j \neq j_0} p'_j(r_0 + \gamma(c_j))) +_p (r_0 + \gamma(\ov{c}))\\
                                                                           & = &  (\sum_{j \neq j_0} p'_j (r_0 + r)) +_p (r_0 + l)\\
                                                                           & = &  (r_0 +  r ) +_p (r_0 + l)\\
                                                                           & = &  r_0 +  (r  +_p  l)\\
                                                                             & > &  l\\
                                                                             & = & \gamma(\ov{c})\\
                                                                             & = & \Ex(\den{\S_\T}{E_{i_0}}\gamma \mid \gamma)\\
\end{array}\]

This establishes $(\ast)$. So, for some $E_{i_2}$, with $i_2 \neq i_0$, $\den{\S_\T}{M}(\gamma)  = \den{\S_\T}{E_{i_2}}(\gamma)$, and
so $\den{\S_\T}{M} = \den{\S_\T}{E_{i_2}}$ and there is a $c_{i_2}\type b$ such that $\vdash_{\Ax}  E_{i_2} = c_{i_2}$. As $\Ax$ is valid in $\S_\T$ we have $\den{\S_\T}{\ov{c}} = \den{\S_\T}{c_{i_2}}$ and so $\ov{c} = c_{i_2}$. (Monad units are always injective and so is $\sden{\mbox{-}}\type \Val_b \rightarrow \sden{b}$.) We can therefore replace $E_{i_0}$ and $E_{i_2}$ by $\ov{c}$, apply one
of~\ref{PR1}--\ref{PR4}, to obtain a shorter canonical form, and then apply the induction hypothesis. This concludes the proof.
\end{proof}

So, using Lemma~\ref{ivpcomplete}, we see that strong purity completeness  at base types holds for $\T_i$ with respect to  the $\Ax_i$ (assuming $\R$ satisfies condition (C))\footnote{In~\cite{AP21} this was claimed without any assumption on $\R$; however there was an error in the proof.
}.
Regarding  products of base types, strong purity completeness for general $\T$ at products of base types follows from Theorem~\ref{genpurity2} (under the same assumptions as those of the theorem), and so, then, for the $\T_i$ (assuming $\R$ satisfies condition (C)).

While Condition (C) is not attractive, it is necessary:
\begin{thm} Suppose $\R$ does not satisfy condition $(C)$. Then $\Ax_1$ is not purity complete for $\T_1$. That is, there is a term $M$ such that $\models_{\S_\T} M \downarrow_b$ holds but $\vdash_{\Ax_1} M \downarrow_b$ does not.
\end{thm}
\begin{proof} As the condition fails, we can choose $p \in (0,1)$ and $s < 0$ such that, for all $l$ and $r$ we have
$s + (r +_p  l)  \leq l$. Take $M$ to be the term $\false \; \myor\; c\cdot(\true +_p \false)  $ where $\sden{c} = s$. Then for all
$\gamma: \sden{\bool} \rightarrow \R$ we have $\Ex(\den{\S_\T}{c\cdot(\true +_p \false)}\gamma \mid \gamma) \leq \Ex(\den{\S_\T}{\false}\gamma \mid \gamma)$ and so $\models_{\S_\T} M \downarrow_b$.  However, switching to any $\R$ satisfying condition (C), we see that if $\vdash_{\Ax_1} M \downarrow_b$ then, by consistency, we would have $\models_{\S_\T} M \downarrow_b$. But this is impossible as, using condition (C), we can find a $\gamma: \sden{\bool} \rightarrow \R$ such that
$\Ex(\den{\S_\T}{c\cdot(\true +_p \false)}\gamma \mid \gamma) > \Ex(\den{\S_\T}{\false}\gamma \mid \gamma)$ and so
$\den{\S_\T}{M}\gamma = \den{\S_\T}{c\cdot(\true +_p \false)}\gamma $ and this contradicts $\models_{\S_\T} M \downarrow_b$ as $\den{\S_\T}{c\cdot(\true +_p \false)}\gamma \neq \eta_{\T_1}(b)$ for any $b\in \bool$. \end{proof}

\newcommand{\mymin}{\mathrm{m}}
\newcommand{\myw}{\mathrm{w}}
\newcommand{\myprec}{\mathrm{p}}
To conclude our discussion of purity we construct a counterexample to Condition (C). We make use of the free barycentric commutative algebra $\R_M$ over a commutative monoid  $(M,+,0)$. This is the set of finite probability distributions over $M$, with the usual convex combination operations, with \emph{convolution} as the monoid operation, defined by:
\[(\sum_ip_ix_i) + (\sum_jq_jy_j) = \sum_{ij}p_iq_j(x_i + y_j)\]
and with $0$ the Dirac distribution $\delta_0$.

Consider the case where the monoid $M$ is  totally ordered, with the monoid operation preserving and reflecting the order. Every finite distribution over $M$ can then be written uniquely  in the form
$\mu = \sum_{i= 1}^np_ix_i$ with $x_1 > \cdots > x_n$ (and no $p_i$ zero). Set $\mymin(\mu) = x_n$, $\myw(\mu) = p_n$, and, if $n > 1$, $\myprec(\mu) = \sum_{i= 1}^{n-1}\frac{p_i}{1-p_n}x_i$. Note that
$\mymin(x + \mu) = x + \mymin(\mu)$, for $x
\in M$, and that $\mymin(\mu +_p \nu) = \min(\mymin(\mu),\mymin(\nu))$, for $p\in (0,1)$.

Let $\leq$ be the least relation on $\R_M$ such that:
\[\frac{\mymin(\mu) < \mymin(\nu)}{\mu \leq \nu} \qquad
\frac{\mymin(\mu) = \mymin(\nu), \myw(\mu) > \myw(\nu)}{\mu \leq \nu} \qquad
\frac{\mymin(\mu) = \mymin(\nu), \myw(\mu) = \myw(\nu) = 1}{\mu \leq \nu}\]
\[\frac{\mymin(\mu) = \mymin(\nu), \myw(\mu) = \myw(\nu) \neq 1, \myprec(\mu) \leq \myprec(\nu)}{\mu \leq \nu}\]
Intuitively, one decides whether $\mu \leq \nu$  or $\nu \leq \mu$ by comparing $\mymin(\mu)$ and $\mymin(\nu)$, and, if they are equal, comparing their corresponding probabilities, and then if they are equal, but not $1$, proceeding recursively to the rest of $\mu$ and $\nu$. It can be shown that $\leq$ is a  total order, preserved and reflected by probabilistic choice and addition.
 Note that if $\mu \leq \nu$ then $\mymin(\mu) \leq \mymin(\nu)$.

Suppose now that $M$ contains an element $s < 0$ (so $M$ could, for example,  be the nonpositive integers with the usual addition and order). Then we claim that $\R_M$ does not satisfy Condition (C). For, 
suppose there are $l,r$ such that $s + (r +_p l) >  l$.
 We have:
\[\begin{array}{lcl}\mymin(s + (r +_p l)) & = & s + \mymin(r +_p l) \\
& = & s + \min(\mymin(r), \mymin(l))\\
& \leq & s +  \mymin(l)\\
&  <  & \mymin(l)
\end{array}\]
However, this contradicts $s + (r +_p l) >  l$ as that implies that $\mymin(s + (r +_p l)) \geq  \mymin(l)$.

 \section{Conclusion}\label{conclusion}

This paper studies decision-making abstractions in the context of
simple higher-order programming languages, focusing on their
semantics, treating them operationally and denotationally. The
denotational semantics are compositional.  They are based on the selection
monad, which has rich connections with logic and game theory.
Unlike other  programming-language research (e.g.,~\cite{ajm:pcf,ho:pcf}),
the treatment of games in this paper is 
extensional, focusing on choices but ignoring other
aspects of computation, such as function calls and returns.
Moreover, the games are one-player games.
Going further, we
have started to explore extensions of our languages with multiple players,
where each choice and each reward is associated
with one player. For example, writing $\Ab$ and $\El$ for the players, we can program a version of the classic prisoners's dilemma:
\[
\begin{array}{l}
\mylet \texttt{silent}_\Ab, \texttt{silent}_\El  \type \bool \mybe (\true \,\myorA\, \false),(\true \,\myorE\, \false) \myin\\
\myif \texttt{silent}_\Ab \myand \texttt{silent}_\El \mythen -1 \cdot_{\Ab} -1 \cdot_{\El} \ast\\
\myelse \myif \texttt{silent}_\Ab \mythen -3 \cdot_{\Ab} \ast\\
\myelse \myif \texttt{silent}_\El \mythen -3 \cdot_{\El} \ast\\
\myelse -2 \cdot_{\Ab} -2 \cdot_{\El} \ast
\end{array}\]
Here,  $\texttt{silent}_\Ab$ and $\texttt{silent}_\El$ indicate whether the players remain silent, and the rewards, which are negative, correspond to years of prison. Semantically it would be natural to use the selection monad with $\mathbb{R}^2$ as the set of rewards, and with the writer monad as auxiliary monad. (One could envisage going further and treating probabilistic  games via a combination of the writer monad and a monad for probability.)
Many of our techniques carry over to languages with multiple players, which
give rise to interesting semantic
questions  (e.g., should we favor some players over others? require Nash equilibria?)
and may also be useful in practice.

Multi-objective optimization provides another area of interest. One could take $\R$ to be a product, with one component for each objective,  and use the selection monad augmented with auxiliary monad the combination $\mathcal{P}_{\small \mathrm{fin}}(\R \times - )$ of the finite powerset monad and a version of the writer monad enabling writing to  different components. One would aim for a semantics returning Pareto optimal choices.

In describing Software 2.0, Karpathy suggested specifying some goal on
the behavior of a desirable program, writing a ``rough skeleton'' of
the code, and using the computational resources at our disposal to
search for a program that works~\cite{Karpathy}.  While this vision
may be attractive, realizing it requires developing not only search
techniques but also the linguistic constructs to express goals and
code skeletons.  In the variant of this vision embodied in
SmartChoices, the skeleton is actually a complete program,
albeit in an extended language with decision-making abstractions.
Thus, in the brave new world of Software 2.0 and its relatives,
programming languages still have an important role to play, and their
study should be part of their development.
Our paper aims to contribute to one aspect of
this project;  much work remains.

In comparison with recent theoretical work on languages with 
differentiation (e.g,.~\cite{FongST19,abadiplotkin2020,BartheCLG20,pagani2020,cruttwell2019,HuotSV20}), our
languages are higher-level: %
they focus on how
optimization or machine-learning may be made available to a programmer
rather than on how they would be implemented. However, a convergence
of these  research lines is possible, and perhaps desirable. One thought
is to extend our languages with
differentiation primitives
to construct selection functions that use gradient descent.
These would be alternatives to ${\argmax}$ as discussed in the Introduction.
Monadic reflection and
reification, in the sense of Filinski~\cite{filinski:representing}, could support the use of such
alternatives, and more generally enhance programming flexibility.
Similarly, it would be attractive to deepen the connections
between our languages and probabilistic ones (e.g.,~\cite{Goodman:church}).
It may also be interesting to connect our semantics with particular techniques from
the literature on MDPs and RL, and further to explore whether monadic
ideas can contribute to implementations that include such techniques.
Finally, at the type level, the monadic approach
distinguishes ``selected'' values  and ``ordinary'' ones; the ``selected''
values are reminiscent of the ``uncertain'' values of type
$\mathrm{Uncertain}\!<\!\texttt{T}\!>$~\cite{uncertaint}, and the distinction may be useful as in that setting.

\section*{Acknowledgements}

We are grateful to Craig Boutilier, Eugene Brevdo, Daniel Golovin, Michael Isard, Eugene Kirpichov, Ohad Kammar, Matt Johnson, Dougal Maclaurin, Martin Mladenov, Adam Paszke, Sam Staton, Dimitrios Vytiniotis, and Jay Yagnik for discussions. 

\bibliography{SmartChoices.bib}{}

\newcommand{\etalchar}[1]{$^{#1}$}
\begin{thebibliography}{VFLF{\etalchar{+}}17}

\bibitem[AJM00]{ajm:pcf}
Samson Abramsky, Radha Jagadeesan, and Pasquale Malacaria.
\newblock Full abstraction for {PCF}.
\newblock {\em Inf. Comput.}, 163(2):409--470, 2000.
\newblock \href {https://doi.org/10.1006/inco.2000.2930}
  {\path{doi:10.1006/inco.2000.2930}}.

\bibitem[AP20]{abadiplotkin2020}
Mart{\'{\i}}n Abadi and Gordon~D. Plotkin.
\newblock A simple differentiable programming language.
\newblock {\em Proc. {ACM} Program. Lang.}, 4({POPL}):38:1--38:28, 2020.
\newblock \href {https://doi.org/10.1145/3371106} {\path{doi:10.1145/3371106}}.

\bibitem[AP21]{AP21}
Mart{\'i}n Abadi and Gordon Plotkin.
\newblock Smart choices and the selection monad.
\newblock In {\em 36th Annual {ACM/IEEE} Symposium on Logic in Computer
  Science, {LICS} 2019}. {IEEE}, 2021.

\bibitem[BCLG20]{BartheCLG20}
Gilles Barthe, Rapha{\"{e}}lle Crubill{\'{e}}, Ugo~Dal Lago, and Francesco
  Gavazzo.
\newblock On the versatility of open logical relations - continuity, automatic
  differentiation, and a containment theorem.
\newblock In Peter M{\"{u}}ller, editor, {\em Programming Languages and Systems
  - 29th European Symposium on Programming, {ESOP} 2020}, volume 12075 of {\em
  Lecture Notes in Computer Science}, pages 56--83. Springer, 2020.
\newblock \href {https://doi.org/10.1007/978-3-030-44914-8\_3}
  {\path{doi:10.1007/978-3-030-44914-8\_3}}.

\bibitem[Bel57]{Bel57}
Richard Bellman.
\newblock {\em Dynamic Programming}.
\newblock Princeton University Press, Princeton, 1957.

\bibitem[BHQK20]{rlax2020github}
David Budden, Matteo Hessel, John Quan, and Steven Kapturowski.
\newblock {RL}ax: {R}einforcement {L}earning in {JAX}, 2020.
\newblock URL: \url{http://github.com/deepmind/rlax}.

\bibitem[BHZ18]{bolt18}
Joe Bolt, Jules Hedges, and Philipp Zahn.
\newblock Sequential games and nondeterministic selection functions.
\newblock {\em CoRR}, abs/1811.06810, 2018.
\newblock URL: \url{http://arxiv.org/abs/1811.06810}, \href
  {http://arxiv.org/abs/1811.06810} {\path{arXiv:1811.06810}}.

\bibitem[BMM14]{uncertaint}
James Bornholt, Todd Mytkowicz, and Kathryn~S. McKinley.
\newblock Uncertain$<\!\mathrm{T}\!>$: a first-order type for uncertain data.
\newblock In Rajeev Balasubramonian, Al~Davis, and Sarita~V. Adve, editors,
  {\em Architectural Support for Programming Languages and Operating Systems,
  {ASPLOS} '14}, pages 51--66. {ACM}, 2014.
\newblock \href {https://doi.org/10.1145/2541940.2541958}
  {\path{doi:10.1145/2541940.2541958}}.

\bibitem[BMP20]{pagani2020}
Alo{\"{\i}}s Brunel, Damiano Mazza, and Michele Pagani.
\newblock Backpropagation in the simply typed lambda-calculus with linear
  negation.
\newblock {\em Proc. {ACM} Program. Lang.}, 4({POPL}):64:1--64:27, 2020.
\newblock \href {https://doi.org/10.1145/3371132} {\path{doi:10.1145/3371132}}.

\bibitem[BRST00]{Boutilier00a}
C.~Boutilier, R.~Reiter, M.~Soutchanski, and S.~Thrun.
\newblock Decision-theoretic, high-level robot programming in the situation
  calculus.
\newblock In {\em Proceedings of the AAAI National Conference on Artificial
  Intelligence}. AAAI, 2000.

\bibitem[Byc18]{spiral}
Vladimir Bychkovsky.
\newblock Spiral: Self-tuning services via real-time machine learning, 2018.
\newblock Blog post
  \href{https://engineering.fb.com/data-infrastructure/spiral-self-tuning-services-via-real-time-machine-learning/}{here}.

\bibitem[CCD{\etalchar{+}}18]{smartchoices}
Victor Carbune, Thierry Coppey, Alexander~N. Daryin, Thomas Deselaers, Nikhil
  Sarda, and Jay Yagnik.
\newblock Smartchoices: hybridizing programming and machine learning.
\newblock {\em CoRR}, abs/1810.00619, 2018.
\newblock URL: \url{http://arxiv.org/abs/1810.00619}, \href
  {http://arxiv.org/abs/1810.00619} {\path{arXiv:1810.00619}}.

\bibitem[CGM19]{cruttwell2019}
Geoff Cruttwell, Jonathan Gallagher, and Ben MacAdam.
\newblock Towards formalizing and extending differential programming using
  tangent categories.
\newblock {\em Proc. ACT}, 2019.

\bibitem[CHR{\etalchar{+}}16]{Chang16}
Kai{-}Wei Chang, He~He, St{\'{e}}phane Ross, Hal~Daum{\'{e}} III, and John
  Langford.
\newblock A credit assignment compiler for joint prediction.
\newblock In Daniel~D. Lee, Masashi Sugiyama, Ulrike von Luxburg, Isabelle
  Guyon, and Roman Garnett, editors, {\em Advances in Neural Information
  Processing Systems 29: Annual Conference on Neural Information Processing
  Systems 2016}, pages 1705--1713, 2016.
\newblock URL:
  \url{http://papers.nips.cc/paper/6256-a-credit-assignment-compiler-for-joint-prediction}.

\bibitem[DPS18]{DPS18}
Fredrik Dahlqvist, Louis Parlant, and Alexandra Silva.
\newblock Layer by layer--combining monads.
\newblock In {\em International Colloquium on Theoretical Aspects of
  Computing}, pages 153--172. Springer, 2018.

\bibitem[DS21]{DS21}
Swaraj Dash and Sam Staton.
\newblock A monad for probabilistic point processes.
\newblock {\em arXiv preprint arXiv:2101.10479}, 2021.

\bibitem[Dub06]{dub06}
Eduardo~J. Dubuc.
\newblock {\em Kan extensions in enriched category theory}, volume 145.
\newblock Springer, 2006.

\bibitem[EO10]{escardo10}
Mart{\'{\i}}n~H{\"{o}}tzel Escard{\'{o}} and Paulo Oliva.
\newblock Selection functions, bar recursion and backward induction.
\newblock {\em Math. Struct. Comput. Sci.}, 20(2):127--168, 2010.
\newblock \href {https://doi.org/10.1017/S0960129509990351}
  {\path{doi:10.1017/S0960129509990351}}.

\bibitem[EO11]{escardo2011sequential}
Martin Escard{\'o} and Paulo Oliva.
\newblock Sequential games and optimal strategies.
\newblock {\em Proceedings of the Royal Society A: Mathematical, Physical and
  Engineering Sciences}, 467(2130):1519--1545, 2011.

\bibitem[EO12]{escardo12}
Mart{\'{\i}}n~H{\"{o}}tzel Escard{\'{o}} and Paulo Oliva.
\newblock The {P}eirce translation.
\newblock {\em Ann. Pure Appl. Log.}, 163(6):681--692, 2012.
\newblock \href {https://doi.org/10.1016/j.apal.2011.11.002}
  {\path{doi:10.1016/j.apal.2011.11.002}}.

\bibitem[EO17]{escardo17}
Mart{\'{\i}}n Escard{\'{o}} and Paulo Oliva.
\newblock The {H}erbrand functional interpretation of the double negation
  shift.
\newblock {\em J. Symb. Log.}, 82(2):590--607, 2017.
\newblock \href {https://doi.org/10.1017/jsl.2017.8}
  {\path{doi:10.1017/jsl.2017.8}}.

\bibitem[EOP11]{escardo11}
Mart{\'{\i}}n~H{\"{o}}tzel Escard{\'{o}}, Paulo Oliva, and Thomas Powell.
\newblock System {T} and the product of selection functions.
\newblock In Marc Bezem, editor, {\em Computer Science Logic, 25th
  International Workshop / 20th Annual Conference of the EACSL, {CSL} 2011},
  volume~12 of {\em LIPIcs}, pages 233--247. Schloss Dagstuhl - Leibniz-Zentrum
  f{\"{u}}r Informatik, 2011.
\newblock \href {https://doi.org/10.4230/LIPIcs.CSL.2011.233}
  {\path{doi:10.4230/LIPIcs.CSL.2011.233}}.

\bibitem[Esc15]{escardo15}
Mart{\'{\i}}n Escard{\'{o}}.
\newblock Constructive decidability of classical continuity.
\newblock {\em Math. Struct. Comput. Sci.}, 25(7):1578--1589, 2015.
\newblock \href {https://doi.org/10.1017/S096012951300042X}
  {\path{doi:10.1017/S096012951300042X}}.

\bibitem[Fel08]{Fel08}
Willliam Feller.
\newblock {\em An introduction to probability theory and its applications, vol
  2}.
\newblock John Wiley \& Sons, 2008.

\bibitem[FF87]{FF87}
Matthias Felleisen and Daniel~P. Friedman.
\newblock Control operators, the secd-machine, and the {\(\lambda\)}-calculus.
\newblock In Martin Wirsing, editor, {\em Formal Description of Programming
  Concepts - {III:} Proceedings of the {IFIP} {TC} 2/WG 2.2 Working Conference
  on Formal Description of Programming Concepts - III}, pages 193--222.
  North-Holland, 1987.

\bibitem[Fil94]{filinski:representing}
Andrzej Filinski.
\newblock Representing monads.
\newblock In {\em Proceedings of the 21st ACM SIGPLAN-SIGACT Symposium on
  Principles of Programming Languages}, POPL ’94, page 446–457. ACM, 1994.
\newblock \href {https://doi.org/10.1145/174675.178047}
  {\path{doi:10.1145/174675.178047}}.

\bibitem[FST19]{FongST19}
Brendan Fong, David~I. Spivak, and R{\'{e}}my Tuy{\'{e}}ras.
\newblock Backprop as functor: {A} compositional perspective on supervised
  learning.
\newblock In {\em 34th Annual {ACM/IEEE} Symposium on Logic in Computer
  Science, {LICS} 2019}, pages 1--13. {IEEE}, 2019.
\newblock \href {https://doi.org/10.1109/LICS.2019.8785665}
  {\path{doi:10.1109/LICS.2019.8785665}}.

\bibitem[GMR{\etalchar{+}}12]{Goodman:church}
Noah~D. Goodman, Vikash~K. Mansinghka, Daniel~M. Roy, Keith Bonawitz, and
  Joshua~B. Tenenbaum.
\newblock Church: a language for generative models.
\newblock {\em CoRR}, abs/1206.3255, 2012.
\newblock URL: \url{http://arxiv.org/abs/1206.3255}, \href
  {http://arxiv.org/abs/1206.3255} {\path{arXiv:1206.3255}}.

\bibitem[Hed15]{hedges15}
Jules Hedges.
\newblock The selection monad as a {CPS} transformation.
\newblock {\em CoRR}, abs/1503.06061, 2015.
\newblock URL: \url{http://arxiv.org/abs/1503.06061}, \href
  {http://arxiv.org/abs/1503.06061} {\path{arXiv:1503.06061}}.

\bibitem[HLPP07a]{HL07}
Martin Hyland, Paul~Blain Levy, Gordon Plotkin, and John Power.
\newblock Combining algebraic effects with continuations.
\newblock {\em Theoretical Computer Science}, 375(1-3):20--40, 2007.

\bibitem[HLPP07b]{HLPP07}
Martin Hyland, Paul~Blain Levy, Gordon~D. Plotkin, and John Power.
\newblock Combining algebraic effects with continuations.
\newblock {\em Theor. Comput. Sci.}, 375(1-3):20--40, 2007.
\newblock \href {https://doi.org/10.1016/j.tcs.2006.12.026}
  {\path{doi:10.1016/j.tcs.2006.12.026}}.

\bibitem[HO00]{ho:pcf}
J.~M.~E. Hyland and C.{-}H.~Luke Ong.
\newblock On full abstraction for {PCF:} {I}, {II}, and {III}.
\newblock {\em Inf. Comput.}, 163(2):285--408, 2000.
\newblock \href {https://doi.org/10.1006/inco.2000.2917}
  {\path{doi:10.1006/inco.2000.2917}}.

\bibitem[HPP06]{HPP06}
Martin Hyland, Gordon~D. Plotkin, and John Power.
\newblock Combining effects: Sum and tensor.
\newblock {\em Theor. Comput. Sci.}, 357(1-3):70--99, 2006.
\newblock \href {https://doi.org/10.1016/j.tcs.2006.03.013}
  {\path{doi:10.1016/j.tcs.2006.03.013}}.

\bibitem[HSV20]{HuotSV20}
Mathieu Huot, Sam Staton, and Matthijs V{\'{a}}k{\'{a}}r.
\newblock Correctness of automatic differentiation via diffeologies and
  categorical gluing.
\newblock In Jean Goubault{-}Larrecq and Barbara K{\"{o}}nig, editors, {\em
  Foundations of Software Science and Computation Structures - 23rd
  International Conference, {FOSSACS} 2020}, volume 12077 of {\em Lecture Notes
  in Computer Science}, pages 319--338. Springer, 2020.
\newblock \href {https://doi.org/10.1007/978-3-030-45231-5\_17}
  {\path{doi:10.1007/978-3-030-45231-5\_17}}.

\bibitem[Jac21]{Jac21}
Bart Jacobs.
\newblock From multisets over distributions to distributions over multisets.
\newblock In {\em 2021 36th Annual ACM/IEEE Symposium on Logic in Computer
  Science (LICS)}, pages 1--13. IEEE, 2021.

\bibitem[Kar17]{Karpathy}
Andrej Karpathy.
\newblock Software 2.0, 2017.
\newblock Blog post
  \href{https://medium.com/@karpathy/software-2-0-a64152b37c35}{here}.

\bibitem[Kel80]{K80}
Max Kelly.
\newblock A unified treatment of transfinite constructions for free algebras,
  free monoids, colimits, associated sheaves, and so on.
\newblock {\em Bulletin of the Australian Mathematical Society}, 22(1):1--83,
  1980.

\bibitem[Koc72]{kock1972strong}
Anders Kock.
\newblock Strong functors and monoidal monads.
\newblock {\em Archiv der Mathematik}, 23(1):113--120, 1972.

\bibitem[KP93]{K93}
Max Kelly and John Power.
\newblock Adjunctions whose counits are coequalizers, and presentations of
  finitary enriched monads.
\newblock {\em Journal of pure and applied algebra}, 89(1-2):163--179, 1993.

\bibitem[KP17]{KeimelP16}
Klaus Keimel and Gordon~D. Plotkin.
\newblock Mixed powerdomains for probability and nondeterminism.
\newblock {\em Log. Methods Comput. Sci.}, 13(1), 2017.
\newblock \href {https://doi.org/10.23638/LMCS-13(1:2)2017}
  {\path{doi:10.23638/LMCS-13(1:2)2017}}.

\bibitem[LPT03]{LevyPT03}
Paul~Blain Levy, John Power, and Hayo Thielecke.
\newblock Modelling environments in call-by-value programming languages.
\newblock {\em Inf. Comput.}, 185(2):182--210, 2003.
\newblock \href {https://doi.org/10.1016/S0890-5401(03)00088-9}
  {\path{doi:10.1016/S0890-5401(03)00088-9}}.

\bibitem[LS18]{LS18}
Aliaume Lopez and Alex Simpson.
\newblock Basic operational preorders for algebraic effects in general, and for
  combined probability and nondeterminism in particular.
\newblock In Dan~R. Ghica and Achim Jung, editors, {\em 27th {EACSL} Annual
  Conference on Computer Science Logic, {CSL} 2018}, volume 119 of {\em
  LIPIcs}, pages 29:1--29:17. Schloss Dagstuhl - Leibniz-Zentrum f{\"{u}}r
  Informatik, 2018.
\newblock \href {https://doi.org/10.4230/LIPIcs.CSL.2018.29}
  {\path{doi:10.4230/LIPIcs.CSL.2018.29}}.

\bibitem[McC63]{jmc:mtc}
John McCarthy.
\newblock A basis for a mathematical theory of computation.
\newblock In P.~Braffort and D.~Hirschberg, editors, {\em Computer Programming
  and Formal Systems}, volume~35 of {\em Studies in Logic and the Foundations
  of Mathematics}, pages 33 -- 70. Elsevier, 1963.
\newblock \href {https://doi.org/10.1016/S0049-237X(08)72018-4}
  {\path{doi:10.1016/S0049-237X(08)72018-4}}.

\bibitem[MGH{\etalchar{+}}98]{pddl}
D.~Mc{D}ermott, M.~Ghallab, A.~Howe, C.~Knoblock, A.~Ram, M.~Veloso, D.~Weld,
  and D.~Wilkins.
\newblock {PDDL} - the planning domain definition language.
\newblock Technical Report TR-98-003, Yale Center for Computational Vision and
  Control, 1998.

\bibitem[Mog89]{Moggi89}
Eugenio Moggi.
\newblock Computational lambda-calculus and monads.
\newblock In {\em Proceedings of the Fourth Annual Symposium on Logic in
  Computer Science {(LICS} '89)}, pages 14--23. {IEEE} Computer Society, 1989.
\newblock \href {https://doi.org/10.1109/LICS.1989.39155}
  {\path{doi:10.1109/LICS.1989.39155}}.

\bibitem[PP01]{PP01}
Gordon Plotkin and John Power.
\newblock Adequacy for algebraic effects.
\newblock In Furio Honsell and Marino Miculan, editors, {\em Foundations of
  Software Science and Computation Structures}, pages 1--24. Springer Berlin
  Heidelberg, 2001.

\bibitem[PP03]{PP03}
Gordon~D. Plotkin and John Power.
\newblock Algebraic operations and generic effects.
\newblock {\em Applied Categorical Structures}, 11(1):69--94, 2003.
\newblock \href {https://doi.org/10.1023/A:1023064908962}
  {\path{doi:10.1023/A:1023064908962}}.

\bibitem[PR95]{pumplun1995convexity}
Dieter Pumpl{\"u}n and Helmut R{\"o}hrl.
\newblock Convexity theories {IV}. {K}lein-{H}ilbert parts in convex modules.
\newblock {\em Applied Categorical Structures}, 3(2):173--200, 1995.

\bibitem[S{\etalchar{+}}10]{rddl}
Scott Sanner et~al.
\newblock Relational dynamic influence diagram language (rddl): Language
  description.
\newblock {\em Unpublished ms. Australian National University}, 32:27, 2010.

\bibitem[Sto49]{stone1949postulates}
Marshall~Harvey Stone.
\newblock Postulates for the barycentric calculus.
\newblock {\em Annali di Matematica Pura ed Applicata}, 29(1):25--30, 1949.

\bibitem[SW15]{SW15}
Ana Sokolova and Harald Woracek.
\newblock Congruences of convex algebras.
\newblock {\em Journal of Pure and Applied Algebra}, 219(8):3110--3148, 2015.

\bibitem[VFLF{\etalchar{+}}17]{Dyna-mapl}
Tim Vieira, Matthew Francis-Landau, Nathaniel~Wesley Filardo, Farzad Khorasani,
  and Jason Eisner.
\newblock Dyna: Toward a self-optimizing declarative language for machine
  learning applications.
\newblock In {\em Proceedings of the 1st ACM SIGPLAN International Workshop on
  Machine Learning and Programming Languages}, MAPL 2017, page 8–17. ACM,
  2017.
\newblock \href {https://doi.org/10.1145/3088525.3088562}
  {\path{doi:10.1145/3088525.3088562}}.

\bibitem[VW06]{VW06}
Daniele Varacca and Glynn Winskel.
\newblock Distributing probability over non-determinism.
\newblock {\em Mathematical Structures in Computer Science}, 16(1):87--113,
  2006.

\bibitem[XZH{\etalchar{+}}20]{XiaZHHMPZ20}
Li{-}yao Xia, Yannick Zakowski, Paul He, Chung{-}Kil Hur, Gregory Malecha,
  Benjamin~C. Pierce, and Steve Zdancewic.
\newblock Interaction trees: representing recursive and impure programs in coq.
\newblock {\em Proc. {ACM} Program. Lang.}, 4({POPL}):51:1--51:32, 2020.
\newblock \href {https://doi.org/10.1145/3371119} {\path{doi:10.1145/3371119}}.

\end{thebibliography}
 \bibliographystyle{alphaurl}

\end{document}